\newcommand{\Rade}{\mathcal{R}}
\newcommand{\ProbDist}{\mathcal{D}}
\newcommand{\HC}{\mathcal{H}}
\newcommand{\transpose}{\mathrm{T}}
\newcommand{\rank}{\operatorname{rank}}
\newcommand{\wv}{\vec{w}}
\newcommand{\loss}{\ell}
\newcommand{\Sorted}{\mathit{Sorted}}
\newcommand{\defeq}{\doteq}
\newcommand{\distributed}{\thicksim}
\newcommand{\Prob}{\mathbb{P}}
\newcommand{\TVD}{\operatorname{TVD}}
\DeclareMathOperator*{\Expect}{\mathbb{E}}
\newcommand{\cyrus}[1]{{}}
\newcommand{\justin}[1]{{}}
\renewcommand{\problemname}{\textsc{RAU}\xspace}
\newcommand{\algoname}{\textsc{RRA}\xspace}
\newcommand{\LP}{\textsc{LP}\xspace}
\renewcommand{\Box}{\mathsmaller{\square}}
\newcommand{\COI}{\textsc{CoI}}
\newcommand{\AMM}{A^{\mathrm{MM}}}
\newcommand{\AMMC}{\tilde{A}^{\mathrm{MM}}}
\newcommand{\ssep}{\,\mid\,}
\newcommand{\R}{\mathbb{R}}
\renewcommand{\cal}[1]{\mathcal{#1}}
\DeclareMathOperator*{\argmax}{arg\,max}
\DeclareMathOperator*{\argmin}{arg\,min}
\DeclareMathOperator*{\arginf}{arg\,inf}
\DeclareMathOperator*{\USW}{USW}
\DeclareMathOperator*{\ROUND}{\textsc{Round}}
\DeclarePairedDelimiter\abs{\lvert}{\rvert}%
\DeclarePairedDelimiter\norm{\lVert}{\rVert}
\let\oldabs\abs
\def\abs{\@ifstar{\oldabs}{\oldabs*}}
\let\oldnorm\norm
\def\norm{\@ifstar{\oldnorm}{\oldnorm*}}
\newcommand{\SBox}{\cal S_{\mathsmaller{\square}}}
\providecommand{\LandauO}{\bm{\mathrm{O}}} %\usefont{OMS}{cmsy}{m}{n}O
\providecommand{\LandauOmega}{\bm{\Upomega}}
\setlist[enumerate]{nosep,itemsep=0pt}
\setlist{nosep,wide,labelwidth=0pt,labelindent=0pt,labelsep=3pt,topsep=-2pt}
\crefname{algorithm}{alg.}{algs.}
\Crefname{algorithm}{Algorithm}{Algorithms}
\crefname{appendix}{appx.}{appcs.}
\Crefname{appendix}{Appendix}{Appendices}
\crefname{corollary}{coro.}{coros.}
\Crefname{corollary}{Corollary}{Corollaries}
\crefname{conjecture}{conjecture}{conjectures}
\Crefname{conjecture}{Conjecture}{Conjectures}
\crefname{definition}{def.}{defs.}
\Crefname{definition}{Definition}{Definition}
\crefname{figure}{fig.}{figs.}
\Crefname{figure}{Figure}{Figures}
\crefname{lemma}{lemma}{lemmas}
\Crefname{lemma}{Lemma}{Lemmas}
\crefname{proposition}{prop.}{props.}
\Crefname{proposition}{Proposition}{Propositions}
\Crefname{section}{Section}{Sections}
\crefname{section}{sect.}{sects.}
\crefname{subsection}{sect.}{sects.}
\Crefname{subsection}{Section}{Sections}
\crefname{subsubsection}{sect.}{sects.}
\Crefname{subsubsection}{Section}{Sections}
\crefname{table}{table}{tables}
\Crefname{table}{Table}{Tables}
\crefname{theorem}{thm.}{thms.}
\Crefname{theorem}{Theorem}{Theorems}
\crefname{table}{tbl.}{tbls.}
\crefname{section}{\S}{\S}
\crefname{subsection}{\S}{\S}
\let\oldCref\Cref
\renewcommand{\Cref}{\cref}
\let\oldCref\Cref
\renewcommand{\cref}{\Cref} %For consistency
\renewcommand{\USW}{\mathrm{W}}
\begin{document}

\title{Into the Unknown: \\ Assigning Reviewers to Papers  with Uncertain Affinities}

\author{Cyrus Cousins\orcidID{0000-0002-1691-0282}, %\and
Justin Payan\thanks{Corresponding Author. All authors contributed equally to this work.}\orcidID{0000-0001-7601-3500}, \\ and %\and \\
%Justin Payan\textsuperscript{$\star$}\orcidID{0000-0001-7601-3500} \and
Yair Zick\orcidID{0000-0002-0635-6230} \\
%\tiny
%All authors contributed equally to this work \\
%\textsuperscript{$\star$}Corresponding Author \\
}
\authorrunning{C. Cousins et al.}
% First names are abbreviated in the running head.
% If there are more than two authors, 'et al.' is used.
%
\institute{University of Massachusetts Amherst, Amherst MA 01002, USA \\
\email{\{cbcousins, jpayan, yzick\}@umass.edu}}
%\email{\{cbcousins, jpayan, yzick\}@umass.edu \quad \textsuperscript{$\star$}Corresponding Author}}

\maketitle
\iffalse
{
\centering
\footnotesize
All authors contributed equally to this work \\
\textsuperscript{$\star$}Corresponding Author \\
}
\fi

\begin{abstract}
%\fontsize{9pt}{10pt}
A successful peer review process requires that qualified and interested reviewers are assigned to each paper. 
Most automated reviewer assignment approaches estimate a real-valued \emph{affinity score} for each paper-reviewer pair that acts as a proxy for the quality of the match,
%; conferences
and then assign reviewers to maximize the sum of affinity scores. %these values. 
Most affinity score
%computation
estimation
methods are inherently noisy: reviewers can only bid on a small number of papers, and textual similarity models %(used by systems like Microsoft CMT and OpenReview)
and subject-area matching
are inherently noisy estimators.
Current paper assignment systems are not designed to rigorously handle noise in the peer-review matching market.   
In this work, we assume paper-reviewer affinity scores are located in or near a high-probability region called an \emph{uncertainty set}. 
We
%then robustly
maximize the worst-case sum of scores for a reviewer assignment over the uncertainty set. 
We demonstrate how to robustly maximize the sum of scores across
%multiple different
various
classes of uncertainty sets, avoiding potentially serious mistakes in assignment. 
Our general approach can be used to integrate a large variety of paper-reviewer affinity models into reviewer assignment, opening the door to a much more robust peer review process.

\keywords{Resource Allocation \and  Peer Review \and Maximin Optimization}
% We show the standard linear programming formulation suffices under two simple noise models, but in the general case we apply a new algorithm that robustly maximizes the sum of scores over any convex high-probability region. 
% Although conference organizers have used these scores to support reviewer assignment for roughly a decade, no systematic study has demonstrated their accuracy against any ground truth fitness, expert evaluation, or quality of the end reviews. Without access to ground truth, we cannot verify the accuracy of the scores.
% We hope this paper spurs further integration of probabilistic models into reviewer assignment -- for instance, collaborative filtering of reviewer bids, or training affinity score models directly on desirable attributes of past reviews. 
% In the process, we introduce a general framework which encompasses many recent formulations of reviewer assignment used in practice. Approaching reviewer assignment using this framework allows us to systematically study the space of design decisions and make recommendations to improve the reviewer assignment process, as well as properly handle the inherent uncertainty in paper-reviewer affinity.
\end{abstract}
\section{Introduction}
\label{sec:intro}
Peer review is a fundamental institution for evaluating scientific knowledge. 
% Initially practiced by small, tightly-knit groups of expert scientists, it served as a cultural technology that encouraged verifiability and reproducibility. 
Over the 20th century, the scientific profession has grown significantly, and the institution of peer review has struggled with the increased scale. 
% We are far from the time in which the world's foremost experts spent multiple weeks or months painstakingly poring over each freshly-written manuscript. 
Modern computer science conferences receive thousands of submissions, matched to committees of similar size.  
Due to sheer scale, program chairs rely on automated reviewer assignment platforms, such as Microsoft CMT, OpenReview, and EasyChair, that utilize complex matching algorithms.
Reviewing platforms generally implement a two-stage process to largely automate reviewer assignments
\cite{leyton2022matching,charlin2013toronto,openreview_workflow}. 
First, the system estimates the ``fit'' between each paper-reviewer pair, called the \emph{paper-reviewer affinity score}. 
Next, the system assigns reviewers via constrained optimization, maximizing a function of the computed affinity scores (usually the sum of scores, or \emph{utilitarian welfare} \cite{charlin2013toronto}). 
%Multiple prior works have identified
Prior work identifies the reviewer assignment process as an important target for improving the overall quality of peer review in computer science \cite{rogers2020can,shah2022challenges}. 
Efforts are underway to address shortcomings in reviewer assignment \cite{stelmakh2019peerreview4all, kobren2019paper, payan2022order, jecmen2020random, jecmen2022dataset, shah2019principled, acl_score, leyton2022matching, meir2021market, rozencweig2023mitigating}, but none systematically addresses the fundamental issue of \emph{uncertainty} in reviewer assignment.
% The reviewer assignment process faces many challenges, discussed by several recent works (e.g. \cite{rogers2020can,shah2022challenges}).

Uncertainty in affinity score computation is a major source of error in assignment \cite{leyton2022matching}. 
When we assign a reviewer to a paper, we are interested in ensuring the \emph{quality of the future review}, which is fundamentally noisy. 
Because of this unpredictability, conferences typically construct affinity scores that reflect reviewer expertise and interest via four main sources of information. These sources include
\begin{inparaenum}[(a)]
\item \emph{subject-area matching} (SAM) scores or keyword-based matching, where reviewer-provided areas of expertise are compared against keywords submitted by paper authors,
\item \emph{textual similarity scores}, often implemented by the well-known Toronto Paper Matching System (TPMS) \cite{charlin2013toronto} or ACL scores \cite{acl_score},
\item \emph{bidding}, where reviewers express their explicit ability and desire to review papers, and finally
\item \emph{recommendations}, through which program committee members may suggest reviewers for papers.
\end{inparaenum}
% TPMS, is computed using a learned model trained on a mix of historical information, linguistic features of papers, and manually-entered data such as self-reported expertise and paper keywords . 
% Conference organizers often incorporate other relevant information into fitness scores as well. 
% As an example of how this information is combined, affinity scores are currently implemented in OpenReview as a linear combination of the TPMS scores and the reviewer bids.\footnote{Source: Personal correspondence with the OpenReview development team.}
The overall affinity scores are typically computed as a linear combination of these four scores.\footnote{CMT implements their affinity scores this way, which can be seen from \url{https://cmt3.research.microsoft.com/docs/help/chair/auto-assign-reviewers.html}, as does OpenReview (source: personal correspondence).}
Recent conferences such as AAAI 2021 took a similar approach, linearly combining TPMS scores, ACL scores, and SAM scores, and raising the sum to some power based on the reviewer bids \cite{leyton2022matching}. 

Each of these common affinity score components can be missing or inaccurate. 
State-of-the-art document similarity measures disagree with expert judgments up to $43\%$ of the time \cite{stelmakh2023gold}, and nearly 40\% of TPMS scores were completely missing in AAAI 2021 \cite{leyton2022matching}.\footnote{Although the AAAI 2021 organizers do not explain
%the reason
why so many TPMS scores are missing,
%the
%such missing scores could %stem from
missing scores occur for several reasons, including 
reviewers opting out of the system or providing insufficient or empty publication records.%
%insufficient publication records, publication records that are not provided, or reviewers opting out of the system.
} 
Between $5\%$ and $15\%$ of papers in major AI conferences receive fewer than three positive bids, but 
%we have
there is
evidence that many missing bids would be positive if collected \cite{pmlr-v124-fiez20a, meir2021market, rozencweig2023mitigating}. 
Although no systematic study has been performed on keyword-based similarity scores, keyword matching accuracy depends on authors and reviewers using consistent terminology, and subtleties are invariably lost in the process. 
Even reviewers directly suggested by knowledgeable editors or the paper authors have been shown to perform surprisingly poorly on average, as measured by third-party annotators via the Review Quality Index \cite{schroter2006differences, van1999development}, showing that recommendations can be noisy as well.

To our knowledge, every reviewer assignment system still relies on affinity score estimates, but does not directly account for the fact that these scores are
% actually just very
noisy estimates of assignment quality. 
Our work
%offers a
takes the first step towards addressing this fundamental gap. We investigate a generalized notion of affinity score, where organizers can implement affinity using any measure of fit between reviewers and papers. These measures may or may not be fully observable; for example, organizers may decide to estimate unknown bids as part of affinity computation. Our approach also enables even more advanced affinity measures, such as predictors of reviewer performance based on historical data.
%

% [TODO: Be less vague about what affinity scores are: it's up to the organizers to decide what they want to optimize, we could choose ``what reviewers would bid,'' but we could also choose ``quality of review'' as measured by objective third parties (e.g., area chairs, awards won), and these quantities may or may not be observable.]

\subsection{Our Contributions}

To properly account for uncertainty, we first construct a region called an \emph{uncertainty set} which is close to the true affinities with high probability, then maximize the worst-case welfare over the uncertainty set. 
Uncertainty sets are a very general construct that allows conference organizers to introduce their own uncertainty models using available data and reasonable assumptions. 
Uncertainty sets generalize probability distributions --- while it is possible to construct an uncertainty set from a probability distribution, non-Bayesian models will frequently \emph{not} specify full probability distributions.
% [todo includes frequentist approoaches to uncertainty, as well as adversarial analysis, with applications to privacy and incentive-compatibility?]
In these cases, worst-case guarantees over an uncertainty set are quite natural. 
We call the problem of maximizing the worst-case welfare over an uncertainty set Reviewer Assignment under Uncertainty, or \problemname{}.

We provide numerous examples of uncertainty sets throughout the paper, starting with axis-aligned, hyperrectangular uncertainty sets in \Cref{subsec:box} and spherical and ellipsoidal uncertainty sets in \Cref{subsec:ellipsoid}. 
\oldCref{thm:inductiveellipsoidal-tail-bounds,thm:transductiveellipsoidal-tail-bounds} offer detailed end-to-end examples of how conference organizers
%could
can
construct %tail bounds for the error of an
ellipsoidal uncertainty sets %based using an 
%in terms of
using
bounds on the \emph{square error} of an \emph{affinity score estimator} from historic data or sampled bids (a %non-Bayesian model
frequentist approach that precludes optimizing for expected welfare). %, resulting in an ellipsoidal uncertainty set. 
We also
%briefly discuss
present a \emph{calculus of uncertainty sets}, %allowing for the
enabling
construction of complex and highly informative uncertainty sets from 
%simpler uncertainty sets
simple components
(\Cref{subsec:compositional}). Our results are agnostic to the affinity model;
%conference
organizers can define
%arbitrary affinity score models
affinity scores arbitrarily,
so long as
%the affinity
they
can be estimated for all paper-reviewer pairs, and %can be
sampled for some pairs.

% We present an algorithmic framework that reliably deals with uncertainty in the computed affinity scores, and some plausible affinity models that reflect the uncertainty in the reviewing process. 
% Given a model of affinity scores (which can be defined by the conference organizers), we identify a high-probability region for the true underlying scores. We then robustly optimize the objective function by alternately finding the worst-case over the high-probability region and shifting the allocation in the direction of steepest ascent. Robustly maximizing the worst-case over a high-probability region allows us to guard against highly suboptimal outcomes when we do not have a probabilistic model of affinity scores or even an unbiased estimate of the mean affinity scores.

%Although
We show that
\problemname{} is NP-hard over convex uncertainty sets (\Cref{thm:hardness}), and present an approximation algorithm called Robust Reviewer Assignment (\algoname{}), which applies to any convex uncertainty set where worst-case welfare can be
efficiently computed %in polynomial time 
(\Cref{sec:approx_algo}). 
\algoname{} applies \emph{randomized rounding} methods %to the continuous allocation solution of
to a \emph{convex relaxation} of the discrete RAU problem, and we analyze both the \emph{optimization error} %due %both to \emph{rounding} and the 
%to \emph{approximate convex optimization}, \emph{rounding}
due to convex programming methods and randomized rounding, and the \emph{maximin error} due to operating with an uncertainty set, rather than known affinity scores (\Cref{subsec:gaps}).
%The approximation error of \algoname{} depends on the $\cal L_1$ diameter of the uncertainty set.
We  give bounds on the true welfare relative to the maximin welfare solution of \problemname{} (\Cref{prop:sandwich}), and explore the integrality gap of \problemname{} (\Cref{prop:integ_gap_lb}).

% We relegate all proofs to the supplementary material.\footnote{The full version of the paper, including supplementary material, can be found at \url{https://arxiv.org/abs/2301.10816}.}

We
%then illustrate
empirically demonstrate
the robustness of our approach relative to commonly-used baselines on publicly available data from five recent iterations of ICLR (\Cref{subsec:comparisontobaseline}). 
In addition, we explore synthetic settings where \algoname{} avoids negative consequences faced by the most commonly used baseline (\Cref{subsec:worstcase}). 
We hope this work draws attention to the ad-hoc nature of affinity scores, spurring improvements to their computation and further study of their robustness. All proofs in this paper are contained in the appendix.

\subsection{Related Work}
% \label{sec:related_work} 

% Many automated peer review systems maximize a sum of pointwise, ad-hoc affinity scores. 
% The Toronto Paper Matching System constructs affinity scores using either LDA topic-modeling or a predictive model based on bag-of-words text representations~\cite{charlin2013toronto}. 
% The organizers of some recent years of AAAI, IJCAI, and ICML compute affinity scores as a fixed function of keyword matching, reviewer bids, and TPMS and ACL document similarity scores \cite{leyton2022matching}. 
% Some reviewer assignment mechanisms consider fairness as well; for example, we can approximately maximize affinity scores subject to a fairness constraint \cite{payan2022order}, or maximize the minimum score assigned to any given paper or group of papers~\cite{kobren2019paper,stelmakh2019peerreview4all,aziz2023group}. 

Current automated peer review systems compute pointwise, ad-hoc affinity scores and maximize the total sum~\cite{charlin2013toronto, leyton2022matching, payan2022order}, the minimum value for papers~\cite{kobren2019paper, stelmakh2019peerreview4all}, or the minimum value for groups of papers \cite{aziz2023group}.

Two recent studies start from our %same
premise that %the
commonly-used affinity scores may not be as
%trustworthy
accurate
as they seem. Data is now available that directly compares elements of affinity scores to expert judgments of reviewer fit, and the authors of this dataset show that existing similarity score computation methods make many errors \cite{stelmakh2023gold}. A %very
recent study leverages randomness in assignment algorithms to directly judge assignment decisions, showing that higher weight should be placed on text similarity metrics over bids \cite{saveski2023counterfactual}. These results encourage further improvements to affinity score computation  (especially text similarity), but also justify smarter utilization of the noisy sources of information
%we currently have
that are
available.

% \citet{stelmakh2023gold} find that state-of-the-art document similarity models can have error rates up to $43\%$ for the hardest cases. 

% Despite this, the organizers of AAAI'22 report that higher affinity scores loosely correspond with higher probability of positive bids and higher reviewer confidence \cite{leyton2022matching}. 

 %Cut for SAGT
Other works 
% have already begun improving the quantity and quality of affinity score components. 
% These works 
use modern NLP techniques to improve document-based similarity scores \cite{acl_score},  encourage reviewers to bid on underbid papers \cite{pmlr-v124-fiez20a,meir2021market,rozencweig2023mitigating}, or disincentivize strategic bidding behavior \cite{jecmen2020random,jecmen2022tradeoffs,jecmen2022dataset}. 
% Though these approaches have not been empirically shown to improve downstream review quality, bids and document similarity scores are widely accepted as a high-quality source of information and these approaches all likely improve the accuracy of affinity scores. 
Although these approaches reduce uncertainty, they do not directly %address
treat uncertainty in affinity scores. % computation.
%

% Robustness has very recently become a discussion topic in resource allocation problems, in the form of robust rent division \cite{peters2022robust,  critch2015robustrentalharmony}.

Our robust optimization algorithm is based on an iterative supergradient-ascent approach; similar techniques have been applied to supervised learning with unknown labels \cite{mazzetto2021adversarial} and fair %(egalitarian) %optimization
learning
with unknown group identities \cite{dong2022decentering}.
% Their solution concept is envy-freeness, which is important in reviewer assignment \cite{payan2022order}, but secondary to the primary welfare objective. 
% The notion of distortion in social choice is similar in that it assumes exact cardinal utilities are unknown \cite{anshelevichdistortion}, but that setting focuses on the loss in objective score when ordinal utilities are used instead of cardinal utilities. 
% We cannot assume the existence of exact ordinal utilities in our setting either, making algorithms with low distortion not useful.

\section{Reviewer Assignment under Uncertainty}
\label{sec:problem}
Assume we have a set $P$ of $n$ papers submitted to a peer-review venue,\footnote{A peer review venue is any entity which assigns reviewers to papers for the purposes of peer review. The prototypical venue is a peer-reviewed conference,
%but other entities like the ACL Rolling Review \cite{acl_rolling_review} meet this definition as well.
but our approach applies to similar venues such as ACL Rolling Review \cite{acl_rolling_review}.}
and a set $R$ of $m$ reviewers. 
The key input to the reviewer assignment problem is a paper-reviewer \emph{affinity score matrix} $S^\ast \in [0,1]^{n\times m}$ (we will also refer to $[0,1]^{n\times m}$ as the \emph{unit hypercube}), where $S^\ast_{p, r}\in [0,1]$ is the affinity  of paper $p \in P$ to reviewer $r \in R$. 
This matrix $S^\ast$ encodes the \emph{true} affinities, or the value provided to the venue by assigning each reviewer to each paper. At this point, it is natural to ask,
``How can one know the value a paper-reviewer assignment will provide ahead of time?'' and ``What do we mean by value provided?''
These questions directly motivate our work; reviewer assignment is challenging because affinity depends on poorly-defined preferences over uncertain future outcomes.

Despite this fundamental challenge, all prior work assumes direct access to the true affinities \cite{kobren2019paper, stelmakh2019peerreview4all, charlin2013toronto, payan2022order,leyton2022matching,aziz2023group}. We
relax this assumption, instead assuming only
%instead assume %access to an \emph{estimate}
\emph{partial knowledge} of $S^\ast$, as represented by an \emph{uncertainty set} $\mathcal{S} \subseteq [0, 1]^{n \times m}$ that contains a point near $S^\ast$ with high probability. 

\begin{definition}[$(\delta, \gamma)$ Uncertainty Set]
\label{defn:dg_uncert}
%Consider the true, unknown affinity score matrix $S^\ast \in [0, 1]^{n \times m}$.
Suppose $S^\ast \in [0, 1]^{n \times m}$ is the
%(unknown)
true affinity score matrix.
%$\cal S \subseteq [0,1]^{n \times m}$ is a $(\delta, \gamma)$ uncertainty set if it
A $(\delta, \gamma)$ uncertainty set %is any set such that
obeys $\Prob \bigl( \smash{\inf\limits_{S \in \cal{S}}} \norm{S - S^\ast}_1 > \gamma \bigr) < \delta$, i.e., it probably contains
%a point
some $S$
that is 
$\gamma$-close to $S^\ast$. % w.h.p.
%with probability at least $1 - \delta$, there exists some $S \in \cal S$ such that $\norm{S - S^\ast}_1 \leq \gamma$,
%it holds that $\inf_{S \in \cal{S}} \norm{S - S^\ast}_1 \leq \gamma$.
\end{definition}

Once we compute a $(\delta, \gamma)$ uncertainty set $\cal S$, our goal is to find an \emph{assignment matrix} $A$ that assigns reviewers to papers with good worst-case guarantees on a score function defined by $A$ and $S^\ast$, while satisfying some hard constraints.  Assignments are deterministic, i.e. lie on vertices of the unit hypercube, or $\cal A_0 \defeq \{0, 1\}^{n \times m}$.
We compute an assignment of reviewers to papers $A \in \cal A_0$, where $A_{p,r} = 1$ if and only if $r$ is assigned to review the paper $p$.  
Hard constraints usually include the number of reviewers required per paper, upper bounds on reviewer loads, and conflicts of interest.
% We can represent conflicts of interest as a bipartite graph $G$ with $N$ on one side and $R$ on the other. Edges are drawn between any $p \in N$ and $r \in R$ such that there is no conflict of interest for assigning $r$ to $p$. Let $\mathcal{A}_G$ represent the set of all assignments such that $A_{p,r} = 0$ if there is no $(p, r)$ edge in $G$.
Conflicts of interest consist of a set $\mathcal{C} \subseteq (P \times R)$, where $(p, r) \in \mathcal{C}$ implies that $r$ cannot be assigned to $p$. 
$\mathcal{A}_{\COI} \subseteq \cal A_0$ denotes the set of all assignments respecting conflicts in $\cal C$. If $A \in \mathcal{A}_{\COI}$ and $(p,r) \in \cal C$, then $A_{p,r}=0$.

Suppose that each paper $p$ requires exactly $k_p$ reviewers and each reviewer $r$ must be assigned no more than $u_r$ papers. Then 
\[
\mathcal{A}_P \doteq \Big\{ \! A \! \in \! \mathcal{A} _{0}\! \Bigm| \forall p \! \in \! P\!: \raisebox{0.5ex}{\scalebox{0.95}{\ensuremath{\displaystyle\mathsmaller{\sum\limits_{r \in R}}}}}A_{p,r} = k_p\Big\} %\!\enspace,
\ \& \ %\text{and}
\mathcal{A}_R  \doteq \Big\{ \! A \! \in \! \mathcal{A}_{0} \!  \Bigm| \forall r \! \in \! R \!: \raisebox{0.5ex}{\scalebox{0.95}{\ensuremath{\displaystyle\mathsmaller{\sum\limits_{p \in P}}}}} A_{p,r} \leq u_r \Big\}
\]
define the \emph{paper coverage requirements} and \emph{reviewer load bounds}, respectively.
We occasionally refer to the \emph{total review load} $K \doteq \sum_{p \in P} k_p$.
% We define paper coverage requirements as $
% \mathcal{A}_P \doteq \Big\{A \in \cal A_0 \Bigm| \forall p: \sum_{r \in R} A_{p,r} = k_p\Big\}$  and reviewer load bounds as
% $\mathcal{A}_R  \doteq \Big\{A \in \cal A_0 \Bigm| \forall r: \sum_{p \in P} A_{p,r} \leq u_r \Big\}$.
%
% Then $\mathcal{A}_P \subseteq \{0, 1\}^{n \times m}$ is the set 
% \begin{align}
% \mathcal{A}_P \doteq &\Big\{A \in \{0, 1\}^{n \times m} \Bigm| \forall p: \sum_{r \in R} A_{p,r} = k_p\Big\} \label{defn:pap_cov}\\ 
% \mathcal{A}_R  \doteq &\Big\{A \in \{0, 1\}^{n \times m} \Bigm| \forall r:  \sum_{p \in P} A_{p,r} \leq u_r \Big\} \label{defn:rev_load} \enspace.
% \end{align}
%
% and $\subseteq \{0, 1\}^{n \times m}$ is the set
%
% \begin{align*}
%
% \end{align*}
%
Taken together, the hard constraints give us a set of permissible assignments $\mathcal{A} \doteq \mathcal{A}_{\COI} \cap \mathcal{A}_P \cap \mathcal{A}_R$.

% We also have soft constraints, represented as a function on the assignment that penalizes undesirable outcomes. An example of a soft constraint is a term for penalizing assigning 2 close coauthors to review the same paper. Soft constraints are subtracted from the objective function as penalty terms. A more detailed discussion of soft constraints is included in Section~\ref{sec:constraints}.

% \cyrus{Why not just $C_{p,r} \implies v_{p, r } = 0$? So isn't this just a source of constraints?}
% \justin{Yes, but then you could still assign a reviewer $r$ to paper $p$ with $C_{p,r} = 1$. We don't want that to be possible ever. I guess $C_{p,r} \implies v_{p, r } = -\infty$ would be fine though}

% We assume there exist paper-reviewer fitness scores. Let $\mathcal{S} \subseteq [0, 1]^{n \times r}$ represent the set of all possible score matrices. We then assume there is some $S \in \mathcal{S}$ representing the ground-truth fitness score matrix, with $S_{p,r}$ indicating the score for reviewer $r$ reviewing paper $p$. $S$ will not always be known, and instead we will often have $\tilde{\mathcal{S}} \subseteq \mathcal{S}$ encoding a region in which $S$ is likely to be located with high probability. 

We aim to compute assignments $A$ that maximize a score-based objective function $\USW(A, S)$, while meeting all hard constraints. Throughout the paper we assume $\USW$ is utilitarian social welfare
$\USW(A, S) \doteq \frac{1}{n}\sum_{p \in P}\sum_{r \in R} A_{p,r}S_{p,r}$.
% \[
% \USW(A, S) \doteq \frac{1}{n}\sum_{p \in P}\sum_{r \in R} A_{p,r}S_{p,r} \enspace.
% \]

% but one can also maximize egalitarian social welfare ($\ESW$), the minimum paper utility under the assignment $A$. 
% % $$W(A, S) = \min_{p \in P}\sum_{r \in R} A_{p,r}S_{p,r}.$$ 
% Nash welfare or functions based on combinatorial paper utilities can be used as well. 
% Objective functions may also include additional soft constraints on assignments (e.g., rewarding assignments that assign a diverse set of reviewers to each paper). 
% We focus on $\USW$ in this work.

%Overall,
In summary, the problem of Reviewer Assignment under Uncertainty (\problemname{}) takes as input a set of papers $P$, reviewers $R$, assignment constraints $\cal A$, and an uncertainty set $\cal S$. % (the more general problem also takes an objective function $\USW$ [todo] as input).
Although ideally we would compute $A^\ast \defeq \argmax_{A \in \cal A} \USW(A, S^\ast)$, we cannot since we do not know $S^\ast$. Therefore, our goal is to find or approximate
\begin{align}
\label{def:AMM}
    \AMM \doteq \argmax_{A \in \cal A }
    \inf_{S \in \cal S}
    \USW(A, S) \enspace.
\end{align}
%
%This objective
%Essentially,
%Note that \eqref{def:AMM}
Here $\AMM$ maximizes welfare for %the
adversarial (worst-case)
affinity scores
 over the uncertainty set, which ensures robustness
 %to noise
 with high probability. 
% Worst-case and high-probability analysis is fairly common in algorithmic design,
% (particularly in high-stakes economic and financial applications), 
% and the maximin/maximin criterion has been a staple of robust AI and decision theory for decades (e.g., \citet{russell2010artificial}). 
Robust guarantees
%help us avoid
prevent
catastrophic failures in exchange for some average case penalty. 
% To illustrate this point, we explore a simulated scenario in \Cref{sec:experiments} that shows that modern venues may lose a significant amount of welfare unless worst-case robustness is considered.
% %
% Worst-case optimization over uncertainty sets also applies in more settings than average-case optimization, such as when we have access to high-probability tail bounds but not a fully-specified probability distribution.
% For example, conference organizers might have access to high-probability bounds on the expected squared errors of multiple estimators (TPMS, bids, etc) of the ``true affinity scores'' $A^\ast$. Full probability distributions over $A^\ast$ or even unbiased estimates of $\mathbbm{E}[A^\ast]$ may not be achievable without making unrealistic modeling assumptions. However, the bounds $\norm{A_i - A^\ast}_F \leq \epsilon_i$ imply spherical uncertainty sets for each estimator $i$, and we can take the final uncertainty set to be the intersection of all the spheres. Another example occurs when we use matrix factorization with generalization guarantees \cite{cai2016matrix, fang2018max, lee2010practical} to estimate missing bids, which generates ellipsoidal uncertainty sets. Many uncertainty set models are possible; this work begins the exploration of this rich framework.

We conclude this section with a simple observation relating the worst-case welfare over $\cal S$ for any assignment $A$ to the true welfare of $A$ with $S^\ast$. 
\begin{restatable}[Relating True and Worst-Case Welfare over $\cal S$]{proposition}{propsandwich}
\label{prop:sandwich}
Suppose $\cal S$ is a $(\delta, \gamma)$ uncertainty set with $\norm{S-S'}_1 \leq L$ for all $S, S' \in \cal S$, and the true affinity score matrix is labeled $S^\ast$. Consider any assignment $\begin{aligned} A \in \cal A\end{aligned}$\cyrus{Doesn't this hold for any assignment $A$?}. Then with probability at least $1-\delta$,  \begin{align*} \USW(A, S^\ast) - \mathsmaller{\frac{L + \gamma}{n}} \leq \smash{\inf_{S \in \cal S}} \USW(A, S) \leq \USW(A, S^\ast)+
\mathsmaller{\frac{\gamma}{n}} \enspace.\end{align*}%
\cyrus{Could also give a sharper radius formulation}%
\cyrus{Diameter computation can be very hard (computationally), give form in terms of $\max\max$? I.e.,
\[
\USW(A^\ast, S^\ast) - \gamma \leq \smash{\max_{S \in \cal S}} \USW(A^\ast, S) \leq \inf_{S \in \cal S} \USW(A^\ast, S) + L \leq \USW(A^\ast, S^\ast)+L+\gamma
\]}%
\cyrus{Do we want to compare true welfare of robust solution to true welfare of true solution here? And consider rounding and other details later?}%
\end{restatable}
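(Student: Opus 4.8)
The plan is to isolate a single Lipschitz estimate and then apply it twice: once against the true matrix $S^\ast$, and once against the diameter bound $L$. The only structural fact about welfare I would use is that for any fixed assignment $A \in \cal A_0$ and any two score matrices $S, S' \in [0,1]^{n \times m}$, because each $A_{p,r} \in \{0,1\}$,
\[
\abs{\USW(A,S) - \USW(A,S')} = \tfrac{1}{n}\Bigl|\textstyle\sum_{p,r} A_{p,r}(S_{p,r}-S'_{p,r})\Bigr| \leq \tfrac{1}{n}\textstyle\sum_{p,r}\abs{S_{p,r}-S'_{p,r}} = \tfrac{1}{n}\norm{S-S'}_1 ,
\]
so $\USW(A,\cdot)$ is $\tfrac{1}{n}$-Lipschitz in the scores with respect to the $L_1$ norm.

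Next I would condition on the defining event of the uncertainty set. By \Cref{defn:dg_uncert}, with probability at least $1-\delta$ we have $\inf_{S \in \cal S}\norm{S-S^\ast}_1 \leq \gamma$; on this event fix a witness $\hat S \in \cal S$ with $\norm{\hat S - S^\ast}_1 \leq \gamma$. The \emph{upper bound} is then immediate and needs no attainment of the infimum: since $\hat S \in \cal S$ we have $\inf_{S\in\cal S}\USW(A,S) \leq \USW(A,\hat S)$, and the Lipschitz estimate gives $\USW(A,\hat S) \leq \USW(A,S^\ast) + \tfrac{\gamma}{n}$.

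For the \emph{lower bound} the one subtlety is that $S^\ast$ need not lie in $\cal S$, so the diameter bound cannot be applied to $S^\ast$ directly; instead I route through $\hat S$. For \emph{every} $S \in \cal S$, both $S$ and $\hat S$ lie in $\cal S$, so $\norm{S - \hat S}_1 \leq L$, and two applications of the Lipschitz estimate yield
\[
\USW(A,S) \geq \USW(A,\hat S) - \tfrac{1}{n}\norm{S - \hat S}_1 \geq \USW(A,\hat S) - \tfrac{L}{n} \geq \USW(A,S^\ast) - \tfrac{L+\gamma}{n} .
\]
Since this holds uniformly over $S \in \cal S$, taking the infimum preserves the bound and gives exactly $\inf_{S\in\cal S}\USW(A,S) \geq \USW(A,S^\ast) - \tfrac{L+\gamma}{n}$. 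Notice the whole argument is valid for any $A \in \cal A$ (indeed any $A \in \cal A_0$), which settles the margin query, and that neither bound requires the infimum over $\cal S$ to be attained — the only genuinely delicate point is the existence of the $\gamma$-witness $\hat S$, which I would handle either by assuming $\cal S$ closed or, in general, by taking an $\varepsilon$-approximate witness with $\norm{\hat S - S^\ast}_1 \leq \gamma + \varepsilon$ and letting $\varepsilon \to 0$.
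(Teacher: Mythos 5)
Your proof is correct and follows essentially the same route as the paper's: both arguments fix a $\gamma$-witness $\hat S \in \cal S$ on the high-probability event, use the $\tfrac{1}{n}$-Lipschitz property of $\USW(A,\cdot)$ in the $\mathcal{L}_1$ norm (valid because $A$ is binary), and chain the witness bound with the diameter bound $L$ for the lower inequality. Your only departures are minor technical refinements --- bounding uniformly over $S \in \cal S$ rather than invoking an $\argmin$, and the $\varepsilon$-approximate witness --- which tidy up attainment issues the paper's proof glosses over but do not change the argument.
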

%
% The proof of \Cref{prop:sandwich} is in \Cref{appx:sandwich}. 
\oldCref{prop:sandwich} implies that if we aim for low $\cal L_1$ diameter uncertainty sets $\cal S$, we can
%roughly approximate the true objective by maximizing the infimum of $\USW$ over $\cal S$.
approximately optimize true welfare using the robust objective \eqref{def:AMM}.
A similar bound holds when we compare $\USW(\AMM, S^\ast)$ to $\USW(A^\ast, S^\ast)$, i.e., to the welfare of the unknown, true optimal assignment $A^\ast$  (\Cref{thm:gaps}). However, \Cref{prop:sandwich} states that the maximin objective itself is close to the true welfare under any assignment $A$, whereas \Cref{thm:gaps} shows that the specific assignment $\AMM$ has low regret against  $A^\ast$.

% We can compute either the \emph{maximin} solution or the \emph{maximin-regret} solution. The maximin-regret solution has a long history in worst-case guarantees in economic contexts and in decision-making under uncertainty~\cite{wang2003incremental, mausser1998new, savage1972foundations, luce1989games}. 

% or the analogous statement for $\ESW$. The maximin-regret solution instead seeks a solution to the problem 
% \begin{align}
%     \min_{\mathcal{A}} \max_{v \in [\munderbar{v_i}, \bar{v_i}]} \max_{\mathcal{A}_v} \USW(\mathcal{A}_v, v) - \USW(\mathcal{A}, v),
% \end{align}
% or the analogous statement for $\ESW$.\cyrus{How hard is this?}

% \cyrus{``Rectangular'' or ``box'' constraints should be easy to maximin over in USW, and probably also for regret.
% Arbitrary linear or quadratic constraints should make things harder though.
% This is important, because we can say any individual estimate of $v$ could be relatively bad ($\ell_{\infty}$ constraints), but in aggregate, they are good, i.e., have low $\ell_{1}$ or $\ell_{2}$ error w.h.p.
% }

\section{Uncertainty Models for Affinity Scores}
\label{sec:uncertainty_set_types}

We start by analyzing simple uncertainty sets, namely the case where $S$ is known, as well as hyperrectangular, spherical, and ellipsoidal uncertainty sets. We conclude the section by showing a compositionality rule that allows venue organizers to combine multiple uncertainty sets into a single \problemname{} problem.

The case where $S$ is known, $\mathcal{S} = \{S\}$, is quite straightforward. This corresponds to most current conference management system implementations \cite{leyton2022matching,charlin2013toronto}. When $\mathcal{A} = \mathcal{A}_{\COI} \cap \mathcal{A}_P \cap \mathcal{A}_R$, the binary integer program $\argmax_{A \in \mathcal{A}}\USW(A, S)$
is known to be polynomial-time solvable, as it is a linear program with totally unimodular constraints. Intuitively, these constraints introduce cuts to the assignment hypercube that never produce new vertices, thus all vertices of the resulting polytope occur on the binary integer lattice. Efficient implementations of this linear program can be found in multiple publicly available sources \cite{taylor2008optimal,kobren2019paper,stelmakh2019peerreview4all,payan2022order}.
Let us now consider hyperrectangular, spherical, and ellipsoidal uncertainty sets. 
% We will need to solve the maximin problem to give worst-case guarantees over $\tilde{\mathcal{S}}$. Recall that the maximin problem is 
% \begin{align}
%     \argmax_{A \in \tilde{\mathcal{A}}}
%     \inf_{S \in \tilde{\mathcal{S}}}
%     W(A, S) - \mathcal{C}_{soft}(A).
% \end{align}

% This problem is, in general, NP-hard (proof? Citation? Is it clear just from it being a fairly general ILP? \cyrus{Knapsack reduction?}).

%\subsection{Hyperrectangular Uncertainty Sets with Component-wise Confidence Intervals}
\subsection{Hyperrectangular Uncertainty Sets from Confidence Intervals}
\label{subsec:box}
% \justin{Flesh this out more}
% Suppose we estimate a $95\%$ confidence interval for each $S_{pr}$ independently; or more generally, assume that every score is selected independently from some uncertainty interval. 

% A rectangular uncertainty set can be constructed as the product of confidence intervals for each independent paper-reviewer pair. 
Many simple and intuitive models for an uncertainty set take the form of axis-aligned hyperrectangles. 
A na{\"i}ve uncertainty set might estimate confidence intervals for each $S_{p,r}$ independently and use a union bound to give a high-probability region for the affinity scores. 
Venue organizers might also make assumptions about intervals bounding affinity scores with certainty, taking the intersection of multiple such interval bounds. 
For example, they might start with the global constraints of the unit hypercube. 
% The organizers might then assume that paper-reviewer pairs with no keywords in common have an upper bound on their affinity scores, while paper-reviewer pairs with a certain amount of keyword overlap have a lower bound. Likewise, 
Lower and upper bounds can then be given for pairs based on whether they receive certain bids, whether the program committee recommends the assignment, or whether a threshold on document similarity score is met. 
This model is ad-hoc and simple, but may be suitable in practice.
Furthermore, if we assume that with probability at least $1 - \delta$, only a small constant fraction $\gamma$ of these bounds can be violated, we can establish a $(\delta, \gamma)$ confidence interval under more realistic assumptions.

% This is a simple and intuitive model for our uncertainty region. 
If we take all the lower bounds on paper-reviewer scores, we obtain a lower bound affinity score matrix $\underline{S}$. Similarly, taking all the maximal possible values for paper-reviewer scores yields an upper bound affinity score matrix $\bar{S}$. Our uncertainty set is thus $\mathcal{S}_{\Box} \doteq \{X \in \R^{n \times m} \mid \underline{S}_{p,r} \leq X_{p,r} \leq \bar{S}_{p,r} \ \forall p,r\}$.
Our first result is that \problemname{} can be solved in polynomial time for axis-aligned, hyperrectangular uncertainty sets. 
% The full proof of \Cref{thm:box} is in \Cref{appx:solutions}.
%
% [TODO: Renaming $\mathcal{S}_{\Box}$ to $\cal S_{\mathsmaller{\square}}$ or $\cal S_{\mathsmaller{\blacksquare}}$ saves a line in sec 3.3]
\begin{restatable}[%Robust Assignment
RAU
under Hyperrectangular Uncertainty]{theorem}{thmbox}\label{thm:box}
    When the  uncertainty set is %defined by
    an axis-aligned hyperrectangular
    region
    %set
    $\cal S_{\Box}$, then
    %\begin{align*}
    \[    \argmax_{A \in \mathcal{A}} \inf_{S \in \SBox} \USW(A, S) = \argmax_{A \in \mathcal{A}} \USW(A, \underline{S}) \enspace.
    \]%\end{align*}
    Thus RAU %is equivalent to maximizing
    maximizes $\USW$ for $\underbar{S}$, %and this solution can be computed in 
    which requires polynomial time via LP reduction. 
\end{restatable}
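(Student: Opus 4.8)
The plan is to exploit two structural facts: the utilitarian welfare $\USW(A, S) = \frac{1}{n}\sum_{p \in P}\sum_{r \in R} A_{p,r} S_{p,r}$ is \emph{linear} in $S$ with nonnegative coefficients (since every $A \in \cal A \subseteq \{0,1\}^{n \times m}$ has $A_{p,r} \geq 0$), and the uncertainty set $\SBox = \prod_{p,r} [\underline{S}_{p,r}, \bar{S}_{p,r}]$ is a product of intervals, so the inner minimization over $S$ \emph{decouples coordinatewise}. First I would fix an arbitrary assignment $A \in \cal A$ and compute the inner infimum exactly.

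Because the objective is separable and the feasible region factors, the infimum distributes over coordinates: $\inf_{S \in \SBox}\USW(A, S) = \frac{1}{n}\sum_{p,r} \inf_{\underline{S}_{p,r} \leq s \leq \bar{S}_{p,r}} A_{p,r}\, s$. Since $A_{p,r} \geq 0$, each scalar term $A_{p,r}\, s$ is nondecreasing in $s$, so its minimum over the interval is attained at the left endpoint $s = \underline{S}_{p,r}$ (and on coordinates with $A_{p,r}=0$ the choice is irrelevant). Noting that $\SBox$ is compact whenever $\underline{S} \leq \bar{S}$, so the infimum is in fact a minimum, I would conclude $\inf_{S \in \SBox}\USW(A, S) = \frac{1}{n}\sum_{p,r} A_{p,r}\, \underline{S}_{p,r} = \USW(A, \underline{S})$. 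The crucial feature to highlight is that $\underline{S}$ is a minimizer \emph{simultaneously for every} $A$, so this identity holds uniformly in $A \in \cal A$.

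With the pointwise identity $\inf_{S \in \SBox}\USW(A, S) = \USW(A, \underline{S})$ established for all $A \in \cal A$, the two objective functions $A \mapsto \inf_{S \in \SBox}\USW(A,S)$ and $A \mapsto \USW(A, \underline{S})$ coincide as functions on $\cal A$, hence their $\argmax$ sets are equal, which is exactly the displayed equation. To finish, I would invoke the known-$S$ discussion preceding the theorem: maximizing $\USW(A, \underline{S})$ over $\cal A = \cal A_{\COI} \cap \cal A_P \cap \cal A_R$ is a linear program whose constraint matrix is totally unimodular, so it admits an integral optimum computable in polynomial time, giving the polynomial-time claim.

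I do not expect a genuine obstacle here; the argument is elementary once the decoupling is set up. The only points requiring care are (i) justifying that the infimum over the product set splits across coordinates and is attained (compactness of $\SBox$, using $\underline{S} \leq \bar{S}$), and (ii) making explicit that the minimizing $S$ does not depend on $A$, which is what licenses passing from a per-assignment value identity to an identity of $\argmax$ sets rather than merely of optimal values. The reduction to an LP with totally unimodular constraints is cited directly from the earlier paragraph and needs no new work.
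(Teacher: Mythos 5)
Your proposal is correct and follows essentially the same route as the paper's proof: both establish the pointwise identity $\inf_{S \in \SBox}\USW(A,S) = \USW(A,\underline{S})$ for every $A \in \mathcal{A}$ (the paper via a one-coordinate exchange argument, you via coordinatewise decoupling of the separable objective over the product set, which are the same monotonicity observation), and both then conclude by the LP reduction for known $S$. Your explicit remark that the minimizer $\underline{S}$ is independent of $A$, licensing the passage from equality of values to equality of $\argmax$ sets, is a nice clarification of a step the paper leaves implicit.
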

%
% \begin{proof}[Proof Sketch]
%  $\underbar{S}$ minimizes $\USW$ for all $A$. As discussed at the beginning of \Cref{sec:uncertainty_set_types}, the problem of maximizing $\USW$ with a fixed $\underbar{S}$ can be solved as a linear program in polynomial time \cite{taylor2008optimal,  kobren2019paper, stelmakh2019peerreview4all, payan2022order}.
% \end{proof}
%
Axis-aligned, hyperrectangular uncertainty sets correspond to the case where uncertainty is bounded independently across paper-reviewer scores, hence their relative simplicity. Although hyperrectangular uncertainty sets are easy to work with, they are unnecessarily pessimistic, since it is very unlikely that all affinities take extreme values at once (i.e., $\underbar{S}$ is actually a very unlikely outcome).
%The parameter $\gamma$ encapsulates this to some extent, as it measures \emph{total} deviation from the uncertainty set.
We can improve our estimates using uncertainty set models that account for the low probability of many simultaneous extreme values.

% \cyrus{Can we motivate/model here: union bounds (has scalability problem), i.e., over things we have bids for? Structural constraints: Define affinity space $[0, 1]$. Sector-based / keyword based: Each reviewer has an upper and lower bound for each paper? Something like $[0, a]$ if no sector / keyword overlap, $[b, 1]$ if sector / keyword overlap? Could also use other info: hard upper-bound if it wasn't a bid, could choose to trust TPMS $\pm \varepsilon$ for each paper [with different probabilistic guarantees: bounded number of errors (w.h.p.) implies bounded error], or just trust very high / very low TPMS scores, etc.}

\subsection{Ellipsoidal Uncertainty Sets with $\cal L_2$ Error Guarantees}
\label{subsec:ellipsoid}

Many standard models directly bound the $\cal L_1$ or  $\cal L_2$ error of their predictions, which implies uncertainty sets that are more optimistic than hyperrectangular $\cal S$ (and hence have tighter guarantees for \Cref{prop:sandwich}).

We first analyze the case of symmetric uncertainty sets with $\cal L_2$ error guarantees.
% Conference organizers might compute a point-estimate $S^0$ using standard techniques, but allow for a subjectively-chosen amount of error in the estimation. 
% For example, the conference organizer may admit that estimated fits are wrong by $0.01$ on average, and then compute a robust assignment for that level of uncertainty. \justin{Better to give a symmetric Gaussian CI.} 
For example, we might solicit bids uniformly at random  and then predict unsampled bids using collaborative filtering with $\cal L_2$ error guarantees \cite{lee2010practical,cai2016matrix,fang2018max}. $\mathcal{S}$ could then be constructed as a linear combination of values known with certainty (document-based similarity scores and keyword-based matching scores) and the real and estimated bids, yielding a spherical uncertainty set $\cal S$.
% Another obvious model generating a spherical uncertainty set is a confidence interval of a symmetric Gaussian. 

We can consider a spherical $(\delta, 0)$ uncertainty set to consist of a point estimate $S^0$ and a radius $\varepsilon$ limiting the $\cal L_2$ error from the point estimate $S^0$. Formally, we aim to solve the problem  $\argmax_{A \in \mathcal{A}}
    \inf_{S \in B_{\varepsilon}(S^0)}
    \USW(A, S)$,
% \begin{align*}
%     \argmax_{A \in \mathcal{A}}
%     \inf_{S \in B_{\varepsilon}(S^0)}
%     \USW(A, S) \enspace,
% \end{align*}
%
where $B_{\varepsilon}(S^0) \doteq \{X \in \R^{n \times m} \ssep \norm{X - S^0}_F \leq \varepsilon\}$ denotes the $\varepsilon$ Frobenius-norm ball around $S^0$.
% The solution to this problem is simply the assignment with maximum utilitarian welfare under the point estimate $S^0$. In the restricted case where we have a probability distribution over $S$ and $\mathbb{E}[S] = S^0$, this implies that the solution which maximizes the expected welfare is also the optimal robust solution. This assumption does not necessarily hold in the collaborative filtering example above; indeed, we may not even have an estimate of $\mathbb{E}[S]$. However, it may hold when we have a probabilistic model of the affinity scores which implies a spherical uncertainty set; for example, a confidence interval of a symmetric Gaussian would be a sphere with $\mathbb{E}[S] = S^0$. 
%
\begin{restatable}[%Robust Assignment
RAU
under Spherical Uncertainty]{theorem}{thmmaxuswrobust}\label{thm:maxuswrobust}
When $\cal S$ is a sphere,  %we have that
    % $\smash{\argmax_{A \in \mathcal{A}}
    % \inf_{S \in B_{\varepsilon}(S^0)}}
    % \USW(A, S) = \smash{\argmax_{A \in \mathcal{A}}} \USW(A, S^0)$,
\begin{align*}
    \smash{\argmax_{A \in \mathcal{A}}}
    \inf_{S \in B_{\varepsilon}(S^0)}
    \USW(A, S) = \smash{\argmax_{A \in \mathcal{A}}} \, \smash{\USW(A, S^0)} \enspace,
\end{align*}
%and this solution
which can be computed in polynomial time. 
% Also, $\USW(A, S^0) - \USW(A,S^\ast) \leq \frac{\varepsilon \sqrt{K}}{n}$. 
% $O\smash{\Bigl(\frac{\varepsilon}{\sqrt{n}}\Bigr)}$.
Furthermore, for all $A \in \cal A$,
\[
\abs{\USW(A, S^0) - \USW(A,S^\ast)} \leq \mathsmaller{\frac{\varepsilon \sqrt{K}}{n}} \enspace.
\]
\end{restatable}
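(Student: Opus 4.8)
The plan is to reduce both claims to the single observation that $\USW(A,\cdot)$ is a \emph{linear} functional of $S$ whose gradient is $A/n$, combined with the fact that every feasible assignment has the \emph{same} Frobenius norm. I would begin by writing $\USW(A,S) = \frac1n \sum_{p,r} A_{p,r}S_{p,r} = \frac1n\langle A, S\rangle$, where $\langle\cdot,\cdot\rangle$ is the entrywise (Frobenius) inner product, and parametrize the ball by $S = S^0 + D$ with $\norm{D}_F \le \varepsilon$, so that $\USW(A,S) = \USW(A,S^0) + \frac1n\langle A, D\rangle$. Minimizing the linear term $\langle A, D\rangle$ over the Frobenius ball is a one-line application of Cauchy--Schwarz: the minimum value $-\varepsilon\norm{A}_F$ is attained at $D = -\varepsilon A/\norm{A}_F$. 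Hence for every fixed $A$,
\[
\inf_{S \in B_{\varepsilon}(S^0)} \USW(A,S) = \USW(A,S^0) - \tfrac{\varepsilon \norm{A}_F}{n} \enspace.
\]

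The crucial step --- and the only place the structure of $\cal A$ is used --- is to note that every $A \in \cal A$ obeys the paper-coverage constraints, so $A$ is a $0/1$ matrix with exactly $K = \sum_{p} k_p$ ones. Therefore $\norm{A}_F^2 = \sum_{p,r} A_{p,r}^2 = \sum_{p,r} A_{p,r} = K$, i.e.\ $\norm{A}_F = \sqrt{K}$ is \emph{identical} for all feasible $A$. Consequently the penalty $\varepsilon\sqrt{K}/n$ subtracted above is a constant independent of $A$, so the maximizers of $\inf_{S}\USW(A,S)$ over $\cal A$ coincide exactly with the maximizers of $\USW(A,S^0)$, giving the claimed argmax identity. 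Polynomial-time solvability then follows immediately, since $\argmax_{A\in\cal A}\USW(A,S^0)$ is precisely the known-$S$ problem solved by the totally-unimodular LP discussed at the start of the section.

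For the second claim I would bound the gap directly by Cauchy--Schwarz, using $\norm{A}_F = \sqrt{K}$ once more:
\[
\abs{\USW(A,S^0) - \USW(A,S^\ast)} = \tfrac1n\abs{\langle A, S^0 - S^\ast\rangle} \le \tfrac1n \norm{A}_F\,\norm{S^0 - S^\ast}_F \le \tfrac{\varepsilon\sqrt{K}}{n} \enspace,
\]
where the final inequality invokes $\norm{S^0 - S^\ast}_F \le \varepsilon$. This last fact holds with probability at least $1-\delta$ precisely because $\cal S = B_{\varepsilon}(S^0)$ is a $(\delta,0)$ uncertainty set, so $S^\ast$ lies in the ball with high probability; on that event the stated bound holds for \emph{every} $A\in\cal A$ simultaneously, since the estimate never used anything about $A$ beyond its norm.

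I do not anticipate a genuine obstacle: the content is a short convex-optimization computation plus the norm-is-constant remark. The one subtlety worth stating carefully is that $B_{\varepsilon}(S^0)$ is the full Frobenius ball in $\R^{n\times m}$ rather than its intersection with the hypercube $[0,1]^{n\times m}$; this is exactly what keeps the inner minimization an unconstrained linear program with the closed-form minimizer $-\varepsilon A/\norm{A}_F$. Had we intersected with box constraints, the worst-case direction could be truncated coordinatewise, the effective penalty would no longer be the uniform constant $\varepsilon\sqrt{K}/n$, and the clean reduction to the point-estimate problem would break down.
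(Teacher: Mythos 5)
Your proposal is correct and follows essentially the same route as the paper's proof: decompose $S = S^0 + D$, minimize the linear term over the Frobenius ball to get the closed-form worst case $-\varepsilon\norm{A}_F/n$, observe $\norm{A}_F = \sqrt{K}$ is constant over $\cal A$ so the argmax is unchanged, and conclude via the totally-unimodular LP reduction. Your direct Cauchy--Schwarz bound for the second claim is the same computation the paper performs via the min/max over the ball, and your explicit remark that $S^\ast \in B_\varepsilon(S^0)$ holds with probability $1-\delta$ (from the $(\delta,0)$ uncertainty-set property) is if anything slightly more careful than the paper, which leaves that conditioning implicit.
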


% \begin{proof}[Proof Sketch]
% Linearity of $\USW$ reduces the objective to 
% $$\USW(A, S^0) + \min_{X \in B_{\varepsilon}(\mathbf{0})}\USW(A, X).$$ 
% Since $\norm{A}_\mathit{F}$ is fixed and the minimum occurs for $X \propto -A$, %the term does not depend on $A$. Therefore, \linebreak[4]
% $\inf_{S \in B_{\varepsilon}(S^0)}\USW(A, S)$ equals $\USW(A, S^0) - c$ for some constant $c$.
% %This powhich
% This expression can then be maximized using the standard LP formulation.
% \end{proof}
% The bound in \Cref{thm:maxuswrobust} can be applied to $\AMM$ to show that the true welfare $\USW(\AMM, S^\ast)$ is close to the estimated welfare $\USW(\AMM, S^0)$, and it can also be applied to $A^\ast$ to show that $\USW(A^\ast, S^0)$ is close to the optimal welfare $\USW(A^\ast, S^\ast)$. Because $\USW(\AMM, S^0) \geq \USW(A^\ast, S^0)$, the solution $\AMM$ is close to optimal.

% One way of looking at \Cref{thm:maxuswrobust} (proved in \Cref{appx:solutions}) is that $\AMM$ over spherical uncertainty sets provides no additional robustness guarantees over $\argmax_{A \in \cal A} \USW(A, S^0)$. 
One way of looking at \Cref{thm:maxuswrobust} is that $\AMM$ over spherical uncertainty sets provides no additional robustness guarantees over $\argmax_{A \in \cal A} \USW(A, S^0)$. 
Thus, it seems that in order to obtain meaningful robustness guarantees, we require both a limit to the total amount of variation in affinity scores (\Cref{thm:box}) as well as asymmetry between the noise on affinity scores (\Cref{thm:maxuswrobust}). We can explain these results intuitively; to decide how best to assign reviewers, we need to be able to make tradeoffs between assigning pairs with potentially high (but also potentially low) affinity or assigning pairs that have an average amount of affinity with higher certainty. Those tradeoffs are only meaningful if uncertainty varies across paper-reviewer pairs, and if there is a limited total amount of uncertainty.

With that intuition in mind, we generalize to the case of ellipsoidal  uncertainty sets. 
% These uncertainty sets generalize the spherical sets described above, when some components of the model are asymmetric. For example, we might sample bids according to a non-uniform distribution, resulting in asymmetry in the shape of the $\cal L_2$ error bounds.
% In this scenario, our maximin objective becomes 
% \begin{align}
%     \argmax_{A \in \tilde{\mathcal{A}}}
%     \min_{u \in B_{c}(\mathbf{0}) : \mathbf{0} \leq S^0 + R \odot u \leq \mathbf{1}}
%     \USW(A, S^0 + R \odot u).
% \end{align}
% We apply Algorithm~\ref{alg:subgrad_ascent} to maximize this objective. The inner minimization can be solved efficiently as a second-order cone program. 
In a simple model, we might model affinity scores as multivariate Gaussians, %a case we will explore in more detail in our experiments in \Cref{sec:experiments}.
which we explore in greater detail in the experiments of \Cref{sec:experiments}.
In this case, we obtain a mean vector\footnote{Technically, a sample from this multivariate Gaussian is a vector in $\R^{nm}$, and
%needs to
must be reshaped into a matrix $S \in \R^{n \times m}$.
We will convert matrices into vectors in row-major order and ignore the distinction between matrices and vectors when convenient.} $\vec \mu \in [0,1]^{nm}$ and a positive semi-definite covariance matrix $\Sigma \in \mathbb{R}^{nm \times nm}$. 
Given a confidence level $1-\delta$, we create an uncertainty set 
\begin{align}
\label{eqn:normal_dist}
\cal S \doteq \left\{S \in \R^{nm} \ssep (S - \vec \mu)^T \Sigma^{-1} (S - \vec \mu) \leq \chi^2_{nm}(1-\delta) \right\} \enspace,
\end{align}
% \justin{Write in set-builder notation as below. Primes audience tothink of this as uncertainty set.}
% \cyrus{Added closed-form bound}
%Using Laurent-Massart tail bounds https://projecteuclid.org/journals/annals-of-statistics/volume-28/issue-5/Adaptive-estimation-of-a-quadratic-functional-by-model--selection/10.1214/aos/1015957395.full
where $\chi^2_k$ is the inverse CDF of the $\chi^2$ distribution with $k$ degrees of freedom. If we assume no model error (which is not a safe assumption generally), then the true affinity scores are contained within $\cal S$ with probability at least $1-\delta$. 
% Often, we will also have constraints that $0 \leq S_{pr} \leq 1$ for all $p,r$, leading to \emph{truncated} ellipsoidal uncertainty sets. We explore this setting in more detail in the experiments in \Cref{sec:experiments}.
We know also that $\chi^2_{nm}(1-\delta) \leq nm + 2\smash{\sqrt{nm \ln \mathsmaller{\frac{1}{\delta}}}} + 2\ln \frac{1}{\delta}$. %Chi-squared is sub-gamma, see Concentration Inequalities Example 2.7, and all of Section 2.4
We can see that the size of the uncertainty set depends only logarithmically on $\frac{1}{\delta}$, and thus we can trade off between $\delta$ and the $\cal L_1$ diameter of the uncertainty set.
% \justin{Is this bound vacuous? I honestly dont really see why this looser bound is relevant since we arent stating a theoretical result, we are giving an example of an uncertainty set we might construct. What matters is that it can be implemented and that its as tight as possible under the assumptions.}

% A concrete instance of Gaussian-distributed affinities is the case of missing data. 
% \citet{leyton2022matching} note that in the AAAI'21 conference, nearly 40\% of papers were missing TPMS scores, while \citet{meir2021market} note that many papers in conferences have fewer bids than the number of reviewers required --- implying that papers often get assigned reviewers that have not bid on the paper. 
% Missing scores might prompt the program committee to assign lower confidence to the accuracy of a computed score, and with multiple sources of possible missing data, confidence levels can vary across entries. 

While the Gaussian model
%described above
employed in \eqref{eqn:normal_dist}
makes a strong modeling assumption, we can use a validation set and a predictive model with provable tail bounds to obtain uncertainty sets that do not require any
%strong modeling
such
assumptions. To accomplish this, we require a predictive model of affinity scores.  Venue organizers can then obtain $\cal L_2$ error bounds using validation data, and this will yield an ellipsoid due to sampling bias (only certain paper-reviewer pairs will be observed for any given venue). In this setting, the uncertainty set is not derived as a confidence interval of a probability distribution, but rather directly comes from a tail bound on total generalization error. Optimizing in this setting \emph{requires} using a robust approach like \problemname{}, and cannot be done with average case analysis.

\cyrus{Truncation a thorny issue for Gaussian model. Similar guarantees with sub-Gaussian or sub-gamma assumptions. Axis-aligned, a.k.a.\ nondiagonal $\Sigma$}

\newcommand{\Pap}{\bm{p}}
\newcommand{\Rev}{\bm{r}}
\newcommand{\PapUniv}{\cal P}
\newcommand{\RevUniv}{\cal R}

We now develop this predictive model in more technical detail. 
% Although we have considered a single fixed venue up to this point, we %will
% now construct a  generating process for venues. Consider the universes of possible papers $\PapUniv$ and %the universe of 
% possible reviewers $\RevUniv$. To create a venue, %we first set an arbitrary number of papers $n$ and reviewers $m$.
% %We then
% we draw a set of $n$ papers $P$ %i.i.d.\ 
% from some distribution $\ProbDist^\PapUniv$ over $\PapUniv$ and a set of $m$ reviewers $R$
% %i.i.d.\ 
% from some distribution $\ProbDist^\RevUniv$ over $\RevUniv$. [todo we abuse notation]
%We then draw a set of papers $P = \{\Pap_1, \dots \Pap_n\}$ such that each $\Pap_i \in P'$ and $\Pap_i \sim \ProbDist^P$ for some distribution over papers $\ProbDist^P$. Likewise, we can draw a set of reviewers $R = \{\Rev_1, \dots \Rev_m\}$ using a distribution $\ProbDist^R$ defined over support set $R'$. For brevity, we will refer to $P$ and $R$ as being drawn from $\ProbDist^P$ and $\ProbDist^R$.
% the sets of papers and reviewers are random variables $P \sim \ProbDist^P$ and $R \sim \ProbDist^R$, where $P$ has support $\cal P$ and $R$ has support $\cal R$ ($P \in \cal P$ and $R \in \cal R$). 
Suppose the true affinity score of some paper-reviewer pair is $f^{*}(p, r)$, and we have access to a predictive model $\hat{f}(p, r)$, perhaps learned on historical venues.
In practice, $\hat{f}$ predicts the affinity of reviewer $r$ for paper $p$ based on any information available prior to reviewer assignment, and %. %, and we can query $f^{*}$.
%In addition,
the specific definition of affinity
%may be left
is left to the
%organizers of the venue.
venue organizers.
For example, a venue may decide that affinity is best measured
%directly using
via
reviewer bids, %. This venue might then
and they may
use historical data to train a
%predictive model
predictor $\hat{f}$ to predict missing bids from document-based similarity scores and keywords. %In another case,
Alternatively, venue organizers may decide that the ground truth affinity $f^\ast(p,r)$ should correspond to a meta-reviewer's judgment of review quality, and $\hat{f}$ can then be trained on historical data to predict these judgments.
% probability that $r$ will receive an award for %his or her
% their
% review of $p$.
%In either case, predicted bids, along with the smaller set of true solicited bids, can then be used to assign reviewers.

% might include any combination of traditional measures like TPMS scores and bids or more interesting measures of fit learnable from historical data, such as whether the reviewer will receive a top-rated reviewer award from the program committee, the predicted length of the review, or predictions of attributes like decisiveness, comprehensiveness, and justification \cite{yuan2022can}.

%Let $\hat{S}_{pr} = \hat{f}(p, r)$. We will estimate the error of our predictive function on a validation set, drawn from a distribution $\ProbDist^\mathit{aux}$. We can then relate the observed error on the validation set to an upper bound on the $\cal L_2$ error of $f$ for the current venue.
We will take $\smash{\hat{S}_{p,r}} \doteq \smash{\hat{f}}(p, r)$ and $\smash{S^\ast_{p,r}} \doteq f^{*}(p, r)$. We assume that we can evaluate $\smash{\hat{f}}$ on all paper-reviewer pairs in the current venue, and potentially on pairs from historical venues as well. We may be able to sample $f^{*}(p, r)$ for some, but not all, pairs in the current venue and historical venues (the validation data).
We then probabilistically bound a weighted average of the square error between $\smash{\hat{f}}$ and $f^\ast$ in terms of an estimate of expected square error computed on the validation set. %, i.e., 
\iffalse
The details of the validation set vary by application, but the overall strategy will be to estimate the square error $\mathbb{E}[(s^\ast - \hat{s})^{2}]$, where $s^\ast$ and $\hat{s}$ are given by $f^{*}$ and $\smash{\hat{f}}$ on a random paper-reviewer pair. %, and then to similarly 
\fi
The details of the validation set vary by application, but the overall strategy will be to estimate the square error $\mathbb{E}[(S^\ast - \hat{S})^{2}]$, where $S^\ast$ and $\hat{S}$ are given by $f^{*}$ and $\smash{\hat{f}}$ on a random paper-reviewer pair. %, and then to similarly 
We show two such approaches, the first inductive, using historic or auxilliary data to form the validation set, and the second transductive, assuming a small random sample of true affinity scores (e.g., bids) can be queried within the current peer-review venue. We will need to define the notion of sampling a sequence of random variables \emph{conditionally independently without replacement}. 

\begin{definition}
A sequence of variables $x_1, x_2, \dots x_t$, where all $x_i \in \cal X$, is sampled conditionally independently without replacement from a distribution $\ProbDist$ with support $\cal X$ if the variables $x_i$ are sampled in order from $x_1$ to $x_t$, and for any $i \in \{1, \dots t\}$, $p_\ProbDist(x_i = x | x_1, \dots x_{i-1}) = \frac{p_\ProbDist(x_i = x)}{\int_{\cal X \setminus \{x_1, \dots x_{i-1}\}}p_\ProbDist(x_i = y)dy}$ for all $x \in \cal X \setminus \{x_1, \dots x_{i-1}\}$ and $p_\ProbDist(x_i = x | x_1, \dots x_{i-1}) = 0$ for $x \in \{x_1, \dots x_{i-1}\}$.
\end{definition}

\begin{restatable}[%
Ellipsoidal Uncertainty Sets from Inductive Predictors%
]{theorem}{thminductiveellipsoidaltailbounds}%
\label{thm:inductiveellipsoidal-tail-bounds}%
Let $\ProbDist'\!$ be a probability distribution over 
paper-reviewer pairs, 
and let $\ProbDist^{\PapUniv}\!$ and $\ProbDist^{\RevUniv}\!$ be distributions over papers and reviewers, respectively.  Assume %also
that
$P$ and $R$
were drawn conditionally independently without replacement from $\ProbDist^{\PapUniv}\!$ and $\ProbDist^{\RevUniv}\!$, respectively.\footnote{Abusing notation slightly, we index the papers $p \in P$ and reviewers $r \in R$ so that $p$ can represent either a paper in $\PapUniv$ or an integer between $1$ and $n$.}
Suppose we sample $T$ paper-reviewer pairs $\{(p_i, r_i)\}_{i=1}^T$ conditionally independently without replacement from $\ProbDist'\!$, and these paper-reviewer pairs have true and estimated affinity scores $\{f^\ast(p_i, r_i)\}_{i=1}^T$ and $\{\hat{f}(p_i, r_i)\}_{i=1}^T$, respectively.

Let\[
\alpha(p, r) \doteq \frac{\Prob_{\Pap \distributed \ProbDist^{\PapUniv}} (\Pap = p) \Prob_{\Rev \distributed \ProbDist^{\RevUniv}}(\Rev = r)}{ \Prob_{(\Pap,\Rev) \distributed \ProbDist'\!}((\Pap,\Rev) = (p, r)) } %\enspace,
\quad \& \quad \alpha_{\min} \doteq \inf_{p \in \PapUniv, r \in \RevUniv} \alpha(p, r)
\] denote (1) the \emph{probability %(density)
ratio} of sampling $p$ from $\ProbDist^{\PapUniv}$ and $r$ from $\ProbDist^{\RevUniv}$ %paper $p$ and reviewer $r$ %in the venue
to %their probability
sampling $(p, r)$ from $\ProbDist'\!$, and (2) the \emph{infimum probability ratio}, respectively.
Now construct the \emph{ellipsoid matrix} $\Sigma \in \R^{nm \times nm}$ as the diagonal matrix  such that %,\linebreak[3] we have 
$\Sigma_{pm+r,pm+r} = \alpha(p, r)$ for all $p \in P, r \in R$.

Then for any $\delta \in (0, 1)$, the ellipsoid
\[
\renewcommand{\!}{}
\mathcal{S} \! \doteq \! \left\{ \! S \! \in \! \R^{nm} \,\middle|\, \frac{1}{nm}(S \! - \! \hat{S})^\intercal \Sigma^{-1} (S \! - \! \hat{S}) \! \leq \! \smash{\underbrace{\frac{1}{T} \! {\sum_{i=1}^{T}} (f^\ast(p_i, r_i) \! - \! \hat{f}(p_i, r_i))^{2}}_{= \smash{\hat{\xi}}}} \! + \! \smash{\underbrace{\sqrt{\left( \frac{1}{T} \! + \! \frac{n\!+\!m}{nm\alpha_{\min}^{2}} \!\right) \!\frac{\ln \! \smash{\frac{1}{\delta}}}{2}}}_{= \eta}} \vphantom{\sqrt{\frac{1}{T}}} \, \right\} \vphantom{\underbrace{\frac{1}{T}}_{\xi}} %\enspace,
\]
is a $(\delta, 0)$ uncertainty set,
where $\smash{\hat{\xi}} %\defeq \frac{1}{T} \! \sum_{i=1}^{T} (s^\ast_{i} - \hat{s}_{i})^{2}
$ denotes the \emph{empirical square error} of our estimated scores, and $\eta %\defeq \! \sqrt{\left( \frac{1}{T} \! + \! \smash{\frac{\alpha_{\max}^{2}}{n}} \! + \! \smash{\frac{\alpha_{\max}^{2}}{m}} \right) \! \smash{\frac{\ln \!\frac{1}{\delta}}{2}}}
$ denotes the \emph{excess error bound} due to sampling.
\end{restatable}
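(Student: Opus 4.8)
The plan is to show directly that the true score matrix $S^\ast$ lies in the ellipsoid $\cal S$ with probability at least $1-\delta$; since $\cal S$ is closed, $\inf_{S \in \cal S}\norm{S - S^\ast}_1 = 0$ exactly when $S^\ast \in \cal S$, so this is precisely the $(\delta, 0)$ condition of \Cref{defn:dg_uncert}. Because $\Sigma$ is diagonal with $\Sigma_{pm+r,pm+r} = \alpha(p,r)$, the quadratic form unfolds into a weighted average of squared errors,
\[
\Xi \defeq \tfrac{1}{nm}(S^\ast - \hat S)^\intercal \Sigma^{-1}(S^\ast - \hat S) = \tfrac{1}{nm}\sum_{p \in P, r \in R}\tfrac{(f^\ast(p,r) - \hat f(p,r))^2}{\alpha(p,r)} \enspace,
\]
so membership of $S^\ast$ reduces to the scalar event $\Xi \le \hat\xi + \eta$, and it suffices to bound $\Prob(\Xi - \hat\xi > \eta)$.

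First I would identify the common mean of the two estimators. Writing $g(p,r) \defeq (f^\ast(p,r) - \hat f(p,r))^2 \in [0,1]$ and $\xi^\ast \defeq \Expect_{(\Pap,\Rev)\distributed\ProbDist'}[g(\Pap,\Rev)]$, the definition of $\alpha$ as an importance ratio gives the key identity $\Expect_{\Pap\distributed\ProbDist^{\PapUniv},\,\Rev\distributed\ProbDist^{\RevUniv}}\!\bigl[g(\Pap,\Rev)/\alpha(\Pap,\Rev)\bigr] = \xi^\ast$. Since conditionally-independent-without-replacement sampling preserves the marginal of each drawn paper ($\ProbDist^{\PapUniv}$), reviewer ($\ProbDist^{\RevUniv}$), and validation pair ($\ProbDist'$), linearity yields $\Expect[\Xi] = \xi^\ast = \Expect[\hat\xi]$, whence $\Expect[\Xi - \hat\xi] = 0$. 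Establishing this reweighting identity, together with marginal preservation (where the precise sampling definition is used), is the first thing to nail down.

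Next I would treat $F \defeq \Xi - \hat\xi$ as a single function of the $n$ venue papers, $m$ venue reviewers, and $T$ validation pairs, and compute its bounded-difference coefficients. Replacing one validation pair perturbs $F$ by at most $\tfrac{1}{T}$; replacing one venue paper alters the $m$ terms of $\Xi$ indexed by that paper, each lying in $[0, \tfrac{1}{nm\alpha_{\min}}]$, for a total change of at most $\tfrac{1}{n\alpha_{\min}}$; symmetrically each venue reviewer contributes $\tfrac{1}{m\alpha_{\min}}$. The sum of squared coefficients is then
\[
T\cdot\tfrac{1}{T^2} + n\cdot\tfrac{1}{n^2\alpha_{\min}^2} + m\cdot\tfrac{1}{m^2\alpha_{\min}^2} = \tfrac{1}{T} + \tfrac{n+m}{nm\alpha_{\min}^2} \enspace,
\]
which is exactly the bracketed quantity inside $\eta$. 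Applying the bounded-differences (McDiarmid) inequality gives $\Prob(F - \Expect[F] \ge \eta) \le \exp\!\bigl(-2\eta^2/(\tfrac1T + \tfrac{n+m}{nm\alpha_{\min}^2})\bigr) = \delta$, and combined with $\Expect[F] = 0$ this is $\Prob(\Xi > \hat\xi + \eta) \le \delta$, proving the claim.

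The step I expect to be the main obstacle is justifying the concentration inequality over \emph{dependent} samples: the venue papers, reviewers, and validation pairs are each drawn without replacement, so the standard independent-coordinate McDiarmid does not apply verbatim, and I must invoke its without-replacement analogue via a Doob martingale over the three mutually independent blocks, together with the Hoeffding--Serfling reduction showing that without-replacement sampling concentrates at least as sharply as i.i.d.\ sampling. A secondary subtlety is verifying that the sampling genuinely preserves per-sample marginals, since this underlies the clean $\Expect[\Xi] = \Expect[\hat\xi] = \xi^\ast$ cancellation; everything else is the routine bookkeeping of the bounded-difference coefficients, which pleasingly reassembles into the stated $\eta$.
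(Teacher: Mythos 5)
Your proposal matches the paper's proof essentially step for step: you define the same mean-zero random variable $Z = \Xi - \hat\xi$, establish $\Expect[Z] = 0$ via the same importance-sampling identity, compute the same bounded-difference coefficients ($\frac{1}{T}$ per validation pair, $\frac{1}{n\alpha_{\min}}$ per paper, $\frac{1}{m\alpha_{\min}}$ per reviewer, summing in squares to $\frac{1}{T} + \frac{n+m}{nm\alpha_{\min}^2}$), and conclude by McDiarmid's inequality. The only difference is one of rigor at the step you flagged: the paper simply asserts the without-replacement samples form a negatively-dependent system and applies McDiarmid directly, whereas you propose justifying this via a Doob martingale over the three independent blocks together with a Hoeffding--Serfling-type reduction — your treatment of that step is, if anything, the more careful one.
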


% \begin{proof}[Proof Sketch]
% We apply McDiarmid's bounded difference inequality
% \cite{mcdiarmid1989method} to upper bound a zero-mean random variable representing the difference between the weighted mean square error on the current set of affinity score estimates for paper-reviewer pairs and the mean square error on the auxilliary (historic) data. %empirical squared error and the true, weighted squared error.
% \end{proof}

% We do not assume the data used to evaluate the predictor comes from the same distribution of $(p, r)$ pairs presently under consideration,\footnote{Note that if the distributions are identical, then $\alpha(p, r) = 1$ for any $p, r$, and thus all confidence intervals are spherical.}
%  which we correct for with \emph{probability ratios} $\alpha(p, r)$.
 
Departing from the standard reviewer assignment setup,
%we must assume
\Cref{thm:inductiveellipsoidal-tail-bounds} assumes that both the historic data and the current venue are random.
% However, if we assume each paper and each reviewer has a primary topic, and we make the assumption that the distribution over papers and reviewers within each primary topic has not changed since the historic data were collected, then $\alpha(p, r)$ may be computed using only the relative proportions of each topic in the historic data and the current distribution.
% However, we can
% %implement
% establish
% our distributions $\ProbDist^\PapUniv$ and $\ProbDist^\RevUniv$ using previous conference iterations.
In particular, historic paper-reviewer pairs are sampled from $\ProbDist'$ (modeling the historic data generation process), and papers and reviewers for the current venue are sampled from $\ProbDist^\PapUniv$ and $\ProbDist^\RevUniv$ (modeling the processes by which %authors submit papers
papers are submitted and reviewers volunteer).
% However, $\ProbDist^\PapUniv$ and $\ProbDist^\RevUniv$ will typically be defined from the historic data.
We then construct
$\alpha(p, r)$
%$\alpha$
to reweight square error on the current venue to match expected square error on $\ProbDist'$ (i.e., we use importance sampling to calibrate expectations over $\ProbDist'$ versus those over $\ProbDist^\PapUniv$ and $\ProbDist^\RevUniv$). % to match the current venue's distribution so that expected square error .
For example, $\ProbDist'$ reflects all elements of historic data generation, most importantly the availability of historic data from multiple venues with different focuses. We might then use the (relatively stable) topic areas of papers and reviewers to model %these distributions,
$\ProbDist^\PapUniv$ and $\ProbDist^\RevUniv$\!, and thus $\alpha(p, r)$ %will depend on
reflects the ratio of the popularity of $p$ and $r$'s topic areas in the current venue to historic venues.
% relative popularity of the topics covered in $p$ and known by $r$. 
% Furthermore, some reviewers may be more prolific than others, which could influence probability ratios.
% For example, historic data can be from a previous iteration of an annual conference, and $\alpha(p, r)$ represents 
% , which draws from largely the same pool of reviewers and focuses on the same set of topics, but trends in the field may result in shifts in topic popularity.
%
% [todo redendant?]

We show a similar result in the transductive setting. Instead of constructing a predictive function from historical data, we generalize a small set of known affinities for the current venue to the unknown affinities for the same venue. Note that $\ProbDist'$ now reflects the process by which we obtain samples for $(p_i, r_i)$ pairs from the current venue, rather than from historical venues.

\begin{restatable}[%Estimating
Ellipsoidal Uncertainty Sets from Transductive Predictors%
]{theorem}{thmtransductiveellipsoidaltailbounds}%
\label{thm:transductiveellipsoidal-tail-bounds}%
Suppose we sample $T$ paper-reviewer pairs $\{(p_i, r_i)\}_{i=1}^T$ conditionally independently without replacement from $\ProbDist'\!$, and these paper-reviewer pairs have true and estimated affinity scores $\{f^\ast(p_i, r_i)\}_{i=1}^T$ and $\{\hat{f}(p_i, r_i)\}_{i=1}^T$, respectively.
Let 
\[
\alpha(p, r) \doteq \vphantom{\frac{o}{2}}\smash{\frac{\smash{(nm)^{-1}}}{ \Prob_{(\Pap,\Rev) \distributed \ProbDist'}((\Pap,\Rev) = (p, r)) }} 
\] 
denote the \emph{probability 
ratio} between
sampling 
$(p, r)$ uniformly at random and
sampling
$(p, r)$ from $\ProbDist'\!$,
and construct the \emph{ellipsoid matrix} $\Sigma \in \R^{nm \times nm}$ as the diagonal matrix  such that $\Sigma_{pm+r,pm+r} = \alpha(p, r)$ for all $p \in P, r \in R$.

Then for any $\delta \in (0, 1)$, the ellipsoid
\[
\renewcommand{\!}{}
\mathcal{S} \doteq \left\{ S \in \R^{nm} \,\middle|\, \frac{1}{nm}(S - \hat{S})^\intercal \Sigma^{-1} (S - \hat{S}) \leq \smash{\underbrace{\frac{1}{T} \vphantom{\sum_{\cdot}}\smash{\sum_{i=1}^{T}} (f^\ast(p_i, r_i) \! - \! \hat{f}(p_i, r_i))^{2}}_{= \hat{\xi}}} + \smash{\underbrace{\sqrt{\frac{\ln \smash{\frac{1}{\delta}}}{2T}}}_{= \eta}} \vphantom{\sqrt{\frac{1}{T}}} , \right\} \vphantom{\underbrace{\frac{1}{T}}_{\eta}}
\]
% \[
% \mathcal{S} \! \doteq \! \left\{ \! S \! \in \! \R^{nm} \,\middle|\, \frac{1}{nm}(S \! - \! \hat{S})^\intercal \Sigma^{-1} (S \! - \! \hat{S}) \! \leq \! \smash{\underbrace{\frac{1}{T} \! {\sum_{i=1}^{T}} (f^\ast(p_i, r_i) \! - \! \hat{f}(p_i, r_i))^{2}}_{= \smash{\hat{\xi}}}} \! + \! \smash{\underbrace{\sqrt{\left( \frac{1}{T} \! + \! \frac{n\!+\!m}{nm\alpha_{\min}^{2}} \!\right) \!\frac{\ln \! \smash{\frac{1}{\delta}}}{2}}}_{= \eta}} \vphantom{\sqrt{\frac{1}{T}}} \, \right\} \vphantom{\underbrace{\frac{1}{T}}_{\xi}} %\enspace,
% \]
is a $(\delta, 0)$ uncertainty set, where $\smash{\hat{\xi}}$ denotes the \emph{empirical square error} of our estimated scores, and $\eta$ denotes the \emph{excess error bound} due to sampling.

\end{restatable}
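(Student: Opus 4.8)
The plan is to first reduce the ellipsoidal containment condition $S^\ast \in \mathcal S$ to a one-sided comparison between a true $\ProbDist'$-expected square error and its empirical estimate $\hat\xi$, and then to close that comparison with a Hoeffding-type lower-tail bound for sampling without replacement. Since $\mathcal S$ is a closed ellipsoid, $\inf_{S \in \mathcal S} \norm{S - S^\ast}_1 > 0$ holds exactly when $S^\ast \notin \mathcal S$; hence establishing that $\mathcal S$ is a $(\delta, 0)$ uncertainty set (per \Cref{defn:dg_uncert}) amounts to showing $\Prob(S^\ast \notin \mathcal S) < \delta$.

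First I would evaluate the quadratic form at $S = S^\ast$. Because $\Sigma$ is diagonal with entries $\alpha(p,r) = \bigl(nm\,\Prob_{(\Pap,\Rev)\distributed\ProbDist'}((\Pap,\Rev)=(p,r))\bigr)^{-1}$, its inverse is diagonal with entries $1/\alpha(p,r) = nm\,\Prob_{(\Pap,\Rev)\distributed\ProbDist'}((\Pap,\Rev)=(p,r))$, and the two factors of $nm$ cancel, giving
\[
\tfrac{1}{nm}(S^\ast - \hat S)^\intercal \Sigma^{-1}(S^\ast - \hat S) = \sum_{p \in P, r \in R}\Prob_{(\Pap,\Rev)\distributed\ProbDist'}((\Pap,\Rev)=(p,r))\,\bigl(f^\ast(p,r) - \hat f(p,r)\bigr)^2 \defeq \xi \enspace,
\]
the true $\ProbDist'$-expected square error of the estimator. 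Thus $S^\ast \in \mathcal S$ is equivalent to $\xi \le \hat\xi + \eta$, and it suffices to bound $\Prob(\xi - \hat\xi > \eta)$.

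Next I would view $\hat\xi = \tfrac1T\sum_{i=1}^T g(p_i,r_i)$ as an empirical average of the bounded statistic $g(p,r) \defeq (f^\ast(p,r) - \hat f(p,r))^2 \in [0,1]$ (bounded since affinities lie in $[0,1]$), whose population expectation under $\ProbDist'$ is $\xi$. Because the pairs are drawn conditionally independently without replacement from $\ProbDist'$, I would invoke a Hoeffding-type lower-tail inequality valid for without-replacement sampling — obtained by dominating the moment generating function of the without-replacement sum by that of its i.i.d.\ counterpart, or equivalently via a supermartingale on the conditional draw means — yielding $\Prob(\xi - \hat\xi \ge t) \le \exp(-2T t^2)$. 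Setting $\exp(-2T\eta^2) = \delta$ forces exactly $\eta = \sqrt{\ln(1/\delta)/(2T)}$, so $\Prob(\xi - \hat\xi > \eta) \le \delta$, which is the desired containment probability.

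The main obstacle is precisely this concentration step: one must justify a Hoeffding-type bound under the without-replacement dependence rather than naive i.i.d.\ Hoeffding, and control the correct (lower) tail so that the empirical error $\hat\xi$ does not underestimate the true error $\xi$ by more than $\eta$. Care is needed because deleting already-drawn pairs reweights the conditional law of each later draw, so the samples are neither independent nor identically distributed; the clean route is to reduce the without-replacement sum to the with-replacement case in the Chernoff step, after which the remaining pieces (the diagonal algebra calibrating $\Sigma$ and $\alpha$, and the choice of $\eta$) are routine.
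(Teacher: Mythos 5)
Your proposal is correct and follows essentially the same route as the paper: both reduce the containment $S^\ast \in \mathcal{S}$ to the identity that the $\Sigma^{-1}$-weighted quadratic form at $S^\ast$ equals the $\ProbDist'$-expected square error (the importance weights $\alpha$ cancel the uniform $\frac{1}{nm}$ factor), and then bound the one-sided deviation $\xi - \hat{\xi}$ by $\sqrt{\ln(1/\delta)/(2T)}$ via a Hoeffding/McDiarmid bound with per-sample bounded differences $\frac{1}{T}$. If anything, you are more explicit than the paper about justifying the concentration step under without-replacement dependence (via MGF domination), where the paper simply cites the Hoeffding and McDiarmid inequalities.
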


The transductive result can be straightforwardly applied to many different contexts in which venue organizers can
%sample affinities for paper-reviewer pairs from the current venue. 
%In contrast with \Cref{thm:inductiveellipsoidal-tail-bounds}, to implement \Cref{thm:transductiveellipsoidal-tail-bounds} organizers must
solicit samples of $f^\ast$ 
% (e.g., bids, expert judgments, or even meta-reviewer comments) 
on $(p_i, r_i)$ pairs from the current venue, rather than historical data. 
This information must be obtained prior to assigning the majority of reviewers. For example, %a transductive workflow is consistent with defining
organizers could define $f^\ast$ as 
a reviewer's hypothetical bid
% what a reviewer would bid on a given paper, %which is accomplished
and \Cref{thm:transductiveellipsoidal-tail-bounds} then requires soliciting a small number of bids %and then estimating
to estimate the error of $\hat{f}$. % on the remaining paper missing bids.
%On the other hand,
Similarly, $f^\ast$ %can even
could correspond to meta-reviewer judgments of review quality, accomplished by opting for a two-stage reviewing process, in which the reviews and feedback generated in the first stage are used to estimate the error of $\hat{f}$. 
These definitions of $f^\ast$ are costly to sample, but organizers can still efficiently target sophisticated affinity models by solving \problemname{} over the uncertainty sets of \Cref{thm:transductiveellipsoidal-tail-bounds}.
% As an example, assume that after all reviewers have registered and papers have been submitted, the %conference
% organizers randomly solicit bids from a small selection of $(p, r)$ pairs.\footnote{In most existing conferences, a recommendation engine provides a list of papers that the reviewer then chooses to bid on freely. Our hypothetical process suggests soliciting specific bids.} 
% % It is generally more informative to query reviewers for likely matches, so this initial solicitation %will likely
% % should be weighted based on some objective similarity score. 
% Instead of bids, conference organizers might also sample expert judgments, or even solicit a small number of reviews to be judged by a meta-reviewer. [todo: redundant]

We naturally ask the question, ``How many samples are sufficient to obtain a sharp confidence bound?''
Observe that, by \cref{prop:sandwich}, the gap between adversarial and true welfare is $\frac{L}{n}$, where $L$ denotes the $\cal L_1$ diameter of $\cal S$.
For the ellipsoidal uncertainty set of \cref{thm:transductiveellipsoidal-tail-bounds}, $\frac{L}{n} \leq 2m\sqrt{\alpha_\text{max} (\hat{\xi} + \eta )}$, where $\alpha_{\max} \doteq \sup_{p \in \PapUniv, r \in \RevUniv} \alpha(p, r)$.
Furthermore, the empirical square error $\hat{\xi}$ converges to some $\xi$ as $T$ increases, thus we need only select $T \in \LandauOmega \bigl(%\vphantom{\raisebox{2.5pt}{\ensuremath{\frac{1}{2}}}}
\smash{\frac{\log \frac{1}{\delta}}{\xi^{2}}} \bigr)$ samples to ensure that the uncertainty set is constant-factor optimal, at which point the welfare gap is $\LandauO( m\sqrt{\alpha_\text{max} \xi })$, which is also optimal to within constant factors. % such that $\eta $%by the above
Notably,
%In other words,
the sufficient sample size $T$ is \emph{independent of the venue size} (i.e., $n$ and $m$), thus the added burden of soliciting these extra bids is negligible.
We also see that the fundamental limitation of this method is the %predictor's
average square error %$\hat{\xi}$ or
$\xi$, which depends on the predictor, the venue, and the sampling distribution $\ProbDist'$.
It is thus paramount to use predictors for which this quantity will be small.
Fortunately, this is often the case, as many predictive models are explicitly trained to minimize $\mathcal{L}_{2}$ error on some task, which motivates the choice of our ellipsoidal uncertainty sets.
Note that while this argument pertains to \cref{thm:transductiveellipsoidal-tail-bounds}, one can argue similarly for the necessary size of the validation set to ensure $\eta = \LandauO(\xi)$ %number of historic samples required
in \cref{thm:inductiveellipsoidal-tail-bounds}.

Finally, we note that it is possible to extend either result under less favorable (more realistic) assumptions about the sampling process using the $\gamma$ parameter ($\mathcal{L}_{1}$ error) of our uncertainty set construction. In particular, either result produces a ($\delta, T\gamma$) uncertainty set if $\hat{s}$, $s^\ast$, and the associated ($p, r$) pairs are subject to \emph{adversarial corruption} of $T\gamma$ %elements of the
of the validation set samples drawn from $\ProbDist'$, which has immediate applications in privacy, adversarial robustness, and various notions of strategy-proofness.
Furthermore, to model more complicated and potentially not fully understood distribution shift, we %can
obtain via Bennett's inequality \cite{bennett1962probability} a $\Bigl( \delta + \delta', T\gamma + \mathsmaller{\frac{1}{3}}\ln \frac{1}{\delta'} + \sqrt{2T\gamma(1-\gamma)\ln \frac{1}{\delta'}}\, \Bigr)$ uncertainty set 
%if $\hat{s}$ and the associated ($p, r$) pairs are
if the validation set is
instead drawn from some $\ProbDist''$ such that $\TVD(\ProbDist', \ProbDist'') \leq \gamma$. \cyrus{The key idea behind this observation is that you change samples according to a $\gamma$ Bernoulli, so you can then use Bernoulli tail bounds.}
\cyrus{TVD works, anything for earth-movers?}
\cyrus{Is there anything we could cite for “privacy, adversarial robustness, and various notions of strategy-proofness?”}

%[TODO EXPLAIN SQUARE ERROR MINIMIIZATION LOW DIAMETER UNCERTAINTY]
\cyrus{Example: reweighting reviewers AND papers by sector: independently and dependently (i.e., bias towards within-sector samples).} \justin{You bias sampling distribution towards things in own sector, or things with a close similarity score, or could be weighted by the inverse log rank}

\cyrus{This result should be compared to the Gaussian-based ellipsoid, ...}

\justin{Make a comment about how probabilistic models (like Gaussian above) can make wrong modeling assumptions, but this model cannot go wrong.}

\subsection{Compositional Rules}
\label{subsec:compositional}
\justin{We can start here by saying that we can take intersections of uncertainty sets, so we don't have to overly rely on any 1 uncertainty set. And let's probably say that if you've got $\gamma, \delta$ uncertainty sets you can take the intersection of them all.}

We may often have more complicated uncertainty sets than the simple geometries described in the previous sections. For example, we can intersect the constraints of the unit hypercube with an ellipsoidal uncertainty set as described in \Cref{subsec:ellipsoid}. This produces a \emph{truncated ellipsoid}, a common construction that we will see again in 
\Cref{sec:experiments}.

% A common example is a \emph{truncated ellipsoid}, where the ellipsoidal uncertainty sets from the previous section are also bounded by axis-aligned hyperrectangles. %\footnote{Indeed, we will examine the case of the truncated ellipsoid empirically in \Cref{sec:experiments}.}
% Suppose we apply the tail bounds given in \Cref{subsec:ellipsoid} to obtain an ellipsoidal uncertainty set. We still have to respect the constraints of the unit hypercube, which can be considered an axis-aligned,  hyperrectangular $(0, 0)$ uncertainty set $\SBox$. Our final uncertainty set is the intersection of $\SBox$ with the ellipsoid. The results in this section let us  compose multiple uncertainty sets.
% Compositional uncertainty sets appear in many other scenarios, as described throughout \Cref{sec:uncertainty_set_types} and later in this subsection. The following results  allow us to
% consolidate partial knowledge of affinity scores from multiple sources by intersecting  uncertainty sets.
%take such intersections.
%
\begin{restatable}[Uncertainty Set Intersection]{lemma}{thmuncsetint}
\label{lemma:uncsetint}
Suppose each $\mathcal{S}_{i}$ for $i \in [k]$ is a $(\smash{\vec \delta_{i}}, 0)$ uncertainty set. %, and suppose some $\bm{c} \in \R^{k}$ [such that not empty sets].
Then $\mathcal{S}_{\cap} \doteq \bigcap_{i=1}^{k} \mathcal{S}_{i}$ is a
$(\norm{\smash{\vec \delta}}_{1},0)$
%($\norm{\bm{\delta}}_{1},\max_{i \in [k]} \bm{\delta}_{i} - \bm{c}_{i}$)
uncertainty set.
\cyrus{Future work: $\bm{\gamma}_{\max}$ for axis-aligned rectangular uncertainty sets, and in general uncertainty sets that always intersect at obtuse angles (e.g., rectangular + contained ellipsoid).}
\end{restatable}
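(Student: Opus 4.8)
The plan is to prove the lemma by a union bound over the component failure events, after reducing the $\gamma=0$ uncertainty-set condition to a clean membership statement. First I would record the observation that for a \emph{closed} set $\mathcal{S}$, the quantity $\inf_{S \in \mathcal{S}}\norm{S - S^\ast}_1$ vanishes precisely when $S^\ast \in \mathcal{S}$; hence the defining inequality of a $(\delta,0)$ uncertainty set, namely $\Prob(\inf_{S\in\mathcal{S}}\norm{S-S^\ast}_1 > 0) < \delta$, is equivalent to $\Prob(S^\ast \notin \mathcal{S}) < \delta$. Every uncertainty set constructed in this paper (the hyperrectangles of \Cref{thm:box} and the ellipsoids of \Cref{thm:inductiveellipsoidal-tail-bounds} and \Cref{thm:transductiveellipsoidal-tail-bounds}) is closed, so I would invoke this equivalence and read the hypothesis simply as $\Prob(S^\ast \notin \mathcal{S}_i) < \vec\delta_i$ for each $i \in [k]$.

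Next I would use the elementary identity that the complement of an intersection is the union of complements: the event $\{S^\ast \notin \mathcal{S}_\cap\}$ coincides with $\bigcup_{i=1}^{k}\{S^\ast \notin \mathcal{S}_i\}$. Since $\mathcal{S}_\cap$ is an intersection of closed sets it is itself closed, so the membership equivalence applies to $\mathcal{S}_\cap$ as well, letting me translate back to the infimum formulation at the end. Applying the union bound then gives
\[
\Prob\Bigl(\inf_{S \in \mathcal{S}_\cap}\norm{S - S^\ast}_1 > 0\Bigr) = \Prob(S^\ast \notin \mathcal{S}_\cap) \leq \sum_{i=1}^{k} \Prob(S^\ast \notin \mathcal{S}_i) < \sum_{i=1}^{k} \vec\delta_i = \norm{\vec\delta}_1 \enspace,
\]
which is exactly the assertion that $\mathcal{S}_\cap$ is a $(\norm{\vec\delta}_1, 0)$ uncertainty set; the strict final inequality follows because each summand is strictly bounded.

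The computation is trivial, so the only real obstacle is conceptual: the reduction in the first step genuinely requires closedness, and without it the lemma is \emph{false}. For instance, in $\R$ with $S^\ast = 0$, the half-open sets $\mathcal{S}_1 = (0,1]$ and $\mathcal{S}_2 = [-1,0)$ each satisfy $\inf_{S \in \mathcal{S}_i}\abs{S - S^\ast} = 0$ deterministically, hence each is a $(\delta,0)$ uncertainty set for every $\delta$, yet $\mathcal{S}_1 \cap \mathcal{S}_2 = \emptyset$ gives $\inf = +\infty$ with probability one. Thus the main point to get right is to make the closedness assumption explicit (or, equivalently, to restrict attention to the closed constructions used throughout the paper), after which the union bound is routine. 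I would close by remarking that the same union-bound structure extends verbatim to the general $\gamma > 0$ regime by tracking the per-set radii, though that generality is not needed for the present statement.
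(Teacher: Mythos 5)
Your union-bound argument is correct, and it is essentially the proof the paper intends: the appendix defers the proof of \Cref{lemma:uncsetint} to \Cref{subsec:compositional}, where no argument actually appears, so the paper implicitly treats the lemma as an immediate union bound, exactly as you write it. What you add is the closedness discussion, and this is a genuine catch rather than a formality. Under the literal infimum-based condition of \Cref{defn:dg_uncert}, the intersection event $\bigl\{\inf_{S \in \mathcal{S}_{\cap}}\norm{S - S^\ast}_1 > 0\bigr\}$ is \emph{not} contained in the union of the per-set events $\bigl\{\inf_{S \in \mathcal{S}_i}\norm{S - S^\ast}_1 > 0\bigr\}$, so a union bound applied directly to the infimum events is invalid, and your half-open-interval example shows the lemma is literally false for non-closed sets. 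Passing to the membership events $\{S^\ast \notin \mathcal{S}_i\}$ repairs this, and that passage is precisely where closedness (equivalently, attainment of the infimum) is needed. Note that the rest of the paper silently adopts the membership reading anyway: the proofs of \Cref{prop:sandwich} and \Cref{lemma:l1-error-contract} both begin from ``with probability at least $1-\delta$ there exists some $S^\ast_\gamma \in \mathcal{S}$ with $\norm{S^\ast_\gamma - S^\ast}_1 \leq \gamma$,'' which for $\gamma = 0$ is exactly $S^\ast \in \mathcal{S}$. Your explicit hypothesis therefore reconciles the formal definition with how it is used throughout, and since every uncertainty set constructed in the paper (hyperrectangles, spheres, truncated ellipsoids) is closed, nothing downstream is affected.
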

% \begin{proof}
% Suppose some true affinity score matrix $S^\ast$. We know that with probability at most $\vec \delta_i$, $S^\ast$ is not in $\cal S_i$ (or more formally, there is no $S \in \cal S_i$ such that $\norm{S^\ast - S}_1 = 0$). By a union bound, the probability that at least one of these conditions is violated is at most $\sum_{i=1}^k \vec \delta_i$, and hence the probability that none is violated is at least $1 - \sum_{i=1}^k \vec \delta_i$.
% \end{proof}
%
We can also use \Cref{lemma:l1-error-contract} to convert $(\delta, \gamma)$ uncertainty sets to larger $(\delta, 0)$ uncertainty sets. %expand $(\delta, \gamma)$ uncertainty sets in $\mathcal{L}_{1}$ space. % to control $\gamma$. %to accommodate for various levels of $\gamma$ additive error. [todo: clarity of phrasing?]
% (proof in \Cref{appx:uncsetconstr}).
%
\begin{restatable}[$\mathcal{L}_{1}$ Error Terms]{lemma}{lemerrorcontract}
\label{lemma:l1-error-contract}
If $\mathcal{S}$ is a $(\delta, \gamma)$ uncertainty set, then for any $\eta \in [0, \gamma]$, it holds that the \emph{Minkowski sum}
\[
\cal S' \doteq \mathcal{S} + \{ \vec{s} \in \R^{nm} \ssep \norm{\vec{s}}_{1} \leq \eta \}  = \{\vec x + \vec s \ssep \vec x \in \cal S, \norm{\vec s}_1 \leq \eta \}
\]
is a $(\delta, \gamma - \eta)$ uncertainty set.
\iffalse 
NOT VALID:
Similarly, if the $\mathcal{L}_{1}$ radius of $\mathcal{S}$ is $\leq r$, it holds that the 
Minkowski difference
\[
\mathcal{S} - \{ \vec{s} \in \R^{nm} \ssep \norm{\vec{s}}_{1} \leq r \}  = { ({\mathcal{S}}^{C} + (-\{ \vec{s} \in \R^{nm} : \norm{\vec{s}}_{1} \leq r \}))^{C}  }
\]
is a $(\delta, \gamma + r)$ uncertainty set (where $A^{C}$ denotes set complement).

[Proof: $(A - B) + B \subseteq A \subseteq (A + B) - B$]
\fi
\end{restatable}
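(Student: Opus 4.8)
The plan is to reduce the probabilistic claim to a purely geometric fact about $\cal L_1$ distances: enlarging $\cal S$ by an $\cal L_1$ ball of radius $\eta$ shrinks the distance from any point to the set by at most $\eta$. Concretely, writing $d(S^\ast) \doteq \inf_{S \in \cal S}\norm{S - S^\ast}_1$ for the distance from a putative true matrix $S^\ast$ to $\cal S$, and $d'(S^\ast)$ for the analogous distance to $\cal S'$, I would first establish the deterministic inequality $d'(S^\ast) \leq \max\bigl(0,\, d(S^\ast) - \eta\bigr)$ for every $S^\ast \in \R^{nm}$. Granting this, the conclusion follows by event containment: since $\eta \leq \gamma$ implies $\gamma - \eta \geq 0$, whenever $d(S^\ast) \leq \gamma$ we get $d'(S^\ast) \leq \max(0, \gamma - \eta) = \gamma - \eta$, so the bad event $\{d'(S^\ast) > \gamma - \eta\}$ is contained in $\{d(S^\ast) > \gamma\}$, and therefore $\Prob(d'(S^\ast) > \gamma - \eta) \leq \Prob(d(S^\ast) > \gamma) < \delta$ by the assumption that $\cal S$ is a $(\delta, \gamma)$ uncertainty set.

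For the deterministic inequality, I would exhibit a good point of $\cal S' = \cal S + \{\vec s \ssep \norm{\vec s}_1 \leq \eta\}$ directly. Fix $S^\ast$ and $\epsilon > 0$, and choose a near-optimal $\vec x \in \cal S$ with $\norm{\vec x - S^\ast}_1 \leq d(S^\ast) + \epsilon$ (a near-optimizer, since $\cal S$ need not be closed). If $\vec x = S^\ast$ the distance is already zero; otherwise set $t \doteq \min(\eta, \norm{\vec x - S^\ast}_1)/\norm{\vec x - S^\ast}_1 \in [0,1]$ and $\vec s \doteq t\,(S^\ast - \vec x)$. Then $\norm{\vec s}_1 = \min(\eta, \norm{\vec x - S^\ast}_1) \leq \eta$, so $\vec x + \vec s \in \cal S'$, and since $(\vec x + \vec s) - S^\ast = (1-t)(\vec x - S^\ast)$, positive homogeneity of the norm gives $\norm{(\vec x + \vec s) - S^\ast}_1 = (1-t)\norm{\vec x - S^\ast}_1 = \max(0,\, \norm{\vec x - S^\ast}_1 - \eta)$. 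Hence $d'(S^\ast) \leq \max(0,\, d(S^\ast) + \epsilon - \eta)$, and letting $\epsilon \to 0$ yields the claim.

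The only genuine subtlety — and the step I would treat most carefully — is that $\cal S$ is not assumed closed, so I work with $\epsilon$-near-optimal points and pass to the limit rather than invoking an attained minimizer; the remaining computation is a one-line scaling valid in any norm. I would also emphasize that the \emph{direction} of the construction is essential: the analogous statement for a Minkowski difference (shrinking $\cal S$) does not hold in general, so the argument genuinely relies on moving a realized point of $\cal S$ along the segment toward $S^\ast$ to certify membership in the enlarged set $\cal S'$, which is exactly what the straight-segment step accomplishes.
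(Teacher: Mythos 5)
Your proof is correct and follows essentially the same route as the paper's: both certify membership in $\cal S'$ by translating a (near-)optimal point $S \in \cal S$ along the segment toward $S^\ast$ by an $\cal L_1$ step of norm at most $\eta$ (the paper's $S' = S + \eta \frac{S^\ast - S}{\norm{S^\ast - S}_1}$ is exactly your $\vec x + \vec s$), which reduces the distance by exactly $\eta$. The only difference is that you handle non-attainment of the infimum via $\epsilon$-near-optimizers and package the probabilistic step as event containment, whereas the paper asserts outright that some $S \in \cal S$ with $\norm{S - S^\ast}_1 \leq \gamma$ exists on the good event — your version is a minor tightening of the same argument, not a different one.
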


%Note that in terms of the uncertainty set itself, it is strictly more efficient to expand the additive error of each group to $\max_{i} \bm{\gamma}_{i}$, while contracting the uncertainty sets themselves, and

We can apply %both
\cref{lemma:uncsetint,lemma:l1-error-contract}
%in sequence
sequentially
to intersect arbitrary $(\delta, \gamma)$ uncertainty sets. We first expand 
%the uncertainty sets
them
via \cref{lemma:l1-error-contract} to obtain larger %so we have
$(\vec \delta_i, 0)$ uncertainty sets, %over larger regions, 
and we then apply \Cref{lemma:uncsetint} to obtain their intersection.
%
% We shall see via [theorem x], that the final bounds on model performance are strictly more efficient if we reduce the additive error of each set to $\min \bm{\gamma}_{i}$ while expanding their uncertainty sets commensurately.
% This is because, although the final intersected uncertainty set may be larger, the value of $\gamma$ is commensurately smaller, thus the distance of the uncertainty set to the true affinity score vector is smaller.
%
%
% There are also philosophical and computational reasons why one may choose not to apply these results into the most efficient modified set. It can well be the case that the values of $\delta$ and $\gamma$ are not actually known, as they may depend on unknown parameters, or may have been provided as inputs to the algorithm via some heuristic.
% This result is still provides a \emph{garbage in garbage out} guarantee, where these uncertainty sets are combined in some way, and then the final performance of the algorithm, both in terms of the failure probability $\delta$, and the excess error due to uncertainty set misspecification ($\gamma$), depend on the quality of inputs (as is common in the algorithms with predictions literature).
%
%

We can also apply these results to the uncertainty sets previously described in \Cref{sec:uncertainty_set_types}.
For example, the intersection of multiple axis-aligned, hyperrectangular constraints produces an axis-aligned hyperrectangle. This may occur when structural constraints defined by the venue (e.g., hard upper and lower bounds defined based on topic overlap) intersect with per-pair error bounds.
Similarly, we might consider cases with two intersecting ellipsoidal error bounds derived from two different estimators using \cref{thm:inductiveellipsoidal-tail-bounds,thm:transductiveellipsoidal-tail-bounds}. 
This intersection is
%less
uninteresting if the two ellipsoids have the same centroid and one is strictly smaller than the other,
%unhelpful if the two ellipsoids have the same centroid (two unbiased estimators with different accuracies), 
but if these ellipsoids have different centroids (as when the estimators have different biases) their intersection can be quite beneficial.

\cyrus{Note also that while the Minkowski sum and difference operations on $\mathcal{L}_{1}$ balls may or may not have efficient direct representations (exponentially many additional linear or quadratic constraints, differences with hyperplane constraints Translate to hyperplane constraints, and sums of convex constraints produce convex constraints, fit into disciplined CP.
NB: $A + \mathcal{B}_{1,r} \leq A + \mathcal{B}_{\infty,r}$, useful for hyperrectangular constraints $A$. I think $A \subseteq A + \mathcal{B}_{\infty,r} \subseteq B$ for ellipsoids $A$, $B$ where $B$ expands each radius by $r$? 
%for $\mathcal{L}_{1}$ ball $B$, it holds $A + B$
%https://hal.science/hal-02436451/document#:~:text=Ellipsoidal%20approximations%20to%20Minkowski%20sum,space%20of%20their%20shape%20matrices.
}

\section{Robust Reviewer Assignment (\algoname{})}
\label{sec:approx_algo}

We now present a general purpose algorithm for approximately solving the \problemname{} problem over convex uncertainty sets, as long as the adversarial (worst-case) welfare can be computed in polynomial time.
We first show in \Cref{thm:hardness} that \problemname{} is NP-hard in general for convex uncertainty regions of this type. 
% The full proof of \Cref{thm:hardness} can be found in \Cref{appx:solutions}.
% \subsection{Hardness with Unknown $S$}

% We now prove the problem

% \begin{align}
%     \argmax_{A \in \mathcal{A}}
%     \inf_{S \in \mathcal{S}}
%     \USW(A, S)
% \end{align}

% is NP-hard for a convex $\mathcal{S}$.
\begin{restatable}[Hardness of RAU]{theorem}{thmhardness}\label{thm:hardness}
    \problemname{} is NP-hard over a convex uncertainty set $\mathcal{S}$, even for $\mathcal{S}$ with a polynomial-time adversary. In particular, \problemname{} remains NP-hard even when $\mathcal{S}$ is restricted to bounded polytopes formed by intersections of polynomially many halfspaces.
\end{restatable}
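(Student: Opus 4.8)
The plan is to reduce from \textsc{Partition}, exploiting the fact that the \emph{fractional} version of \problemname{} is easy so that all the difficulty is forced into the integrality of the assignment. Concretely, if we relax $\cal A$ to the fractional assignment polytope $\mathrm{conv}(\cal A)$, then $\max_{A}\inf_{S\in\cal S}\USW(A,S)$ becomes a bilinear saddle point over two convex sets, which by a minimax argument equals $\inf_{S\in\cal S}\max_{A}\USW(A,S)$ and is polynomial-time solvable whenever the adversary is (e.g.\ by convex programming). Hence any hardness must come from the gap between optimizing over $\cal A$ and over $\mathrm{conv}(\cal A)$, i.e.\ the integrality gap (cf.\ \Cref{prop:integ_gap_lb}), and the reduction should encode an NP-hard integral decision directly into that gap.

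For the reduction, I would start from a \textsc{Partition} instance: positive integers $a_1,\dots,a_N$ with $\sum_i a_i = 2H$, asking whether some $W\subseteq[N]$ satisfies $\sum_{i\in W}a_i = H$. The first step is a selection gadget whose integer assignments are in bijection with subsets $W$: create papers $p_1,\dots,p_N$ each requiring $k_{p_i}=1$ reviewer, one ``expert'' reviewer $r^\ast$ with load bound $u_{r^\ast}=N$ eligible for every paper, and $N$ ``dummy'' reviewers $d_i$, where $d_i$ is eligible only for $p_i$ (via $\cal C$) with $u_{d_i}=1$. A feasible $A\in\cal A$ then assigns each $p_i$ either to $r^\ast$ (encoding $i\in W$) or to $d_i$ (encoding $i\notin W$), so $\cal A$ realizes every subset $W$.

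The second step encodes \textsc{Partition} as a two-scenario adversary. Define $S^{(1)},S^{(2)}\in[0,1]^{nm}$ by $S^{(1)}_{p_i,r^\ast}=a_i/(2H)$, $S^{(1)}_{p_i,d_i}=0$ and $S^{(2)}_{p_i,r^\ast}=0$, $S^{(2)}_{p_i,d_i}=a_i/(2H)$, with all remaining (never-assigned) entries set to $0$, and let $\cal S \doteq \mathrm{conv}\{S^{(1)},S^{(2)}\}$. Because a linear functional over a segment is minimized at an endpoint, the adversary is trivially polynomial-time and
\[
\inf_{S\in\cal S}\USW(A,S) = \frac{1}{n}\min\!\left(\textstyle\sum_{i\in W}\frac{a_i}{2H},\ \sum_{i\notin W}\frac{a_i}{2H}\right) = \frac{1}{2nH}\min\!\left(\textstyle\sum_{i\in W}a_i,\ 2H-\sum_{i\in W}a_i\right).
\]
Since $\min(x,2H-x)\le H$ with equality iff $x=H$, the robust optimum over $\cal A$ equals $\tfrac{1}{2n}$ exactly when the instance is a \textsc{Yes} instance and is strictly smaller otherwise, so solving \problemname{} decides \textsc{Partition}. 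The fractional optimum, by contrast, is always $\tfrac{1}{2n}$ (split each $p_i$ half to $r^\ast$ and half to $d_i$), confirming the hardness lives in the integrality gap. This single construction establishes both claims at once: $\cal S$ is convex with a trivial adversary, and it is also a bounded polytope in $[0,1]^{nm}$ cut out by only $O(nm)$ halfspaces, namely the $nm-1$ hyperplanes defining the affine hull of the segment together with two halfspaces clipping its endpoints, all with polynomially sized coefficients.

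The routine parts are verifying that the assignment gadget is feasible and bijective and that the affine-hull description of the segment uses polynomially many, polynomially-sized halfspaces while keeping $\cal S\subseteq[0,1]^{nm}$. The main obstacle is the conceptual point isolated at the outset: since the fractional problem is easy, the reduction must arrange that an integral optimum is forced to solve an NP-hard decision, and the two-scenario \textsc{Partition} encoding is precisely what converts the integrality gap into the balance condition $\sum_{i\in W}a_i=H$. Care is needed to ensure the \textsc{Yes}/\textsc{No} separation in the robust objective is \emph{exact} (a clean equality at $\tfrac{1}{2n}$) rather than merely approximate, which is what makes the reduction go through without any gap-amplification.
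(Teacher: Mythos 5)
Your proof is correct, but it takes a genuinely different route from the paper. The paper reduces from maximizing \emph{egalitarian} welfare under reviewer assignment constraints (known to be NP-hard): it keeps the instance $(P,R,\mathcal{A},S)$ untouched and builds the uncertainty set as the convex hull of $n$ matrices $S^{(1)},\dots,S^{(n)}$, where $S^{(p)}$ retains only row $p$ of $S$, so the adversary's minimization selects the worst-off paper and the maximin objective coincides exactly with egalitarian welfare. You instead reduce from \textsc{Partition} via an explicit expert/dummy selection gadget and a \emph{two-point} segment uncertainty set, with the adversary choosing the worse of two scenarios. Each approach buys something: yours is self-contained, isolates the conceptual point that the hardness lives entirely in the integrality gap (the fractional relaxation of your instance is trivially $\tfrac{1}{2n}$, consistent with the paper's \algoname{} solving the relaxation efficiently), and establishes hardness for the minimal possible adversary --- an uncertainty set with only two extreme points, versus the paper's $n$. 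The paper's reduction, on the other hand, needs no gadget, exhibits RAU as a strict generalization of fair (egalitarian) assignment --- a conceptually useful connection --- and inherits whatever hardness strength the egalitarian problem has; since \textsc{Partition} is only weakly NP-hard, your construction leaves open a pseudo-polynomial or FPTAS-style algorithm for the two-scenario case (numbers $a_i$ must be encoded in binary for hardness), whereas the paper's route does not carry that caveat. Both arguments correctly verify the ``bounded polytope with polynomially many halfspaces'' clause: your segment needs the affine-hull hyperplanes plus two clipping halfspaces, while the paper's hull of $n$ points is likewise a polynomially described polytope.
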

% \begin{proof}[Proof Sketch]
% We reduce from the problem of 
% maximizing egalitarian welfare under known affinity scores, which is NP-hard for reviewer assignment \cite{garg2010assigning}. Maximizing egalitarian welfare is equivalent to solving \problemname{} over the convex closure of a set of ``basis matrices'' which have all affinity scores set to zero other than the affinity scores for one paper.
% \end{proof}
 
%Due to this hardness result, we outline an approach to approximate 

\subsection{Robust Reviewer Assignment}

Due to this hardness result, we outline an approach
%present the \algoname{} algorithm 
to approximately solve \problemname{}
%the problem
efficiently for convex uncertainty sets with polynomial-time adversaries. We start by allowing \emph{fractional} (rather than binary) assignments. 
We then apply \emph{supergradient ascent} (analogous to subgradient descent) to approximate $
    \argmax_{A \in \smash{\tilde{\mathcal{A}}}}
    \inf_{S \in \mathcal{S}}
    \USW(A, S)
$,
%maximize the function $\inf_{S \in \mathcal{S}} \USW(A, S)$
%over all fractional assignments $A$ satisfying the assignment constraints.
%In particular, the subgradient ascent algorithm optimizes over a relaxed set of assignment constraints
where $\tilde{\mathcal{A}}$ %, which is simply 
is the convex closure of the feasible set of discrete allocations $\cal A$ from \Cref{sec:problem}. 
% [TODO: more specific reference?]. 
When the supergradient ascent algorithm terminates, we randomly round the assignment to a binary assignment.
% using an extension of the Birkhoff von Neumann theorem \cite{budish2009implementing}. 
% This rounding procedure has been used specifically for rounding reviewer assignments in the past, because it ensures paper coverage, reviewer load bounds, and conflict of interest constraints remain satisfied after rounding \cite{jecmen2020random}. 

% The conflicts of interest $\tilde{\mathcal{A}}_{\COI}$ once again are defined by a set $\mathcal{C} \subseteq (P \times R)$, where $(p, r) \in \mathcal{C}$ implies that $r$ cannot be assigned to $p$. 
% $\tilde{\mathcal{A}}_{\COI} \subseteq [0, 1]^{n \times m}$ consists of all fractional assignments such that if $A \in \tilde{\mathcal{A}}_{\COI}$, then $(p, r) \in \mathcal{C} \implies A_{p,r}=0$. Constraint sets for paper coverage and reviewer load constraints are defined analogously to \eqref{defn:pap_cov} and \eqref{defn:rev_load}.
% \begin{align*}
% \tilde{\mathcal{A}}_P &\doteq
% \Big\{A \in [0, 1]^{n \times m} \ssep \forall p, \sum_{r \in R} A_{p,r} = k_p\Big\} \\
% \tilde{\mathcal{A}}_R &\defeq \Big\{A \in [0, 1]^{n \times m} \ssep \forall r , \sum_{p \in P} A_{p,r} \leq u_r \Big\} \enspace.
% \end{align*}
%

% \begin{wrapfloat}{algorithm}{R}{0.55\textwidth}
%\begin{minipage}{\textwidth}
\begin{algorithm}
    %\scalebox{0.85}{\begin{minipage}{1.176470588\textwidth}
    \scalebox{0.82}{\begin{minipage}{1.219512195\textwidth}
    \begin{algorithmic}[1]
    % \small
        \REQUIRE {Error tolerance $\varepsilon$, %maximum subgradient norm $\lambda$, 
        supergradient norm bound $\lambda$, uncertainty set $\mathcal{S}$, constrained allocation space $\tilde{\mathcal{A}}$, total review load $K = \sum_{p \in P}k_p$ (i.e., the total number of assignments required)}
        \STATE Initialize $S^{(0)} \in \mathcal{S}$ arbitrarily
        \STATE $\displaystyle A^{(0)} \gets \argmax_{A \in \mathcal{A}} \USW(A, S^{(0)})$ \COMMENT{Initialize $A^{(0)}$ to optimize $S^{(0)}$ (via LP reduction)}%
        % \STATE $\hat{A} \gets A^{(0)}$
        % \STATE $\hat{u} \gets \USW(\hat{A}, S^{(0)})$
        % \STATE $\hat{A}, \hat{u} \gets A^{(0)}, -\infty$
        % % \STATE $\lambda \gets \norm{S_{0}}_{2} + radius(\mathcal{S})$
        % \STATE $T, \alpha \gets \Big( \frac{\lambda\sqrt{2K}}{\varepsilon} \Big)^2,  \frac{\varepsilon}{\lambda^2}$
        % \STATE TODO: ALTERNATIVE
        \STATE $\hat{A} \gets  A^{(0)}; \hat{w} \gets -\infty$ \COMMENT{Maintain best allocation $\hat{A}$ and adversarial welfare $\hat{w}$}%
        \STATE $T \gets \big\lceil 2K( \frac{\lambda}{\varepsilon})^2 \big\rceil ; \alpha \gets \frac{\varepsilon}{\lambda^2}$ \COMMENT{Compute sufficient \emph{step count} $T$ and \emph{step size} $\alpha$}%
        \FOR {$t \in \{1, 2, \dots T\}$}
            % Find the worst case S in the feasible region
            \STATE $\displaystyle S^{(t)} \gets \arginf_{S \in \mathcal{S}} \USW(A^{(t-1)}, S)$
            \COMMENT{Adversary selects $S^{(t)}$ from $\cal S$% to minimize $\USW$ over $\cal S$
            }%
            
            % Track the running best
            \IF{$\USW(A^{(t-1)}, S^{(t)}) > \hat{w}$\COMMENT{Update $\hat{A}$ if adversarial welfare beats previous best\!}}%
                \STATE $\hat{A} \gets A^{(t-1)}; \hat{w} \gets \USW(A^{(t-1)}, S^{(t)})$ 
            \ENDIF
            
            % Update $A$ with the gradient
            %\STATE $A^{(t)} \gets A^{(t-1)} + \frac{1}{t} S^{(t)}$ OLD
            %\STATE $A^{(t)} \gets A^{(t-1)} + \alpha \nabla_{A} \min_{S \in \mathcal{S}} \USW(A^{(t-1)}, S)$
            \STATE $\displaystyle A^{(t)} \gets A^{(t-1)} + \alpha \nabla_{A^{(t-1)}} \USW(A^{(t-1)}, S^{(t)})$ 
            \COMMENT{Update allocation with a supergradient step}%
            % Project to feasible set
            \STATE $\displaystyle A^{(t)} \gets \argmin_{A \in \smash{\tilde{\mathcal{A}}}} \lVert A - A^{(t)} \rVert_2$ \COMMENT{$\cal L_2$ project onto feasible allocation set $\tilde{\mathcal{A}}$}
        \ENDFOR
        % \STATE $t^\ast \gets \argmax_{t \in \{1, 2, \dots T\}}\min_{S \in B_{\varepsilon}(S_0)} \USW(A^t, S)$
        \RETURN $\ROUND(\hat{A})$ \COMMENT{Sample integral assignment}
    \end{algorithmic}
    \end{minipage}
    }
   \caption{Robust Reviewer Assignment (\algoname{})}

    \label{alg:ouralg}
\end{algorithm}
%\end{minipage}
% \end{wrapfloat}

In particular, we present \cref{alg:ouralg}, termed \algoname{}. 
% \justin{Use wrapfig or something like it to get Algorithm to be half page and wrap text around it. Wrap algorithmic inside scalebox/minifig and scale it to change font.}\cyrus{I made some progress on this...}
\algoname{} applies an iterative adversarial optimization strategy to the objective.
In each iteration $t$, we take an \emph{adversary step}, which identifies the pessimal $S^{(t)}$ given assignment $A^{(t-1)}$. We then take a \emph{gradient ascent step} from $A^{(t-1)}$ to $A^{(t)}$ assuming the score matrix remains fixed at $S^{(t)}$, followed by a \emph{projection step}, which ensures $A^{(t)}$ remains feasible (i.e., does not violate any constraints on assignments). The gradient ascent step is valid since the gradient $\nabla_A \USW(A, \arginf_{S \in \cal S} \USW(A, S))$ is an element of the supergradient $\nabla_A \inf_{S \in \cal S} \USW(A, S)$. 
% [TODO subgradient or supergradient?]

The number of iterations required to prove convergence depends on the \emph{gradient norm bound} $\lambda$, which is the smallest term such that $\norm{\nabla_A \USW(A, S)}_2 \leq \lambda$ for all $A$ and $S$. We approximate the maximin optimal continuous matrix $\AMMC $  within an error of $\varepsilon$ in number of iterations polynomial in $\lambda$, $\frac{1}{\varepsilon}$, and the total review load $K$. The time complexity of \algoname{} also depends on the \emph{adversarial minimization} and \emph{projection} steps, but so long as these take polynomial time, then so too does \algoname{}. We state the convergence results in \Cref{prop:subgradconv}. The proof applies standard convergence results for subgradient descent \cite{shor1985minimization}.
% , and is shown in detail in \Cref{appx:solutions}.
%
%
% We can now state and prove our convergence results.
%
\begin{restatable}[Supergradient Ascent Efficiency]{proposition}{propsubgradconv}\label{prop:subgradconv}
Let $\lambda$ denote an upper bound on the $\mathcal{L}_2$ norm of the supergradient elements $\nabla_{A} \USW(A, S)$ used in \algoname{}. 
% Let $A^\ast \in \tilde{\mathcal{A}}$ denote the true optimal assignment for the relaxed \problemname{} problem if $S^\ast$ were known. 
The supergradient ascent component of \algoname{} converges to within $\varepsilon$ of the maximin optimal continuous assignment $\AMMC$ in $\lceil 2K ( \frac{\lambda}{\varepsilon} )^{2} \rceil$ iterations. \algoname{} runs in time $\LandauO\big( 2KC\smash{( \frac{\lambda}{\varepsilon} )^{2}} \big)$, where $C$ is the
%time complexity of the adversary and projection steps. 
time cost
 of one %an % single
 adversary and projection step.
\end{restatable}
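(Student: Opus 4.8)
The plan is to recognize the inner loop of \algoname{} as an ordinary projected supergradient ascent on the concave function $g(A) \doteq \inf_{S \in \cal S}\USW(A, S)$ over the convex feasible set $\tilde{\mathcal{A}}$, and then invoke the classical convergence guarantee for the subgradient method with constant step size, choosing the norm and diameter constants so that the two resulting error terms each collapse to $\varepsilon/2$. The runtime bound then follows by simply accounting the per-iteration cost.

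First I would verify the structural facts that license the black-box analysis. Because $\USW(A, S)$ is affine (indeed linear) in $A$ for each fixed $S$, the pointwise infimum $g$ is concave on $\tilde{\mathcal{A}}$. As already noted in the text, for $S^{(t)} = \arginf_{S \in \cal S}\USW(A^{(t-1)}, S)$ the vector $\nabla_{A}\USW(A^{(t-1)}, S^{(t)})$ is an element of the supergradient of $g$ at $A^{(t-1)}$, and by hypothesis its $\mathcal{L}_2$ norm is at most $\lambda$; hence each loop iteration is a genuine projected supergradient ascent step. I would then bound the squared $\mathcal{L}_2$ diameter of $\tilde{\mathcal{A}}$: every feasible $A$ lies in $[0,1]^{n\times m}$ with $\sum_{p,r}A_{p,r}=K$, so for any $A, A' \in \tilde{\mathcal{A}}$ we get $\lVert A - A'\rVert_2^2 \leq \lVert A - A'\rVert_1 \leq 2K$, where the first inequality holds because each coordinate difference lies in $[-1,1]$ and the second because both matrices are nonnegative with coordinate sum $K$. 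In particular $\lVert A^{(0)} - \AMMC\rVert_2^2 \leq 2K$.

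Next I would apply the standard guarantee for projected supergradient ascent with constant step $\alpha$: the best iterate visited satisfies $g(\AMMC) - \max_t g(A^{(t)}) \leq \frac{\lVert A^{(0)} - \AMMC\rVert_2^2}{2\alpha T} + \frac{\alpha \lambda^2}{2}$. Substituting $\alpha = \frac{\varepsilon}{\lambda^2}$ makes the second term exactly $\frac{\varepsilon}{2}$, and substituting $T = \lceil 2K(\frac{\lambda}{\varepsilon})^2\rceil$ together with the diameter bound $2K$ makes the first term at most $\frac{\varepsilon}{2}$, for a total gap of at most $\varepsilon$. The one bookkeeping point is that the guarantee controls the \emph{best} iterate rather than the last, which \algoname{} handles precisely by maintaining $\hat{A}$ and $\hat{w}$: it records the iterate of maximal adversarial welfare $\USW(A^{(t-1)}, S^{(t)}) = g(A^{(t-1)})$, so the returned continuous assignment is exactly the $\argmax_t g(A^{(t)})$ that the bound refers to.

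Finally, the complexity claim follows by counting: the loop runs $T = \lceil 2K(\frac{\lambda}{\varepsilon})^2\rceil$ times, and each iteration is dominated by the adversary step and the projection step, together costing $C$ (the gradient update and welfare comparison are $\LandauO(nm)$ and absorbed), giving total time $\LandauO\big(2KC(\frac{\lambda}{\varepsilon})^2\big)$. I expect the only genuine obstacle to be the constant-chasing in the diameter bound, namely confirming $\lVert A^{(0)} - \AMMC\rVert_2^2 \leq 2K$ (rather than a looser $\LandauO(nm)$ estimate) so that the first error term matches $\frac{\varepsilon}{2}$ under the stated iteration count; everything else is a direct appeal to the classical subgradient convergence result.
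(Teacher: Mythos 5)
Your proposal is correct and follows essentially the same route as the paper's proof: both invoke the classical constant-step-size subgradient bound, bound the squared $\mathcal{L}_2$ distance between feasible assignments by $2K$ via the paper-coverage constraints, and arrive at the identical choices $\alpha = \frac{\varepsilon}{\lambda^2}$ and $T = \lceil 2K(\frac{\lambda}{\varepsilon})^2\rceil$ (the paper optimizes $\alpha$ as a function of $T$ and then solves for $T$, whereas you split the error into two $\frac{\varepsilon}{2}$ terms, but this is the same calculation in a different order). If anything, your diameter argument via $\lVert A - A'\rVert_2^2 \leq \lVert A - A'\rVert_1 \leq 2K$ is slightly more careful than the paper's "differ on at most $2K$ entries" counting, since it handles fractional assignments in $\tilde{\mathcal{A}}$ without appeal to integrality.
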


Although the bound on the number of iterations can be quite large, it proves the convex relaxation of \problemname{} is solvable in polynomial time, as long as the adversary and projection steps can be solved in polynomial time and $\lambda$ is bounded. In addition, the required number of iterations until convergence will typically be much smaller in practice. 
% Many practical solvers also benefit from \emph{warm starts}, or initialization with an approximate solution. Between most iterations of projected subgradient descent, the adversarial score matrix does not change much, and the previous assignment $A^{(t)}$ is a feasible solution that is quite close to the projection $A^{(t+1)}$, thus the amortized cost of these subproblems can be small.

The complexity result improves in the case of (truncated) ellipsoidal uncertainty sets.
The \emph{adversarial minimization} step requires polynomial time under truncated ellipsoidal uncertainty sets, as it is a \emph{linear objective} under \emph{convex quadratic constraints} (and box constraints), which is a \emph{second-order conic program}, and the \emph{projection step} always requires polynomial time, %Ellipsoid: \cite{kozlov1979polynomial}
as it is a \emph{convex quadratic} objective under \emph{linear} constraints (i.e., the assignment constraints $\tilde{\mathcal{A}}$).
% The details of both of these optimization problems are quite sophisticated, but well understood in the theory of convex optimization \cite{boyd2004convex}; for our purposes, it is sufficient to understand that these sub-problems can be efficiently solved, and we need do so only polynomially many times.
The bound $\lambda$ can be difficult to compute in the general case, but we show $\lambda$ is typically well-bounded in the case of truncated ellipsoidal uncertainty sets. 
% We have the following corollary of \Cref{prop:subgradconv}, which uses an upper bound on the subgradient based on the center and radius of the ellipsoid.
% (proof in \Cref{appx:solutions}), which uses an upper bound on the subgradient based on the center and radius of the ellipsoid:

\begin{restatable}[Supergradient Ascent Efficiency under Ellipsoidal Uncertainty]{corollary}{corellipsoidalconvergence}\label{cor:ellipsoidalconvergence}
For a truncated ellipsoidal uncertainty set, the supergradient ascent component of \algoname{} converges to within $\varepsilon$ of the maximin optimal continuous assignment $\AMMC$ in $\LandauO\left(\frac{2Km}{n\varepsilon^2}\right)$ iterations. 
\end{restatable}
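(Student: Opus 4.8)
The plan is to reduce the corollary directly to \Cref{prop:subgradconv} by bounding the supergradient norm $\lambda$ in the truncated ellipsoidal case. \Cref{prop:subgradconv} already establishes that the supergradient ascent component of \algoname{} converges to within $\varepsilon$ of $\AMMC$ in $\lceil 2K(\lambda/\varepsilon)^2\rceil$ iterations, so it suffices to exhibit a uniform bound $\lambda \leq \sqrt{m/n}$ on the $\mathcal{L}_2$ norm of the supergradient elements; substituting this bound then yields the claimed $\LandauO(2Km/(n\varepsilon^2))$ iteration count.

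First I would compute the supergradient elements actually used by \algoname{}. Since $\USW(A,S) = \frac{1}{n}\sum_{p \in P}\sum_{r \in R} A_{p,r}S_{p,r}$ is linear in $A$, its gradient with respect to $A$ is simply $\nabla_A \USW(A,S) = \frac{1}{n}S$ (viewing $S$ in row-major vector form). The supergradient element taken at iteration $t$ is $\nabla_{A^{(t-1)}}\USW(A^{(t-1)}, S^{(t)})$, where $S^{(t)} = \arginf_{S \in \mathcal{S}}\USW(A^{(t-1)},S) \in \mathcal{S}$. Hence it suffices to bound $\frac{1}{n}\norm{S}_F$ uniformly over $S \in \mathcal{S}$.

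Next I would exploit the truncation. Because $\mathcal{S}$ is a \emph{truncated} ellipsoid, it is contained in the unit hypercube $[0,1]^{nm}$, so every $S \in \mathcal{S}$ has entries $S_{p,r} \in [0,1]$. Therefore $\norm{S}_F^2 = \sum_{p,r} S_{p,r}^2 \leq nm$, which gives $\norm{\nabla_A\USW(A,S)}_2 = \frac{1}{n}\norm{S}_F \leq \frac{\sqrt{nm}}{n} = \sqrt{m/n}$. We may thus take $\lambda = \sqrt{m/n}$. Substituting $\lambda^2 = m/n$ into the iteration count of \Cref{prop:subgradconv} yields $\lceil 2K(m/n)/\varepsilon^2 \rceil = \lceil 2Km/(n\varepsilon^2)\rceil = \LandauO(2Km/(n\varepsilon^2))$, as claimed.

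The argument is almost entirely substitution, so I do not expect a serious obstacle; the one genuine step is recognizing that truncation confines $\mathcal{S}$ to the unit hypercube, which is exactly what caps $\norm{S}_F$ by $\sqrt{nm}$ rather than by the (potentially far larger) Frobenius radius of the ambient ellipsoid. Without truncation, $\lambda$ could scale with the ellipsoid's axis lengths and the bound would degrade, so the main subtlety is verifying that the truncation is genuinely invoked and that every adversarially selected $S^{(t)}$ indeed lies in $[0,1]^{nm}$.
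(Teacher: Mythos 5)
Your proof is correct, and it shares the paper's overall frame---exhibit a bound $\lambda$ on the supergradient norm and substitute it into \Cref{prop:subgradconv}---but the way you bound $\lambda$ is genuinely different from the paper's. The paper never invokes the truncation: it bounds $\max_{S \in \mathcal{S}} \norm{S}_2 \leq \norm{\vec{\mu}_{\mathcal{S}}}_2 + q$ by the triangle inequality, where $\vec{\mu}_{\mathcal{S}}$ is the ellipsoid's center and $q$ its radius (computed from the maximum eigenvalue of $\Sigma_{\mathcal{S}}$), and then asserts that both quantities are $\LandauO(\sqrt{nm})$---true for the constructions in the paper, since centers are estimated scores lying in $[0,1]^{nm}$ and the covariances are suitably scaled, but an assumption on the ellipsoid's parameters nonetheless. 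You instead use the truncation itself: $\mathcal{S} \subseteq [0,1]^{nm}$ forces $\norm{S}_F \leq \sqrt{nm}$ for every matrix the adversary can return, giving the explicit constant $\lambda \leq \sqrt{m/n}$ with no hypotheses whatsoever on the ellipsoid's center or axes. Your route is more elementary and more self-contained for the corollary as stated (indeed it applies to the intersection of the unit hypercube with \emph{any} uncertainty set, ellipsoidal or not, so the ellipsoidal structure only matters for the polynomial-time adversary step, not for the iteration count). What the paper's route buys in exchange is a bound that survives without truncation whenever the ellipsoid itself has well-behaved center and radius, which is precisely the ``$\lambda$ is typically well-bounded'' claim made in the main text; your bound would degrade with the ellipsoid's axis lengths in that untruncated setting, exactly as you note.
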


% \justin{Come back through and make some statements about how the cutoff at the boundaries (0 and 1) impacts the optimization. Maybe even just a blanket statement at the beginning saying we'll ignore this but it doesn't cause issues anywhere.}

% \begin{restatable}{theorem}{thmsubgradconv}\label{thm:subgradconv}
% \citet{shor1985minimization} Let $f$ be a convex function defined on $E_n$, which has a bounded set of minimum points $M^\ast$. Let a sequence of positive numbers $\{h_k\}_{k=1}^{\infty}$ satisfy
% \begin{align*}
%     \lim_{k \to \infty} h_k = 0, \sum_{k=1}^{\infty} h_k = + \infty.
% \end{align*}
% Suppose a sequence $\{x_k\}_{k=0}^{\infty}$ is generated according to the formula $x_{k+1}=x_k - h_{k+1}g_f(x_k)$, with $x_0 \in E_n$ an arbitrary starting point. If $\{g_f(x_k)\}_{x=0}^{\infty}$ is bounded, then 

% \begin{align*}
%     \lim_{k \to \infty} \min_{x \in M^\ast} || x_k - x || = 0, \lim_{k \to \infty} f(x_k) = f^\ast.
% \end{align*}
% \end{restatable}

Finally, we can round using the extended Birkhoff von Neumann decomposition sampling algorithm \cite{gandhi2006dependent,budish2009implementing,jecmen2020random}. This sampling algorithm generates an integral sample $\smash{\hat{A}'}$ from the distribution defined by the continuous assignment matrix $\smash{\hat{A}}$. The sample $\smash{\hat{A}'}$ still satisfies the constraints of $\tilde{\mathcal{A}}$, and $\Expect_{A'}[\smash{\hat{A}}] = \hat{A}$. 
% We use an existing implementation \cite{jecmen2020random}, which iteratively pushes flow in random directions around the bipartite graph between reviewers and papers (with edges initially weighted according to $\hat{A}$) until no fractional edges remain. 
The time complexity of this sampling algorithm is $\LandauO(mn(m+n))$, which is typically negligible compared to the complexity of supergradient ascent.

\subsection{Maximin and Integrality Gaps}
\label{subsec:gaps}

\cyrus{Most of this section is implicitly w.h.p., subject to some tail bounds. Clarify this, then proceed as though bounds hold.}

\cyrus{Todo: consider case where $S \in \mathcal{S}$, and also $\norm{\cdot}_{\infty}$-Hausdorff distance, i.e., $\Delta_{\norm{\cdot}_{\infty}}(S, \mathcal{S}) \min \norm{\cdot}_{\infty} \in \mathcal{S} \leq \gamma$}

Aside from 
% the estimation error of affinity scores, encoded by the uncertainty set itself, and 
the optimization error of \algoname{} (\Cref{alg:ouralg}), 
% both of which are rather standard in machine learning contexts, 
there are two more error sources: \emph{maximin error} for working under uncertainty, and \emph{rounding error}.
The integrality gap of \problemname{} can be quite large; similarly, the $\cal L_1$ difference between a rounded assignment $A'$ and a continuous assignment $\tilde{A}$ may be quite large as well. Surprisingly, we show in this section that although the integrality gap  is large, with high probability this does not translate to a large amount of suboptimality in the \emph{true} $\USW$ of the rounded solution $A'$. Intuitively, whenever the maximin optimal continuous solution $\AMMC$ has a high $\cal L_1$ distance from any valid binary integer assignment, %much of
the decisions made during rounding cancel out on average, and have relatively little impact on the true welfare of the assignment. 

\begin{restatable}[$\cal L_1$ Distance to Integral Solution]{proposition}{integ_gap_lb}
\label{prop:integ_gap_lb}
%Given a rounding procedure that preserves the expected value of assignments, the expected $\cal L_1$ distance between the rounded assignment $A'$ and the unrounded assignment $\tilde{A}$ obeys %is expected as
Suppose
an unrounded
%a continuous
assignment $\tilde{A}$ and a randomized rounding $A'$ of $\tilde{A}$
%that preserves expectation, %such that
%i.e.,
such that $\mathbb{E}[A'] = A$.
Then the expected $\cal L_1$ deviation of the assignment due to rounding obeys
\[
\mathbb{E}_{A'}\left[\norm{A' - \smash{\tilde{A}}}_{1} \right] =2\left(\norm{\smash{\tilde{A}}}_1 - \norm{\smash{\tilde{A}}}^2_2\right)
%=2\sum_{p,r} \left( \smash{\tilde{A}}_{pr} - \smash{\smash{\tilde{A}}_{pr}^{2}} \right)
%\leq 2\sum_{p,r} \left( \mathsmaller{\frac{1}{2}} - \abs*{\mathsmaller{\frac{1}{2}} - \smash{\tilde{A}_{pr}}} \right)
%=
\leq nm - 2\norm*{\mathsmaller{\frac{1}{2}} - \smash{\smash{\tilde{A}}}}_{1}
%$ = 2(K - \frac{mk^2}{n})
\enspace.
\]
\end{restatable}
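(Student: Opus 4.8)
The plan is to reduce the claim to a per-coordinate computation, exploiting that $\norm{\cdot}_1$ decomposes as a sum over entries and that expectation is linear. The crucial simplifying observation is that only the \emph{marginal} distribution of each entry $A'_{p,r}$ matters: the correlations induced by the extended Birkhoff--von Neumann rounding are irrelevant, since both the claimed equality and the claimed inequality are statements about a sum of coordinatewise absolute deviations.

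First I would observe that since $A'_{p,r} \in \{0,1\}$ and the rounding is unbiased, $\Expect[A'_{p,r}] = \tilde{A}_{p,r}$, each entry $A'_{p,r}$ is marginally a Bernoulli random variable with parameter $x \defeq \tilde{A}_{p,r} \in [0,1]$. A one-line computation then gives the per-entry expected deviation $\Expect[\abs{A'_{p,r} - x}] = x\cdot(1-x) + (1-x)\cdot x = 2x(1-x)$. Summing over all $nm$ entries and using $\norm{\tilde{A}}_1 = \sum_{p,r}\tilde{A}_{p,r}$ (entries are nonnegative) together with $\norm{\tilde{A}}_2^2 = \sum_{p,r}\tilde{A}_{p,r}^2$ yields the equality $\Expect_{A'}[\norm{A'-\tilde{A}}_1] = 2\bigl(\norm{\tilde{A}}_1 - \norm{\tilde{A}}_2^2\bigr)$.

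For the inequality I would again argue coordinatewise. Writing the right-hand side as $nm - 2\norm{\frac{1}{2} - \tilde{A}}_1 = \sum_{p,r}\bigl(1 - 2\abs{\frac{1}{2} - \tilde{A}_{p,r}}\bigr)$ and using the identity $1 - 2\abs{\frac{1}{2} - x} = 2\min(x,\,1-x)$, it suffices to establish the pointwise bound $2x(1-x) \leq 2\min(x,\,1-x)$ for every $x \in [0,1]$. This follows immediately from $x(1-x) = \min(x,\,1-x)\cdot\max(x,\,1-x) \leq \min(x,\,1-x)$, since $\max(x,\,1-x) \leq 1$ on the unit interval. Summing this pointwise inequality over all entries completes the argument.

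I do not anticipate a genuine obstacle; the proof is essentially a two-step Bernoulli calculation. The one conceptual point worth stating explicitly is that the result holds for \emph{any} unbiased rounding scheme, regardless of the joint distribution of $A'$, precisely because both steps depend only on the marginal Bernoulli structure and on the linearity of expectation applied to the coordinatewise $\ell_1$ norm.
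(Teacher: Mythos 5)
Your proof is correct and complete. There is, in fact, nothing in the paper to compare it against: despite the paper's claim that all proofs appear in the appendix, \Cref{prop:integ_gap_lb} is precisely the result whose proof was omitted, so your argument stands as a self-contained derivation rather than an alternative to one. Both steps check out. Integrality of the rounding plus unbiasedness pins down each marginal: $A'_{p,r}$ is Bernoulli with parameter $x = \tilde{A}_{p,r}$, so $\Expect\bigl[\abs{A'_{p,r} - x}\bigr] = x(1-x) + (1-x)x = 2x(1-x)$, and summing over entries (using nonnegativity of $\tilde{A}$ for $\norm{\tilde{A}}_1 = \sum_{p,r}\tilde{A}_{p,r}$) gives the stated equality. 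For the inequality, your identity $1 - 2\abs{\tfrac{1}{2} - x} = 2\min(x,1-x)$ and the pointwise bound $x(1-x) = \min(x,1-x)\max(x,1-x) \leq \min(x,1-x)$ are both valid on $[0,1]$, and summing coordinatewise yields $nm - 2\norm{\tfrac{1}{2} - \tilde{A}}_1$. Your explicit remark that only the marginals matter is the right point to emphasize: the extended Birkhoff--von Neumann sampling used by \algoname{} produces correlated entries, but since the $\mathcal{L}_1$ norm decomposes coordinatewise and expectation is linear, the proposition holds for any expectation-preserving rounding scheme, which is exactly the generality the paper needs when invoking this result alongside \Cref{thm:gaps}.
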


Although the assignments may need to be rounded quite significantly, \Cref{thm:gaps} shows that the rounded assignment produced by \algoname{} has near-optimal true welfare (in expectation). \oldCref{thm:gaps} also allows for $\varepsilon$ error in the discrete/continuous maximin assignments. For this, we define $A^\varepsilon \in \cal A$, representing any $\varepsilon$-optimal discrete solution to $\problemname{}$ and $\smash{\tilde{A}^\varepsilon} \in \smash{\tilde{\cal A}}$, any $\varepsilon$-optimal continuous solution to $\problemname{}$. $A^\varepsilon$ and $\smash{\tilde{A}^\varepsilon}$ are formally defined by the properties
\iffalse
$
\max_{A \in \cal A}\inf_{S \in \cal S}\USW(A, S) - \inf_{S \in \cal S}\USW(A^\varepsilon, S) \leq \varepsilon$ and 
$
\max_{A \in \tilde{\cal A}}\inf_{S \in \cal S}\USW(A, S) - \inf_{S \in \cal S}\USW(\tilde{A}^\varepsilon, S) \leq \varepsilon$.
\fi
\begin{align*}
\renewcommand{\!}{\hspace{-0.1ex}}
\scalebox{0.993}{\ensuremath{\displaystyle
\max_{A \in \cal A}\!\inf_{S \in \cal S}\!\USW(A, S) \! - \! \inf_{S \in \cal S}\!\USW(A^\varepsilon\!, S) \! \leq \! \varepsilon %\enspace\!,
%\quad
\,\ \& \ 
%\ \ %\& \ 
\max_{A \in \tilde{\cal A}}\!\inf_{S \in \cal S}\!\USW(A, S) \! - \! \inf_{S \in \cal S}\!\USW(\tilde{A}^\varepsilon\!, S) \! \leq \! \varepsilon \!\enspace.
}}
\end{align*}%
\begin{restatable}[Maximin and Integrality Gaps in Welfare]{theorem}{thmgaps}
\label{thm:gaps}
Suppose $\mathcal{S}$ is a $(\delta,\gamma)$ uncertainty set with $\cal L_1$ diameter $L$. Let $A^\varepsilon$ denote an $\varepsilon$-optimal discrete \problemname{} solution, and $\tilde{A}^\varepsilon$ denote an $\varepsilon$-optimal continuous \problemname{} solution.
Let $A'$ denote the random variable that
%is the result of
arises from
applying the randomized rounding procedure $\ROUND$ to $\tilde{A}^\varepsilon$, and assume that $\ROUND$
%has the property that
preserves expectation, i.e., $\Expect_{A'}[A'] = \tilde{A}^\varepsilon$. 
Suppose also that the true affinity scores are $S^\ast$, and denote the optimal solution $\begin{aligned}A^\ast \doteq \smash{\argmax_{A \in \cal A}} \, \USW(A, S^\ast)\end{aligned}$. The following then hold. %Then,
\vspace{.5em}
% [TODO: deltas] With probability at least $1 - \delta$, the following hold:
%
% Suppose also: ($\geq 1 - \eta_{1}$ probability) of $\varepsilon$ optimal continuous solution.
% $\TVD(? - \Expect[round(?)]) \leq \varepsilon_{round}$?
% With probability at least $1 - \delta - \eta_1$, the following hold
%
\cyrus{For a sharper result, define $L \doteq \sup_{A} \sup_{S, S' \in \mathcal{S}} A \cdot (S - S')$.  By H\"older's inequality, $L \leq [l1 diameter]$, and result still holds.
Sharper still is more of a radius, $L \doteq \sup_{A} \sup_{S \in \mathcal{S}} \abs{A \cdot (S - S^{\ast})}$.
But with MM methods, can improve to $L \doteq \sup_{A} \inf_{S \in \mathcal{S}} \sup_{S' \in \mathcal{S}} A \cdot (S - S')$.
Using a fixed affinity score matrix $S$, we would instead have guarantee with $L \doteq \sup_{A} \sup_{S' \in \mathcal{S}} A \cdot (S - S')$.
So maximin is at least as good at being robust within $\mathcal{S}$ as the ``ideal'' fixed scores, without having to compute said ideal?
}
\begin{enumerate}
\item Maximin Gap: $\Prob \bigl( \USW(A^\ast, S^\ast) - \USW(A^\varepsilon, S^\ast) > \varepsilon + \frac{2\gamma + L}{n} \bigr) < \delta$.

% [discrete] 
% How does true welfare of discrete mm opt soln compare to true welfare of discrete true opt soln? "The maximin gap"
\item Expected Regret of \algoname{}: $ \Prob \bigl( \USW(A^\ast, S^\ast) - \Expect_{A'}[\USW(A',S^\ast)] > \varepsilon +\frac{2\gamma + L}{n} \bigr) < \delta$.

% TODO [expectation of rounded] How does true welfare of rounded mm soln (from our algo) compare (in expectation) to discrete true opt soln? "Expected regret from our algo"
\item %Worst-case
Probabilistic Regret of \algoname{}: %With probability at least $1 - \delta'$, $\USW(A^\ast, S^\ast) - \USW(A',S^\ast) \leq \frac{\varepsilon + (2\gamma + L)/n}{\delta'}$. %\cyrus{$... \leq \frac{1}{2}(\norm{A^\varepsilon}_{1} - \norm{A^\varepsilon}_{2}^{2}$)}
$\Prob \bigl( \USW(A^\ast, S^\ast) - \USW(A',S^\ast) > \frac{\varepsilon + (2\gamma + L)/n}{\delta'} \bigr) < \delta' + \delta $.

\end{enumerate}

% \cyrus{Note that $\delta=0$, $\gamma=0$, $\varepsilon=0$ all yield the relevant special cases}
\end{restatable}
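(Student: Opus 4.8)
The plan is to route everything through \Cref{prop:sandwich}, applied not only to discrete assignments but to any $A$ with entries in $[0,1]$: the bound there uses only $\norm{A}_\infty \le 1$, the diameter $L$, and the existence of a $\gamma$-close point in $\cal S$, so it transfers verbatim to the continuous relaxation $\tilde{\cal A}$. Let $E$ denote the good event $\{\inf_{S \in \cal S}\norm{S - S^\ast}_1 \le \gamma\}$; by \Cref{defn:dg_uncert}, $\Prob(E) \ge 1 - \delta$, and on $E$ both sandwich inequalities ($\USW(A,S^\ast) - \frac{L+\gamma}{n} \le \inf_{S \in \cal S}\USW(A,S) \le \USW(A,S^\ast) + \frac{\gamma}{n}$) hold simultaneously for every admissible $A$. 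Throughout I abbreviate $\mu \doteq \varepsilon + \frac{2\gamma + L}{n}$.

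For the Maximin Gap (Part 1) I would chain inequalities on $E$. The upper sandwich bound applied to $A^\ast$ gives $\USW(A^\ast,S^\ast) \le \inf_{S \in \cal S}\USW(A^\ast,S) + \frac{L+\gamma}{n}$; since $A^\ast \in \cal A$ and $A^\varepsilon$ is $\varepsilon$-optimal for the \emph{discrete} maximin objective, $\inf_{S \in \cal S}\USW(A^\ast,S) \le \inf_{S \in \cal S}\USW(A^\varepsilon,S) + \varepsilon$; and the lower sandwich bound applied to $A^\varepsilon$ gives $\inf_{S \in \cal S}\USW(A^\varepsilon,S) \le \USW(A^\varepsilon,S^\ast) + \frac{\gamma}{n}$. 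Composing these three yields $\USW(A^\ast,S^\ast) - \USW(A^\varepsilon,S^\ast) \le \mu$, which holds whenever $E$ does, i.e.\ with probability at least $1-\delta$.

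Part 2 is identical in spirit, with two changes. First, by linearity of $\USW(\cdot,S^\ast)$ in its first argument together with the assumption $\Expect_{A'}[A'] = \tilde{A}^\varepsilon$, I have $\Expect_{A'}[\USW(A',S^\ast)] = \USW(\tilde{A}^\varepsilon,S^\ast)$, so the target quantity is exactly $\USW(A^\ast,S^\ast) - \USW(\tilde{A}^\varepsilon,S^\ast)$. Second, I run the same chain over the \emph{continuous} feasible set: because $A^\ast \in \cal A \subseteq \tilde{\cal A}$, its maximin value is dominated by that of the $\varepsilon$-optimal continuous solution $\tilde{A}^\varepsilon$, and the sandwich bounds apply to both $A^\ast$ and $\tilde{A}^\varepsilon$. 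This again produces a gap of at most $\mu$ on $E$.

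For Part 3 I would apply Markov's inequality to the nonnegative random variable $X \doteq \USW(A^\ast,S^\ast) - \USW(A',S^\ast)$, which is nonnegative because $\ROUND$ keeps $A' \in \cal A$ and $A^\ast$ maximizes true welfare over $\cal A$. Working conditionally on $E$ over the rounding randomness, Part 2 supplies $\Expect_{A'}[X \mid E] \le \mu$, so Markov gives $\Prob_{A'}(X > \mu/\delta' \mid E) \le \delta'$; a union bound with $\Prob(\bar E) \le \delta$ then yields the claimed failure probability $\delta' + \delta$. The one genuinely delicate point is keeping the two randomness sources cleanly separated---the uncertainty-set construction governing $E$ versus the independent rounding governing $A'$---so that the conditional expectation bound from Part 2 feeds correctly into the conditional Markov step before the final union bound; the remainder is bookkeeping with the sandwich inequalities.
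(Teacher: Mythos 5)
Your proposal is correct and follows essentially the same route as the paper's proof: part 1 chains $\gamma$-closeness, the $\mathcal{L}_1$ diameter $L$, and $\varepsilon$-optimality (the paper writes this chain out via an explicit $S^\ast_\gamma$ rather than quoting \Cref{prop:sandwich} as a black box, but the content is identical); part 2 uses linearity plus $\Expect_{A'}[A'] = \tilde{A}^\varepsilon$ and reruns the chain over $\tilde{\cal A}$; part 3 is Markov plus a union bound. Your added care --- noting the sandwich bound only needs $\norm{A}_\infty \le 1$ so it applies to fractional assignments, and separating the rounding randomness from the event $E$ before applying conditional Markov --- makes explicit two steps the paper leaves implicit, but does not change the argument.
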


Since $A'$ is the result of \algoname{}, \Cref{thm:gaps} directly bounds the expected and probabilistic regret of \algoname{}. Note that the distribution for the probabilistic regret is over the randomness of the rounding procedure, while all bounds in \Cref{thm:gaps} are probabilistic with respect to
%the $\delta$ value
$\delta$ of the uncertainty set $\cal S$.

% Returning to the example of the sphere, % from earlier,
% it is clear that if the true affinity scores are also $c$ for each paper-reviewer pair, then the regret of the rounded assignment from $\algoname{}$ is $0$. As the radius of the sphere grows, the true affinities can become more uneven, resulting in potentially larger regret. 
% % This demonstrates the intuition behind \Cref{thm:gaps}. 
% If there are many fractional entries in the continuous assignment, it is because there are many paper-reviewer pairs with nearly equivalent affinity scores.

\cyrus{More discussion:
MM gap fundamentally matters, integrality gap only matters for deterministic guarantees?}

\section{Experiments}
\label{sec:experiments}

% First, we compare the runtime, welfare, and constraint-related statistics for the AAAI formulation of the problem (an ILP) vs. our formulation of the problem without uncertainty. We hope to show that using linear soft constraints makes the problem much easier to solve without significant degradation in the welfare or constraints. We will need access to information about reviewer geographic location, seniority, the co-authorship graph, bids, and TPMS/ACL scores. 
We create uncertainty sets using five years of ICLR data, by constructing an asymmetric multivariate Gaussian and taking a confidence interval (as outlined by the example in \eqref{eqn:normal_dist}). We compare the adversarial and average-case performance of our approach to the adversarial and average-case performance of four commonly used baselines. We then demonstrate the importance of optimizing for the adversarial case. 
Code and data are publicly available on Github.\footnote{\url{https://github.com/justinpayan/RAU}}

\subsection{Baseline Comparison}
\label{subsec:comparisontobaseline}

\begin{table*}
      \centering
        \caption{Adversarial and average welfare (mean $\pm$ standard deviation) for  the na{\"i}ve \LP{}, FairFlow, PeerReview4All, FairSequence, and \algoname{} methods
        %run
        on %$5$
        five
        %recent
        ICLR conferences.
        Welfare is scaled by $100$ for ease of comparison. %We multiply welfare by $100$ to reduce space.
        % Means and standard deviations for each entry are reported over $100$ repetitions of sampling $60\%$ of reviewers and papers for each conference year. 
        Adversarial welfare is consistently highest (\textbf{bold}) using \algoname{}, except for 2018, which is within one standard deviation. 
        % , where the \LP{} is slightly better (due to rounding error, and values are within one standard deviation of each other). 
        % We also report the average case for both approaches, simulated by sampling 100 values from $\mathcal{N}(\vec \mu, \Sigma)$ for each conference.
        }
        \renewcommand{\toprule}{\hline}
        \renewcommand{\midrule}{\hline}
        \renewcommand{\bottomrule}{\hline}
        \setlength{\tabcolsep}{0.4em} % for the horizontal padding
        \renewcommand{\arraystretch}{0.93}% for the vertical padding
        %\setlength{\columnseprule}{11pt}
        %\setlength{\arrayrulewidth}{0.5mm}
        %\rowcolors{5}{white!96!orange}{white}
        \rowcolors{5}{white!82!gray}{white}
        \small

    \small
    \setlength{\tabcolsep}{0.2em}
    \renewcommand{\,}{\hspace{0.333pt}}
    \newcommand{\tns}{\hspace{-0.2pt}}
    \newcommand{\pmsmall}{\raisebox{0.8pt}{\ensuremath{\mathsmaller{\,\pm\,}}}}
    \begin{tabular}{ccc @{\hspace{3.75pt}}|@{\hspace{3.75pt}} ccccc @{\hspace{3.75pt}}|@{\hspace{3.75pt}} ccccc}
    %\begin{tabularx}{\linewidth}{ccc|rrrr}
        \toprule
            & & & & & & & & & & & & \\[-2.0ex] %Tiny extra spacing
            \multirow{2}{*}{\!\!\scalebox{0.95}[0.99]{\textbf{Year}}\!\!\tns\tns} & \multirow{2}{*}{\!\textbf{m}\!} & \multirow{2}{*}{\!\textbf{n}\!} & & \multicolumn{3}{c}{\hspace{-1cm}\textbf{Adversarial USW $\bm{\cdot}$ $\bm{100}$}\hspace{-1cm}\null} & & \multicolumn{5}{c}{\textbf{Average USW $\bm{\cdot}$ $\bm{100}$}} \\[0.75pt]
             & & & \bf LP & \bf FF & \!\scalebox{0.92}[0.98]{\bf PR4A}\! & \bf FS & \!\scalebox{0.92}[0.98]{\bf \algoname{}}\! & \bf LP & \bf FF & \!\scalebox{0.92}[0.98]{\bf PR4A}\! & \bf FS & \bf \!\!\scalebox{0.92}[0.98]{\algoname{}}\!\! \\
            \toprule
            & & & & & & & & & & & & \\[-2.3ex] %Tiny extra spacing
            \scalebox{0.94}[0.98]{$2018$} & $1657$ & $\hphantom{1}546$ & $\bm{17} \pmsmall 3$ & $7 \pmsmall 3$ & $\bm{17} \pmsmall 3$ & $16 \pmsmall 3$ & $16 \pmsmall 3$ & $\bm{179} \pmsmall 2$ & \!\scalebox{0.915}[0.96]{$134 \tns \pmsmall \tns 12$}\! & $177 \pmsmall 2$ & $177 \pmsmall 2$ & $160 \pmsmall 4$ \\  
            \scalebox{0.9}[0.98]{$2019$} & $2620$ & $\hphantom{1}851$ & $22 \pmsmall 2$ & $12 \pmsmall 2$ & $22 \pmsmall 2$ & $22 \pmsmall 2$ & $\bm{27} \pmsmall 3$ & $\bm{184} \pmsmall 1$ &  $139 \pmsmall 9$ & $\bm{184} \pmsmall 1$ & $183 \pmsmall 1$ & $161 \pmsmall 3$ \\
            \scalebox{0.94}[0.98]{$2020$} & $4123$ & $1327$ & $17 \pmsmall 2$ & $11 \pmsmall 2$ & $18 \pmsmall 2$ & $17 \pmsmall 2$ &  $\bm{23} \pmsmall 2$ & $\bm{187} \pmsmall 1$ & $158 \pmsmall 8$ & $\bm{187} \pmsmall 1$ & $186 \pmsmall 1$ & $166 \pmsmall 5$ \\

            \scalebox{0.94}[0.98]{$2021$} & $4662$ & $1557$ & $23 \pmsmall 2$ & $18 \pmsmall 2$ & $23 \pmsmall 2$ & $23 \pmsmall 2$ & $\bm{33} \pmsmall 3$ & $\bm{192} \pmsmall 1$ & $177 \pmsmall 2$ & $\bm{192} \pmsmall 1$ & $191 \pmsmall 1$ & $174 \pmsmall 6$ \\
            \scalebox{0.94}[0.98]{$2022$} & $5023$ & $1576$ & $28 \pmsmall 2$ & $23 \pmsmall 2$ & $28 \pmsmall 2$ & $27 \pmsmall 2$ & $\bm{38} \pmsmall 2$ & $\bm{191} \pmsmall 1$ & $177 \pmsmall 1$ & $190 \pmsmall 1$ & $190 \pmsmall 1$ & $172 \pmsmall 3$ \\
            \bottomrule
            \end{tabular} \\
\label{tab:iclr}
\end{table*}
% $2018$ & $0.072 \pm 0.027$ & $0.166 \pm 0.028$ & $0.159 \pm 0.027$ & $1.343 \pm 0.117$ & $1.772 \pm 0.017$ & $1.769 \pm 0.022$ \\
% \input{../tables/iclr_baselines}

We examine the ability of our approach to robustly maximize utilitarian welfare in the case of \emph{truncated} ellipsoidal uncertainty sets $\mathcal{S}$. This experiment is meant to mimic real-world conference scenarios, and thus we do not assume access to a ground truth affinity matrix. Consequently, we can only compare adversarial and average-case welfare for \algoname{} vs. our baselines, but not true welfare.

We use the OpenReview API
% \footnote{Documentation at \url{https://openreview-py.readthedocs.io/en/latest/index.html}.} 
to collect all papers submitted (both accepted and rejected) to  five recent iterations of ICLR (2018--2022).
% \footnote{ICLR 2018 can be accessed at \url{https://openreview.net/group?id=ICLR.cc/2018/Conference} to view an example of the information available.} 
Following recent work, % \cite{xu2019strategyproof},
we use the pool of authors for each year as the reviewer pool, since we do not have access to the true reviewer identities for these conferences. The number of reviewers and papers for each
%of the five conference years
conference year
is shown in \Cref{tab:iclr}.

For each author in each year, we collect the multiset of keywords from papers the author submitted to ICLR in the current or previous years. 
We then follow a procedure similar to that of AAAI 2021 \cite{leyton2022matching} to convert keywords into a mean  vector $\vec \mu \in \R^{nm}$, and we also construct a covariance matrix $\Sigma \in \R_{\geq 0}^{nm \times nm}$ for paper-reviewer affinity scores. 
% We represent each reviewer $r$ using a vector $\vec r \in \R^V$ and each paper $p$ using a vector $\vec p \in \{0, 1\}^V$, where $V$ is the total number of unique keywords from that year and previous years. We then take the Hadamard product $\vec r \circ \vec p$ and apply an exponential decay to the terms in sorted order. If we let $M_{\vec p}$ and $M_{\vec r}$ represent the number of nonzero entries in $\vec p$ and $\vec r$, we define $\Sigma \in \R^{nm \times nm}$ as a diagonal matrix with entries $\Sigma_{pm+r, pm+r} = ( M_{\vec p}M_{\vec r})^{-2}$.
Vector $\vec p$ is set so  $\vec p_i$ is an indicator for keyword $i$ on paper $p$. Vector $\vec r$ is initially set so  $\vec r_i$ is the number of times the keyword $i$ appears on a paper written by that reviewer in this or previous years' conferences. We then modify the values (but not the ordering) of $\vec r$ such that the minimum non-zero value is $0.2$, the maximum value is $1$, and the remaining non-zero values are evenly-spaced between $0.2$ and $1$. 
% We then create a mean $\vec \mu \in \R^{nm}$ and covariance matrix $\Sigma \in \R_{\geq 0}^{nm \times nm}$. 
Let $\vec \lambda \in \R^V$ be such that $\vec \lambda_i = \left(\frac12\right)^{i-1}$. $\Sorted$ represents the function that sorts values of a vector in decreasing order, $M_{\vec p}$ and $M_{\vec r}$ denote the number of non-zero entries in $\vec p$ and $\vec r$ respectively, and $Z = \sum_{i=1}^{M_{\vec p}} \left(\frac12\right)^{i-1}$. 
We set $\vec \mu_{pr} = \frac{\vec \lambda \cdot \Sorted(\vec p \circ \vec r)}{Z}$, $\Sigma_{pm+r,pm+r} = (M_{\vec p}M_{\vec r})^{-2}$, and all off-diagonal entries of $\Sigma$ are $0$. 
This procedure  was chosen to roughly mirror the procedure used by AAAI 2021.
% Further details are included in the full version of the paper. 
For each year of ICLR, we set the uncertainty set $\mathcal{S}$ for robust optimization to be the $95\%$ confidence interval for the distribution $\mathcal{N}(\vec \mu, \Sigma)$ (as in \eqref{eqn:normal_dist}), intersected with the unit hypercube (\Cref{lemma:uncsetint}).

For each year of ICLR, we sample without replacement $60\%$ of the reviewers and $60\%$ of the papers $100$ times, to produce more data for statistical robustness of our experiments. 
We assume that all papers require $3$ reviews and all reviewers can review up to $6$ papers. %, (as in recent work \cite{xu2019strategyproof}).
There are no conflicts of interest. 
We then run \algoname{} using our calculated confidence region $\mathcal{S}$, and compare against the assignment given by $\argmax_{A \in \mathcal{A}} \USW(A, \vec \mu)$ (the ``\LP{}'' solution); that is, the n{a\"i}ve solution that optimizes for the mean score. We also compare against three  baselines commonly used for real conferences, FairFlow \cite{kobren2019paper}, PeerReview4All \cite{stelmakh2019peerreview4all}, and FairSequence \cite{payan2022order}.
% We emphasize again, that LP is not the \emph{true} optimal solution; it simply computes an optimal assignment for score values in $\vec \mu$.
The results of these experiments are shown in \Cref{tab:iclr}.  All baselines lag far behind \algoname{} in adversarial performance.

\subsection{The Importance of Adversarial Analysis}
\label{subsec:worstcase}

\begin{figure}[t]
    \centering
    \!\includegraphics[width=.495\textwidth]{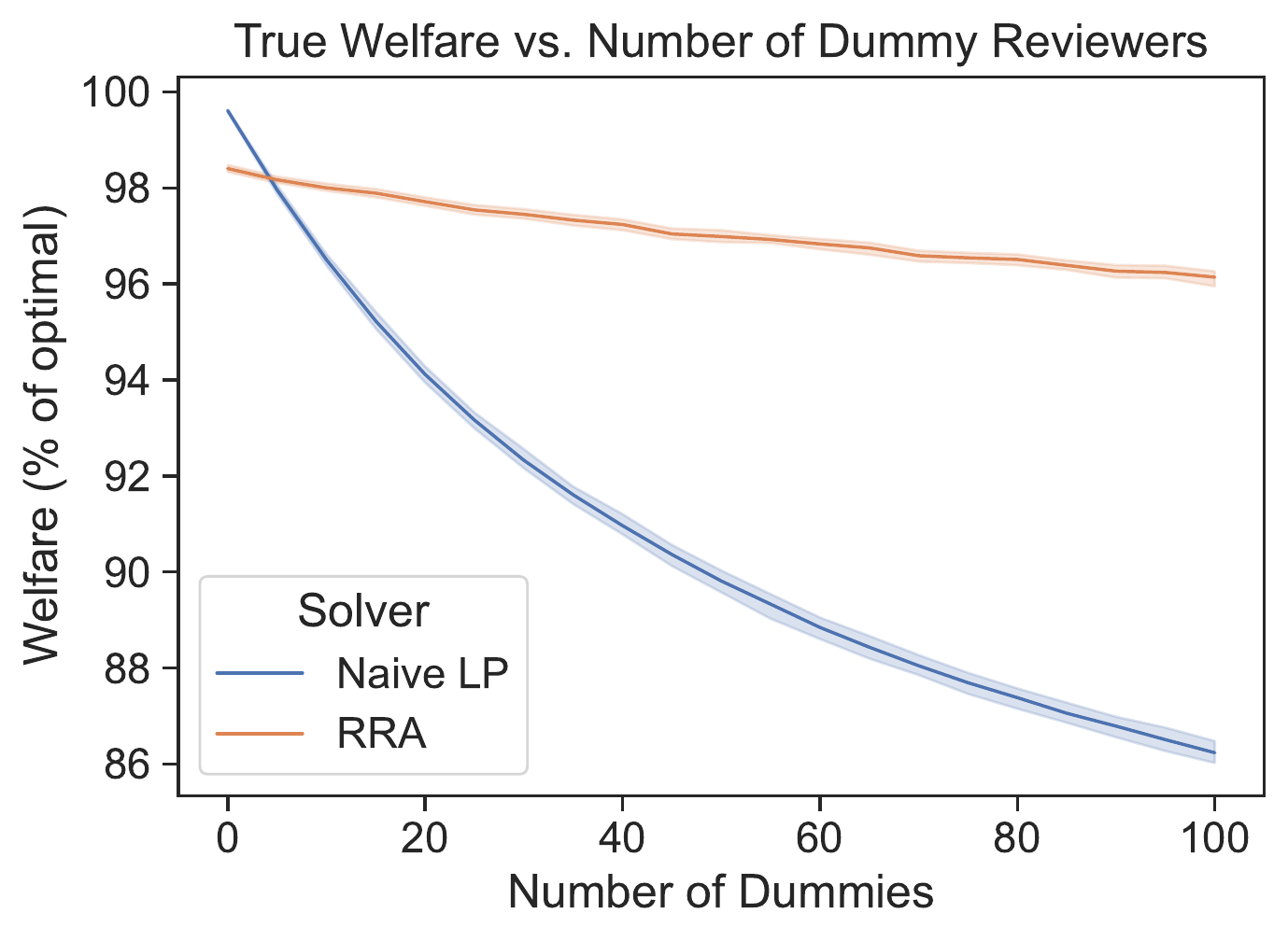}\hfill%\ \hfill\ %
    \includegraphics[width=.495\textwidth]{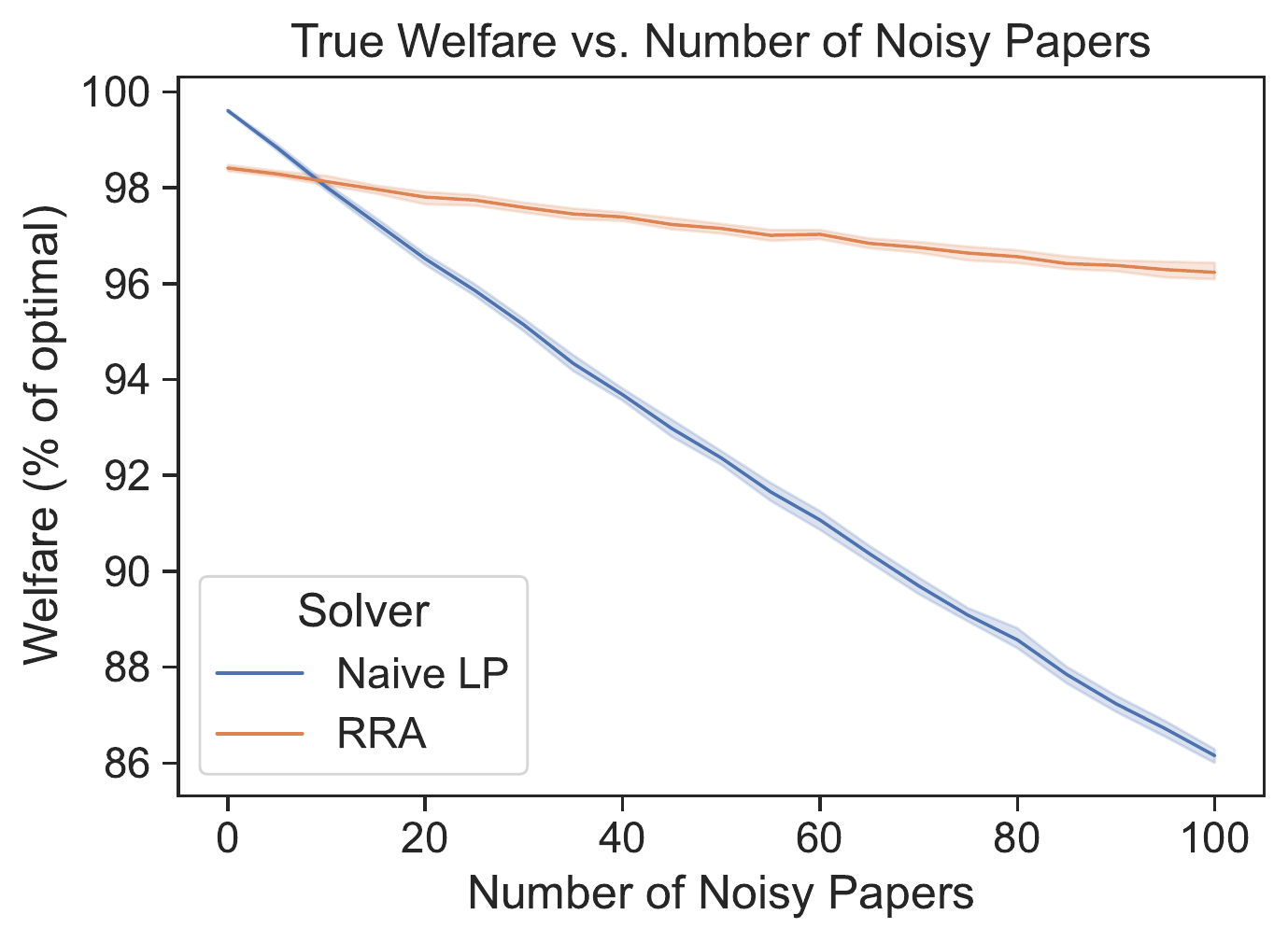}

    \caption{Left: True welfare (as percentage of optimal, when $S^\ast$ is known) of na{\"i}ve \LP{} approach vs. \algoname{} on MIDL 2018 dataset with increasing number of ``dummy'' reviewers. Dummy reviewers have low true affinity for all papers, but higher noise in their estimated affinity scores. The true welfare (which is not known to the conference organizers) decreases sharply for the na{\"i}ve \LP{} formulation as we add more dummy reviewers, but \algoname{} maintains high true welfare. Plot includes min, max, and average for $100$ repetitions at each number of dummy reviewers. Right: True welfare on MIDL 2018 dataset with increasing number of noisy papers. Noisy papers have the $20^{th}$ to $30^{th}$ best reviewers' affinities overestimated by $0.3$. Once again, \algoname{} is robust to this systematic overestimation, while \LP{} degrades poorly. }
        \label{fig:welfarevaryingdummies}
    % \begin{minipage}{0.49\textwidth}
    % \vspace{-3.9cm}
    % \caption{True welfare (as %percentage
    % \% of optimal with known $S^\ast$) of na{\"i}ve \LP{} %approach vs.
    % and \algoname{} on MIDL 2018 %dataset
    % %with increasing
    % vs.\ 
    % number
    % %\# 
    % of
    % ``dummy''
    % %dummy
    % reviewers.
    % %Dummy reviewers
    % Dummies
    % have low true affinity for all papers, but high %er
    % noise in %their
    % estimated affinity scores.
    % %The true
    % True welfare (which is in practice unknown% not known to the conference organizers
    % ) decreases sharply for the na{\"i}ve \LP{} formulation as we add %more
    % %dummy reviewers,
    % dummies,
    % while \algoname{} welfare only drops slightly.
    % %but \algoname{} maintains high true welfare.
    % Plot includes min, max, and mean welfare over $100$
    % %repetitions
    % runs
    % for each
    % %number of dummy reviewers.
    % %\#
    % number of dummies.%
    % \justin{Remake figures without legends and with labels directly on lines. Remove y-axis label and tick marks from 2nd plot.}\cyrus{Use subcaption?}}
    % \label{fig:welfarevaryingdummies}
    % \end{minipage}\!
    \vspace*{-8pt}
\end{figure}

To demonstrate the importance of optimizing for the adversarial case, we perform an experiment where we simulate the effect of adding many low-quality, high-variance reviewers to the dataset, and a similar experiment where we systematically overestimate some of the affinities for a subset of papers. 
These reviewers can often appear in modern conference reviewing. 
After being granted limited access to a recent, large AI conference ($m, n \approx 4000$), we manually inspected $100$ random reviewers. $12$ of these reviewers were PhD students, and $40$ of them submitted $10$ or fewer bids. 
PhD students with few papers will tend to have higher variance in true expertise relative to document-based similarity scores derived from their prior work, while reviewers with fewer bids will tend to have higher variance in true interest relative to their bids. 
Intuitively, the systematic overestimation of papers' affinities can occur when paper authors submit a keyword that does not have the exact meaning they expected or is listed by reviewers in the wrong subcommunity. 
We believe that this occurs quite often, and leave a rigorous examination of this important question to future work.

We use the dataset of MIDL 2018, which is publicly available and has been used to validate many reviewer assignment algorithms in the past \cite{kobren2019paper, stelmakh2019peerreview4all, payan2022order}. 
This dataset contains an $n \times m$ matrix of affinity scores that were used to assign reviewers to papers during the conference. 
We use this dataset for this experiment since it is small enough to run a large number of experiments with different settings ($n=118$ papers and $m=177$ reviewers). 
We assume the true affinity of the $177$ original reviewers for each paper is equal to the affinity score present in the public dataset, but is \emph{noisily estimated}. 
Thus, we assume that the conference organizers have access to estimated affinity scores which are equal to the original affinity scores plus normally distributed noise, $\cal N(0, .02)$. 
We also assume the conference organizers know that the estimation error is distributed according to $\cal N(0, .02)$ for each paper-reviewer pair. We also add a number of ``dummy'' reviewers to the dataset. 
For each dummy reviewer, we set the true affinity of that reviewer to be $.1$ for all papers, and sample the estimated affinity from $\cal N(.1, .15)$. 
As with the original reviewers, we assume the conference organizers know the estimation error is distributed according to $\cal N(0, .15)$ for each paper-reviewer pair. 
This simple setup implies a multivariate Gaussian distribution over the true affinity scores, which are unknown to the conference organizers. 
We take a $95\%$ confidence interval of this distribution and intersect it with the unit hypercube to define a truncated-ellipsoidal uncertainty set. 
We then assign reviewers using the na{\"i}ve \LP{} and \algoname{}, and compare the \emph{true}, but unknown, welfare for each approach. 

For the setting with noisy papers, we take a subset of papers and identify the $20^{\textrm{th}}$ to $30^{\textrm{th}}$ (non-inclusive) ranked reviewers for each paper in decreasing order of affinity. 
For each paper, we add $.3$ to the estimated affinity for those $10$ reviewers. 
We then assume that conference organizers estimate the standard deviation of these paper-reviewer pairs to be $.15$, while the remaining standard deviations are estimated at $.02$.

The results are shown in \Cref{fig:welfarevaryingdummies}. 
We report the welfare of na{\"i}ve \LP{} and \algoname{} as a percentage of optimal, averaged over $100$ runs per number of dummy reviewers/papers. We also report the minimum and maximum over all $100$ runs for each setting. As we add more dummy reviewers or noisy papers, the true welfare of the na{\"i}ve \LP{} approach drops by up to roughly $15\%$, while \algoname{} maintains high true welfare.

\section{Extensions to \algoname{}}
\label{sec:extensions}

Our approach is potentially much more general than the cases explored so far in this paper. This approach can incorporate modifications to the objective function, such as incorporation of additional soft constraints. 
% Some examples of soft constraints include: reviewers on one paper should not come from the same geographical region or university, each paper should receive at least one experienced reviewer, reviewers on the same paper should have high co-authorship graph distance, and two authors bidding positively on each others' papers should not review each others' papers \cite{leyton2022matching}. 
We might also consider imposing penalties for unfair outcomes \cite{payan2022order,kobren2019paper,stelmakh2019peerreview4all}, or penalties to discourage manipulative bidding behavior \cite{jecmen2020random, jecmen2022tradeoffs, jecmen2022dataset}. An orthogonal extension is to directly modify the \emph{welfare function}. However, it seems likely that the adversarial nature of our task precludes the use of more sophisticated welfare functions, such as the power-mean welfare, which would otherwise be amenable to maximization \cite{cousins2021axiomatic}.

An orthogonal extension is to directly modify the \emph{welfare function}, which can be used to tune the priorities of the assignment system, for instance more fairness (i.e., group-egalitarian \cite{aziz2023group}, which is the \emph{minimum average utility} of papers in any conference track or subject area) or to consider additional context, such as weighting the importance of conference track, journal track, and workshop track papers with a weighted USW function. We show in Table \ref{tab:iclr} that \algoname{} suffers a reduction in average-case $\USW$ in exchange for greatly improved worst-case $\USW$. It would be fairly straightforward to modify the objective to linearly interpolate between the worst-case and average-case $\USW$ (when the expected value of $S$ is known).

In general, fair welfare functions are concave, and thus the concave optimization methods of this work can be directly applied. However, we need to be able to compute and bound their gradients and modify \algoname{} accordingly, and it may be difficult to solve the adversarial minimization problem for arbitrary welfare functions.
Weighted utilitarian social welfare functions are a straightforward extension that cause essentially no difficulties, and group-egalitarian welfare can be adversarially optimized by separately minimizing for each group, and then taking the score matrix with the smallest group-average (a strategy adopted in the context of fair machine learning under adversarial uncertainty \cite{dong2022decentering}).
However, apart from such convenient special cases, it seems likely that the adversarial nature of our task precludes the use of more sophisticated welfare functions, such as the power-mean welfare, which would otherwise be amenable to maximization \cite{cousins2021axiomatic}.

\section{Conclusion}
\label{sec:conclusion}

Uncertainty is endemic to reviewer assignment.
Bias in bids due to partial information, prediction error in text similarity scores, and keyword terminology misalignment all result in inaccurate estimates of reviewers' abilities. Furthermore, fundamental uncertainty about the quality of future reviews can be quantified, but never fully resolved. We treat uncertainty as a first-class citizen, formulating the problem of assigning reviewers under uncertainty as \problemname{} and addressing it under broad conditions with \algoname{}.

We define and explore the concept of uncertainty sets for reviewer assignment; our theory and examples demonstrate the flexibility of uncertainty sets to model real-world uncertainty-aware reviewer assignment workflows.
Some special cases of uncertainty sets (singleton, hyperrectangular, or spherical) reduce to problems without uncertainty, which can all be solved via linear programming. The general \problemname{} problem is NP-hard, and our \algoname{} algorithm provides approximate solutions with error guarantees dependent on %parameters of the input
the uncertainty set.
%Our algorithm \algoname{} provides a general solution to \problemname{}.

Because \algoname{} optimizes against an uncertainty-aware adversary, we avoid assigning paper-reviewer pairs with %large amounts of
high uncertainty. 
% Thus our approach broadly encourages assigning reviewers to papers for which we are certain the affinity is high.
This method of accounting for uncertainty yields 
% significantly different 
solutions that are more robust than optimizing with pointwise affinity score estimates. 
% However these limited circumstances generally do not apply to most conference setups.
%In general, accounting for uncertainty yields significantly different solutions that are more robust than optimizing to pointwise affinity score estimates.
%[Todo: more summary / key points: efficient algorithm, with minimax error analysis due to uncertainty and rounding error]
% We hope to obtain bids, document-based similarity measures, keywords, and reviewer and paper identities from a conference in the future, so we can have experts compare the quality of assignments produced by \algoname{} with the baselines on a real dataset.
We hope this paper serves as a call to further investigate affinity score computation, ensuring affinities correlate with downstream review quality and incorporate the inherent uncertainty present in peer review.
\medskip
\noindent\textbf{Acknowledgments} \quad
    Thanks to OpenReview, Vignesh Viswanathan, Elita Lobo, Chang Zeng, Nihar Shah, and  the reviewers at GAIW, IJCAI, and SAGT for feedback. Cyrus Cousins acknowledges postdoctoral fellowship support from the Center for Data Science at UMass Amherst. This work was performed  using high performance computing equipment obtained under a grant from the Collaborative R\&D Fund managed by the Massachusetts Technology Collaborative.

%% The file named.bst is a bibliography style file for BibTeX 0.99c
%\bibliographystyle{splncs04}
%\bibliographystyle{plainnat}
%\bibliographystyle{splncsnat}
\bibliographystyle{splncs04}

%\setlength{\bibhang}{0pt} %No hanging indent on bib
%\bibhang=0pt

{
\bibliography{abbshort, main}

\begin{thebibliography}{10}
\providecommand{\url}[1]{\texttt{#1}}
\providecommand{\urlprefix}{URL }
\providecommand{\doi}[1]{https://doi.org/#1}

\bibitem{aziz2023group}
Aziz, H., Micha, E., Shah, N.: Group fairness in peer review. In: Proc. of the
  22nd AAMAS (2023)

\bibitem{bennett1962probability}
Bennett, G.: Probability inequalities for the sum of independent random
  variables. Journal of the American Statistical Association  \textbf{57}(297),
   33--45 (1962)

\bibitem{budish2009implementing}
Budish, E., Che, Y.K., Kojima, F., Milgrom, P.: Implementing random
  assignments: A generalization of the {B}irkhoff-von {N}eumann theorem. In:
  Cowles Summer Conference (2009)

\bibitem{cai2016matrix}
Cai, T.T., Zhou, W.X.: Matrix completion via max-norm constrained optimization.
  Electronic Journal of Statistics  \textbf{10}(1),  1493--1525 (2016)

\bibitem{charlin2013toronto}
Charlin, L., Zemel, R.: The {T}oronto paper matching system: an automated
  paper-reviewer assignment system. In: Proc. of the 2013 ICML Workshop on Peer
  Reviewing and Publishing Models (2013)

\bibitem{cousins2021axiomatic}
Cousins, C.: An axiomatic theory of provably-fair welfare-centric machine
  learning. In: Proc. of the 35th NeurIPS. pp. 16610--16621 (2021)

\bibitem{dong2022decentering}
Dong, E., Cousins, C.: Decentering imputation: Fair learning at the margins of
  demographics. In: Queer in AI Workshop @ ICML (2022)

\bibitem{fang2018max}
Fang, E.X., Liu, H., Toh, K.C., Zhou, W.X.: Max-norm optimization for robust
  matrix recovery. Mathematical Programming  \textbf{167}(1),  5--35 (2018)

\bibitem{pmlr-v124-fiez20a}
Fiez, T., Shah, N., Ratliff, L.: A {SUPER*} algorithm to optimize paper bidding
  in peer review. In: Proc. of the 36th UAI (2020)

\bibitem{gandhi2006dependent}
Gandhi, R., Khuller, S., Parthasarathy, S., Srinivasan, A.: Dependent rounding
  and its applications to approximation algorithms. Journal of the ACM (JACM)
  \textbf{53}(3),  324--360 (2006)

\bibitem{garg2010assigning}
Garg, N., Kavitha, T., Kumar, A., Mehlhorn, K., Mestre, J.: Assigning papers to
  referees. Algorithmica  \textbf{58}(1),  119--136 (2010)

\bibitem{hoeffding1963probability}
Hoeffding, W.: Probability inequalities for sums of bounded random variables.
  Journal of the American Statistical Association  \textbf{58}(301),  13--30
  (1963)

\bibitem{jecmen2022tradeoffs}
Jecmen, S., Shah, N.B., Fang, F., Conitzer, V.: Tradeoffs in preventing
  manipulation in paper bidding for reviewer assignment. In: Proc. of the 2022
  ICLR Workshop on ML Evaluation Standards (2022)

\bibitem{jecmen2022dataset}
Jecmen, S., Yoon, M., Conitzer, V., Shah, N.B., Fang, F.: A dataset on
  malicious paper bidding in peer review. In: Proc. of the 2023 WWW (2023)

\bibitem{jecmen2020random}
Jecmen, S., Zhang, H., Liu, R., Shah, N., Conitzer, V., Fang, F.: Mitigating
  manipulation in peer review via randomized reviewer assignments. In: Proc. of
  the 34th NeurIPS (2020)

\bibitem{kobren2019paper}
Kobren, A., Saha, B., McCallum, A.: Paper matching with local fairness
  constraints. In: Proc. of the 25th KDD (2019)

\bibitem{lee2010practical}
Lee, J.D., Recht, B., Srebro, N., Tropp, J., Salakhutdinov, R.R.: Practical
  large-scale optimization for max-norm regularization. Proc. of the 23rd
  NeurIPS  (2010)

\bibitem{leyton2022matching}
Leyton-Brown, K., Nandwani, Y., Zarkoob, H., Cameron, C., Newman, N., Raghu,
  D., et~al.: Matching papers and reviewers at large conferences. arXiv
  preprint arXiv:2202.12273  (2022)

\bibitem{mazzetto2021adversarial}
Mazzetto, A., Cousins, C., Sam, D., Bach, S.H., Upfal, E.: Adversarial multi
  class learning under weak supervision with performance guarantees. In:
  International Conference on Machine Learning. PMLR (2021)

\bibitem{mcdiarmid1989method}
McDiarmid, C.: On the method of bounded differences. Surveys in combinatorics
  \textbf{141}(1),  148--188 (1989)

\bibitem{meir2021market}
Meir, R., Lang, J., Lesca, J., Mattei, N., Kaminsky, N.: A market-inspired
  bidding scheme for peer review paper assignment. In: Proc. of the 35th AAAI
  (2021)

\bibitem{acl_rolling_review}
Neubig, G., the ACL Reviewing~Committee: {ACL} rolling review proposal (2020),
  \url{aclweb.org/adminwiki/index.php?title=ACL_Rolling_Review_Proposal}

\bibitem{acl_score}
Neubig, G., Wieting, J., McCarthy, A., Stent, A., Schluter, N., Cohn, T.: {ACL}
  reviewer matching code. \url{github.com/acl-org/reviewer-paper-matching}
  (2020)

\bibitem{openreview_workflow}
OpenReview: How to do automatic assignments (2023),
  \url{docs.openreview.net/how-to-guides/paper-matching-and-assignment/how-to-do-automatic-assignments}

\bibitem{payan2022order}
Payan, J., Zick, Y.: I will have order! {O}ptimizing orders for fair reviewer
  assignment. In: Proc. of the 31st IJCAI (2022)

\bibitem{rogers2020can}
Rogers, A., Augenstein, I.: What can we do to improve peer review in {NLP}? In:
  Proc. of the 2020 EMNLP (2020)

\bibitem{rozencweig2023mitigating}
Rozenzweig, I., Meir, R., Mattei, N., Amir, O.: Mitigating skewed bidding for
  conference paper assignment. In: Proceedings of the 2023 International
  Conference on Autonomous Agents and Multiagent Systems (2023)

\bibitem{saveski2023counterfactual}
Saveski, M., Jecmen, S., Shah, N.B., Ugander, J.: Counterfactual evaluation of
  peer-review assignment policies. arXiv preprint arXiv:2305.17339  (2023)

\bibitem{schroter2006differences}
Schroter, S., Tite, L., Hutchings, A., Black, N.: Differences in review quality
  and recommendations for publication between peer reviewers suggested by
  authors or by editors. Jama  \textbf{295}(3),  314--317 (2006)

\bibitem{shah2019principled}
Shah, N.B.: Principled methods to improve peer review. Tech. rep., Carnegie
  Mellon University (2019)

\bibitem{shah2022challenges}
Shah, N.B.: Challenges, experiments, and computational solutions in peer
  review. Communications of the {ACM}  \textbf{65}(6),  76--87 (2022)

\bibitem{shor1985minimization}
Shor, N., Kiwiel, K.C., Ruszcay{\`n}ski, A.: Minimization methods for
  non-differentiable functions. Springer-Verlag (1985)

\bibitem{stelmakh2019peerreview4all}
Stelmakh, I., Shah, N.B., Singh, A.: {PeerReview4All}: Fair and accurate
  reviewer assignment in peer review. In: Proc. of the 30th ALT (2019)

\bibitem{stelmakh2023gold}
Stelmakh, I., Wieting, J., Neubig, G., Shah, N.B.: A gold standard dataset for
  the reviewer assignment problem. arXiv preprint arXiv:2303.16750  (2023)

\bibitem{taylor2008optimal}
Taylor, C.J.: On the optimal assignment of conference papers to reviewers.
  Tech. rep., University of Pennsylvania (2008)

\bibitem{van1999development}
Van~Rooyen, S., Black, N., Godlee, F.: Development of the review quality
  instrument ({RQI}) for assessing peer reviews of manuscripts. Journal of
  clinical epidemiology  \textbf{52}(7),  625--629 (1999)

\end{thebibliography}
}

\clearpage

\appendix

\section{Missing Proofs}

We present proofs for all theoretical results stated in the main paper body without proof. \Cref{appx:sandwich} proves the relation between true welfare and maximin welfare. \Cref{appx:uncsetconstr} proves the results enabling construction of uncertainty sets from estimators as well as by composing or extending other uncertainty sets. Finally, \Cref{appx:solutions} proves NP-hardness of \problemname{} and contains constructive proofs for solving \problemname{} exactly in special cases (hyperrectangles and spheres) as well as approximately in the case of convex uncertainty sets with polynomial-time adversaries.

\subsection{Relation between True Welfare and Maximin Welfare}
\label{appx:sandwich}

We prove the proposition that states that the minimum welfare of an assignment $A$ over a $(\delta, \gamma)$ uncertainty set $\cal S$ relates additively to the true welfare of $A$. The tightness of the approximation depends on the $\cal L_1$ diameter of the set $\cal S$ and the additive $\cal L_1$ error $\gamma$ allowed for the uncertainty set.

\propsandwich*

\begin{proof}
For the right hand side, note that with probability at least $1-\delta$ there exists some $S^\ast_\gamma \in \cal S$ with $\norm{S^\ast_\gamma - S^\ast}_1 \leq \gamma$. Let $S' = \argmin_{S \in \cal S}\USW(A, S)$. By definition, $\USW(A, S') \leq \USW(A, S^\ast_\gamma)$. If $\USW(A, S^\ast_\gamma) \leq \USW(A, S^\ast)$, we have the desired inequality. Otherwise, we can apply the fact that $\USW(A, S^\ast_\gamma) - \USW(A, S^\ast) = \USW(A, S^\ast_\gamma - S^\ast) \leq \frac{1}{n}\norm{S^\ast_\gamma - S^\ast}_1 \leq \frac{1}{n}\gamma$, where the second-to-last inequality holds since every entry of $A$ is in $\{0, 1\}$.

To derive the left hand side, we will aim to bound $\USW(A, S^\ast) - \USW(A, S')$ where $S' = \argmin_{S \in \cal S}\USW(A, S)$. Again with probability at least $1-\delta$ there exists some $S^\ast_\gamma \in \cal S$ with $\norm{S^\ast_\gamma - S^\ast}_1 \leq \gamma$. So $\USW(A, S^\ast) - \USW(A, S^\ast_\gamma) = \USW(A, S^\ast - S^\ast_\gamma) \leq \frac{1}{n}\norm{S^\ast - S^\ast_\gamma}_1 \leq \frac{\gamma}{n}$. Applying similar logic, we can bound $\USW(A, S^\ast_\gamma) - \USW(A, S') = \USW(A, S^\ast_\gamma - S') \leq \frac{1}{n}\norm{S^\ast_\gamma-S'}_1 \leq \frac{L}{n}$.
\end{proof}

\subsection{Constructing Uncertainty Sets}
\label{appx:uncsetconstr}

We first prove our two results showing how ellipsoidal uncertainty sets can be constructed using inductive or transductive tail bounds for estimated affinity scores.

\thminductiveellipsoidaltailbounds*

\begin{proof}
% [TODO: PRELUDE; explain how bounding some RV implies the uncertainty set]

% [implicitly, square error $\leq 1$]

%We first show the case of independent data.
We aim to show the ellipsoid $\cal S$ is a $(\delta, 0)$ uncertainty set. In other words, it must hold that with probability $1-\delta$, the true affinity scores $S^\ast$ satisfy

\[
\frac{1}{nm}(S^\ast \! - \! \hat{S}) \Sigma^{-1} (S^\ast \! - \! \hat{S}) \! \leq \! \smash{\underbrace{\frac{1}{T} \! {\sum_{i=1}^{T}} (f^\ast(p_i, r_i) \! - \! \hat{f}(p_i, r_i))^{2}}_{= \smash{\hat{\xi}}}} \! + \! \smash{\underbrace{\sqrt{\left( \frac{1}{T} \! + \! \frac{n\!+\!m}{nm\alpha_{\min}^{2}} \!\right) \!\frac{\ln \! \smash{\frac{1}{\delta}}}{2}}}_{= \eta}} \vphantom{\sqrt{\frac{1}{T}}} \,  \vphantom{\underbrace{\frac{1}{T}}_{\xi}} \enspace.
\]
We will give a probabilistic bound on the difference
\[
Z \doteq \frac{1}{nm}(S^\ast \! - \! \hat{S})^\intercal \Sigma^{-1} (S^\ast \! - \! \hat{S}) \! -  \smash{\hat{\xi}} = \frac{1}{nm}  \sum_{p=1}^{n} \sum_{r=1}^{m}  \frac{(S_{p,r}^\ast - \hat{S}_{p,r})^{2}}{\alpha(p, r)} - \frac{1}{T} \sum_{i=1}^{T} (f^{\ast}(p_i, r_i) - \hat{f}(p_i, r_i))^{2} \enspace.
\]

We first note that $\Expect[Z] = 0$. We assume that our papers $p \sim \ProbDist^\PapUniv$, our reviewers $r \sim \ProbDist^\RevUniv$, and our samples (which are reviewer-paper pairs) are drawn as $(p_i, r_i) \sim \ProbDist'$. For simplicity of notation, we will write $\{(p_i, r_i)\}_{i=1}^T \distributed \ProbDist'$ to denote the set of all paper-reviewer pairs  $(p_i, r_i) \distributed \ProbDist'$. Also recall that $\alpha(p, r) = \frac{\Prob_{\Pap \distributed \ProbDist^{\PapUniv}} (\Pap = p) \Prob_{\Rev \distributed \ProbDist^{\RevUniv}}(\Rev = r)}{ \Prob_{(\Pap,\Rev) \distributed \ProbDist'\!}((\Pap,\Rev) = (p, r)) }$. We have that
\begin{align*}
\Expect_{\substack{P \distributed \ProbDist^\PapUniv, R \distributed \ProbDist^\RevUniv \\ \{(p_i, r_i)\}_{i=1}^T \distributed \ProbDist'}} [Z] &=
\Expect_{\substack{P \distributed \ProbDist^\PapUniv, R \distributed \ProbDist^\RevUniv \\ \{(p_i, r_i)\}_{i=1}^T \distributed \ProbDist'}} \left[\frac{1}{nm}  \sum_{p=1}^{n} \sum_{r=1}^{m}  \frac{(S_{p,r}^\ast -  \hat{S}_{p,r})^{2}}{\alpha(p, r)} - \frac{1}{T} \sum_{i=1}^{T} (f^{\ast}(p_i, r_i) - \hat{f}(p_i, r_i))^{2}\right] \\ &=
\Expect_{P \distributed \ProbDist^\PapUniv, R \distributed \ProbDist^\RevUniv} \left[\frac{1}{nm}  \sum_{p=1}^{n} \sum_{r=1}^{m}  \frac{(S_{p,r}^\ast - \hat{S}_{p,r})^{2}}{\alpha(p, r)}\right]  - \Expect_{\{(p_i, r_i)\}_{i=1}^T \distributed \ProbDist'}\left[\frac{1}{T} \sum_{i=1}^{T} (f^{\ast}(p_i, r_i) - \hat{f}(p_i, r_i))^{2}\right] \enspace.
\end{align*}
Following the standard argument for importance sampling, it is also clear that 
\begin{align*}
\Expect_{P \sim \ProbDist^\PapUniv, R \sim \ProbDist^\RevUniv} \left[\frac{1}{nm}\sum_{p=1}^{n} \sum_{r=1}^{m}  \frac{(S_{p,r}^\ast - \hat{S}_{p,r})^{2}}{\alpha(p, r)}\right]  &=
\frac{1}{nm} \sum_{p=1}^{n} \sum_{r=1}^{m}  \Expect_{P \sim \ProbDist^\PapUniv, R \sim \ProbDist^\RevUniv} \left[\frac{(S_{p,r}^\ast - \hat{S}_{p,r})^{2}}{\alpha(p, r)}\right] \\ &= \frac{1}{nm}\sum_{i=1}^{nm} \Expect_{(p_i, r_i) \sim \ProbDist'} \left[(f^\ast(p_i,r_i) - \hat{f}(p_i,r_i))^{2}\right] \\
&= \frac{1}{T} \sum_{i=1}^{T} \Expect_{(p_i, r_i) \sim \ProbDist'} \left[(f^\ast(p_i, r_i) - \hat{f}(p_i, r_i))^{2}\right] \\
&= \Expect_{\{(p_i, r_i)\}_{i=1}^T \distributed \ProbDist'}\left[\frac{1}{T} \sum_{i=1}^{T} (f^{\ast}(p_i, r_i) - \hat{f}(p_i, r_i))^{2}\right]
\enspace.
\end{align*}

In addition, $Z$ is a function of a system of negatively-dependent (due to sampling without replacement) random variables comprising $T$ auxiliary terms, $m$ reviewers, and $n$ papers. Modifying any of the $T$ auxiliary terms can result in at most $\frac{1}{T}$ change in $Z$, each of the $n$ papers has impact at most $\frac{1}{n\alpha_{\min}}$, and each of $m$ reviewers has impact at most $\frac{1}{m\alpha_{\min}}$. The sum of square bounded differences is thus $\frac{1}{T} + \frac{1}{n\alpha_{\min}^{2}} + \frac{1}{m\alpha_{\min}^{2}}$.
Consequently, by McDiarmid's inequality
\cite{mcdiarmid1989method}, it holds
\[
\Prob\left( Z \geq \sqrt{\left( \frac{1}{T} + \frac{1}{n\alpha_{\min}^{2}} + \frac{1}{m\alpha_{\min}^{2}} \right) \frac{\ln \frac{1}{\delta}}{2}} \right) \leq \delta \enspace.
\]
%
% \medskip
% [todo: explain how Z relates to ellipsoid]
\end{proof}

\thmtransductiveellipsoidaltailbounds*

\begin{proof}
Proof of result follows similarly to that of \cref{thm:inductiveellipsoidal-tail-bounds}.
%We now consider the transductive case.
Here, reviewers and papers are fixed. We will bound the deviation from the mean for the random variable
\[
Z \doteq \frac{1}{nm} \sum_{p=1}^{n} \sum_{r=1}^{m}  \frac{(S_{p,r} - \hat{S}_{p,r})^{2}}{\alpha(p, r)} - \frac{1}{T} \sum_{i=1}^{T} (f^\ast(p_i, r_i) - \hat{f}(p_i, r_i))^{2} \enspace.
\]

We show that $\Expect[Z] = 0$, as 
\begin{align*}
    \Expect\limits_{\{(p_i, r_i)\}_{i=1}^T \distributed \ProbDist'}[Z] &= \Expect\limits_{\{(p_i, r_i)\}_{i=1}^T \distributed \ProbDist'} \left[\frac{1}{nm} \sum_{p=1}^{n} \sum_{r=1}^{m}  \frac{(S_{p,r} - \hat{S}_{p,r})^{2}}{\alpha(p, r)} - \frac{1}{T} \sum_{i=1}^{T} (f^\ast(p_i, r_i) - \hat{f}(p_i, r_i))^{2} \right] \\
    &= \frac{1}{nm} \sum_{p=1}^{n} \sum_{r=1}^{m}  \frac{(S_{p,r} - \hat{S}_{p,r})^{2}}{\alpha(p, r)} - \frac{1}{T} \sum_{i=1}^{T} \Expect\limits_{\{(p_i, r_i)\}_{i=1}^T \distributed \ProbDist'} \left[(f^\ast(p_i, r_i) - \hat{f}(p_i, r_i))^{2} \right] \\
    &= \frac{1}{nm} \sum_{p=1}^{n} \sum_{r=1}^{m}  \frac{(S_{p,r} - \hat{S}_{p,r})^{2}}{\alpha(p, r)} - \frac{1}{nm} \sum_{i=1}^{nm} \Expect\limits_{\{(p_i, r_i)\}_{i=1}^{nm} \distributed \text{Unif}(P \times R)} \left[\frac{(f^\ast(p_i, r_i) - \hat{f}(p_i, r_i))^{2}}{\alpha(p_i, r_i)}\right] \\ 
    &= \frac{1}{nm} \sum_{p=1}^{n} \sum_{r=1}^{m}  \frac{(S_{p,r} - \hat{S}_{p,r})^{2}}{\alpha(p, r)} - \frac{1}{nm} \sum_{p=1}^{n} \sum_{r=1}^{m}  \frac{(S_{p,r} - \hat{S}_{p,r})^{2}}{\alpha(p, r)}
    \enspace.
\end{align*}

To apply the final bound, we need only consider the $T$ independent auxiliary terms, each of bounded difference $\frac{1}{T}$.
In this case, we have the bounded difference term $\frac{1}{T}$ for each of $T$ auxiliary terms.  %The reviewers/papers are constant.
This yields, via the Hoeffding
\cite{hoeffding1963probability} or McDiarmid \cite{mcdiarmid1989method} inequalities, the final result
\[
\Prob\left( Z \geq \sqrt{\frac{\ln \frac{1}{\delta}}{2T}} \right) \leq \delta \enspace.
\]
\cyrus{gamma term for distribution shift? Basically, let's prove the comment at the end of the theorem statement.}
\end{proof}

Now that we have shown how to construct uncertainty sets from affinity score estimators, we finish by showing how to compose uncertainty sets. We proved \Cref{lemma:uncsetint} in \Cref{subsec:compositional}, showing that uncertainty sets can be intersected to form new uncertainty sets. \Cref{lemma:uncsetint} can be combined with \Cref{lemma:l1-error-contract} to create many types of uncertainty sets from simpler uncertainty sets.

\lemerrorcontract*

\begin{proof}

% TODO: [Just take $S \in \cal S$ that realizes the first thing, and then walk $\eta$ towards $S^\ast$ and then by triangle inequality we should be done?]

Denote the true affinity scores as $S^\ast$. If $S^\ast \in \cal S'$, then we have $\norm{S - S^\ast}_1 = 0 \leq \gamma - \eta$.
We argue the case when $S^\ast \notin \cal S'$. 
By \Cref{defn:dg_uncert}, with probability at least $1- \delta$, there is some $S \in \cal S$ such that $\norm{S - S^\ast}_1 \leq \gamma$. We can assume that $\eta \leq \norm{S^\ast - S}_1$, since if not, then $S^\ast \in \cal S'$. Consider $S' = S + \eta \frac{S^\ast - S}{\norm{S^\ast - S}_1}$. 
% Because $\left\lVert \eta \frac{S^\ast - S}{\norm{S^\ast - S}_1} \right \rVert_1 = \frac{\eta}{\norm{S^\ast - S}_1}\norm{S^\ast - S}_1 = \eta$, $S' \in \cal S'$. 
To show $S' \in \cal S'$, we must show that $\left\lVert \eta \frac{S^\ast - S}{\norm{S^\ast - S}_1} \right \rVert_1 \leq \eta$. This can be shown easily, as $\left\lVert \eta \frac{S^\ast - S}{\norm{S^\ast - S}_1} \right \rVert_1 = \frac{\eta}{\norm{S^\ast - S}_1}\norm{S^\ast - S}_1 = \eta$. 
In addition, 
\begin{align*}
    \norm{S^\ast - S'}_1 &= \left\lVert  S^\ast - \left( S + \eta \frac{S^\ast - S}{\norm{S^\ast - S}_1} \right) \right\rVert_1 \\
    &= \left(1-\frac{\eta}{\norm{S^\ast - S}_1}\right) \norm{S^\ast - S}_1 \\
    &= \norm{S^\ast - S}_1 - \eta \\
    &\leq \gamma - \eta \enspace,
\end{align*}
where the final inequality holds with probability at least $1-\delta$. Therefore, with probability at least $1-\delta$, $\cal S'$ contains a point $S'$ within $\gamma - \eta$ $\cal L_1$ distance from $S^\ast$, and $\cal S'$ is a $(\delta, \gamma - \eta)$ uncertainty set.
%
% Next, we must consider the case where there exists some $\alpha > 0$ such that $\norm{\alpha(S^\ast-S)}_1 = \eta$.
%
% Suppose $\alpha \leq 1$. Clearly $S + \alpha (S^\ast - S) \in \cal S'$. Then we have
% \begin{align*}
% \norm{S^\ast - [S + \alpha (S^\ast - S)]}_1 &= \norm{(S^\ast - S) - \alpha (S^\ast - S)]}_1 \\
% &= \norm{(1-\alpha) (S^\ast - S)]}_1 \\
% &= (1-\alpha) \norm{S^\ast - S}_1 \\
% &= \norm{S^\ast - S}_1 - \alpha \norm{S^\ast - S}_1 \\
% &\leq \gamma - \eta \enspace.
% \end{align*}
%
% On the other hand, if $\alpha > 1$, then $\alpha \norm{S^\ast - S}_1 = \eta$ implies that $\norm{S^\ast - S}_1 < \eta$. Thus,  $S^\ast \in \cal S'$, since 
% $S^\ast = S + (S^\ast - S)$. $\norm{S^\ast - S^\ast}_1 = 0 \leq \gamma - \eta$.
%
\end{proof}

\subsection{NP-hardness and Algorithms for Solving \problemname{}}
\label{appx:solutions}

We first show that \problemname{} is NP-hard, even for the case of bounded polytopes formed by intersections of polynomially many halfspaces. 

\thmhardness*
\begin{proof}
We reduce from the problem of maximizing egalitarian welfare under the reviewer assignment constraints, which is known to be NP-hard \cite{garg2010assigning,stelmakh2019peerreview4all}. 
In the maximal egalitarian welfare problem for reviewer assignment, we have the same feasible set of assignments $\mathcal{A}$, as well as a fixed score matrix $S$. The goal is to find an assignment $A$ maximizing  the minimum total score of any paper, or equivalently $\min_{p \in P}\sum_{r \in R} A_{p,r}S_{p,r}$.

Given an instance of max egalitarian welfare $(P, R, \mathcal{A}, S)$, we now construct an instance of  \problemname{} over a convex uncertainty set $\mathcal{S}$.
The set of $n$ papers $P$, $m$ reviewers $R$, and valid assignments $\mathcal{A}$ remain the same. 
To construct an uncertainty set $\mathcal{S}$, 
first consider the set $\mathcal{S}' = \{S^{(1)}, \dots, S^{(n)}\}$, where $S^{(p)}$ is defined so that $S^{(p)}_{i,j} = S_{i,j}$ when $i = p$ and $S^{(p)}_{i,j} = 0$ when $i \neq p$. 
Let $\mathcal{S}$ be the convex hull of $\mathcal{S}'$. In other words, $\cal S$ is the set 
\begin{align*}
\mathcal{S} \doteq \left\{ X \in [0, 1]^{n \times m} \,\middle|\, X = \sum_{i =1}^n \alpha_i S^{(i)}, \sum_{i=1}^n \alpha_i = 1 \right\} \enspace.
\end{align*}
Since $\cal S$ is a convex set, $(P, R, \mathcal{A}, \mathcal{S})$ is a \problemname{} instance over a convex uncertainty set. Furthermore, $\cal S$ is a bounded polytope formed by the intersection of polynomially many halfspaces.

For any $A \in \mathcal{A}$, we have that
\begin{align*}
    S_\text{min} \doteq \argmin_{S \in \cal S} n \USW(A, S) = \argmin_{S \in \cal S} \USW(A, S) \enspace. 
\end{align*}
We will show that $S_{\text{min}} \in \cal S'$, completing the proof.
Fix $A$ and consider some element $X \in \mathcal{S}$, where $X = \sum_{i=1}^n \alpha_i S^{(i)}$ and (w.l.o.g.) $\alpha_1, \alpha_2 > 0$. 
Assume also w.l.o.g. that $\sum_{r \in R} A_{1,r}S_{1,r} \leq \sum_{r \in R} A_{2,r}S_{2,r}$. 
Consider the alternative $X' \in \mathcal{S}$ with $X' = (\alpha_1 + \alpha_2) S^{(1)} + \sum_{i=3}^n \alpha_i S^{(i)}$. 
Then
\begin{align*}
    n\USW(A, X) &= \alpha_1 \sum_{r \in R} A_{1,r}S_{1,r} + \alpha_2 \sum_{r \in R} A_{2,r}S_{2,r} + \sum_{i=3}^n \alpha_i \sum_{r \in R} A_{i,r}S_{i,r} \\
    &\geq (\alpha_1 + \alpha_2) \sum_{r \in R} A_{1,r}S_{1,r} + \sum_{i=3}^n \alpha_i \sum_{r \in R} A_{i,r}S_{i,r} \\
    &= n\USW(A, X') \enspace.
\end{align*}

Since we can always decrease $\USW(A, S)$ by restricting the support of the score matrices (i.e., setting some $\alpha_i \neq 0$ to be $0$), the minimal $\USW$ is reached for some $S\in \mathcal{S}'$.
% Show the minimizer of $\USW(A, S)$ corresponds to the minimizer for the egalitarian welfare.
Therefore, since for any given $A$, we have that
\begin{align*}
    \min_{S \in \mathcal{S}'} n\USW(A, S) = \min_{p \in P} \sum_{r \in R} A_{p,r}S_{p,r} \enspace,
\end{align*}
the maximizer for the \problemname{} problem and the maximizer for the egalitarian welfare problem are equivalent. In other words, if we can efficiently compute a robust solution to \problemname{}, we can solve the egalitarian welfare problem. Therefore \problemname{} is NP-hard.
\end{proof}
Although \problemname{} is NP-hard for even fairly simple cases, there are certain special cases in which \problemname{} reduces to a problem solvable in polynomial time. The first case is that of hyperrectangular uncertainty sets.

\thmbox*
\begin{proof}
For any assignment $A \in \mathcal{A}$, $\inf_{S \in \mathcal{S}_{\Box}} \USW(A, S) = \USW(A, \underline{S})$, i.e., the lowest welfare is achieved (possibly non-uniquely) if we assume that all scores are the worst they can be under the rectangular constraints.  
To see why, consider any matrix $X \in \mathcal{S}_{\Box}$ with $X_{p,r} > \underline{S}_{p,r}$ for some $p$ and $r$. 
Consider $X'$ where $X'_{p,r} = \underline{S}_{p,r}$ and $X'_{i,j} = X_{i,j}$ for $i \neq p$ or $j \neq r$. Either $A_{p,r} = 0$ and hence $\USW(A, X) = \USW(A, X')$ or $A_{p,r} = 1$ and thus $\USW(A, X) > \USW(A, X')$. 

We also observe that the problem $\argmax_{A \in \mathcal{A}} \USW(A, \underbar{S})$ is an instance of \problemname{} with known $S$. This problem is solvable in polynomial time via linear programming, as discussed at the beginning of \Cref{sec:uncertainty_set_types}.
\end{proof}

\problemname{} is also solvable in polynomial time for spherical uncertainty sets, and the proof provides a bit more insight than the proof of \Cref{thm:box}.

\thmmaxuswrobust*
\begin{proof}
Let $\mathbf{0}$ denote the $n \times m$ matrix of all $0$ values. For any $A \in \cal A$, it holds that
\begin{align*}
     \inf_{S \in B_{\varepsilon}(S^0)} \USW(A, S) &= \min_{X \in B_{\varepsilon}(\mathbf{0})}  \USW(A, S^0 + X) \\
     &= \min_{X \in B_{\varepsilon}(\mathbf{0})} \USW(A, X) + \USW(A, S^0) & \textsc{Linearity of $\USW$} \\
    &= \USW \left( A, S^0 - A \frac{\varepsilon}{\norm{A}_F} \right) & \textsc{See Below} \\
    &= \USW(A, S^0) - \USW \left( A, A \frac{\varepsilon}{\norm{A}_F} \right) & \textsc{Linearity of $\USW$} \\
    &= \USW(A, S^0) - \mathsmaller{\frac{1}{n}}\norm{A}_F \varepsilon \\
    &= \USW(A, S^0) - \mathsmaller{\frac{1}{n}}\sqrt{K} \varepsilon
    \label{eqn:spherecase} \enspace.
\end{align*}
Because $\USW(A, S^0)$ does not depend on $X$, we just need to find $\argmin_{X \in B_{\varepsilon}(\mathbf{0})} \USW(A, X)$.
Because $\norm{A}_F$ is fixed (and the argmin occurs at $\norm{X}_F = \varepsilon$) the argmin must be 
$X = -A\frac{\varepsilon}{\norm{A}_F}$. 

Again, $\sqrt{K}$, $\varepsilon$, and $n$ are all fixed. So we see that our function has the same argmax as $\USW(A, S^0)$.
The problem $\argmax_{A \in \mathcal{A}} \USW(A, S^0)$ is an instance of \problemname{} with known $S$, which is solvable in polynomial time.

We can quantify the worst-case welfare loss due to spherical uncertainty with radius $\varepsilon$ as 
\begin{align*}
\USW(A, S^0) - \USW(A, S^\ast) \leq \mathsmaller{\frac{1}{n}}\sqrt{K} \varepsilon \enspace.
\end{align*}
A similar argument to the above shows that 
\begin{align*}
     \max_{S \in B_{\varepsilon}(S^0)} \USW(A, S) &= \USW(A, S^0) + \mathsmaller{\frac{1}{n}}\sqrt{K} \varepsilon \enspace,
\end{align*}
and thus we also have
\begin{align*}
\USW(A, S^\ast) - \USW(A, S^0)  \leq \mathsmaller{\frac{1}{n}}\sqrt{K} \varepsilon \enspace.
\end{align*}
\end{proof}

Finally, we will prove \Cref{prop:subgradconv} from \Cref{sec:approx_algo}, which enables approximate solutions to \problemname{} when the uncertainty set $\cal S$ admits a polynomial-time minimizing adversary. We use the following result from \cite{shor1985minimization}.

\begin{restatable}{theorem}{thmsubgradorig}\label{thm:subgradorig}
%\cite{shor1985minimization}. 
Suppose that $w: \R^n \to \R$ is a convex function with minimizer $w^\ast$ and optimal set $X^\ast = \{\vec x \in \R^n : w(\vec x) = w^\ast\}$. Let $\vec x_0$ denote the starting point for subgradient descent. Suppose we have an upper bound $\rho \geq \max_{\vec x \in X^\ast} \norm{\vec x_0 - \vec x}_2$. Suppose we also have an upper bound $\lambda$ for the Euclidean norm of the subgradient at each step of the subgradent descent algorithm. After $T$ steps of subgradient descent with constant step size $\alpha$, if $\hat{w}$ represents the best function value found after $T$ steps, then
\begin{align*}
\hat{w} - w^\ast \leq \frac{\rho^2 + \lambda^2 \alpha^2 T}{2 \alpha T} \enspace.
\end{align*}
\end{restatable}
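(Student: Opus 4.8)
The plan is to reproduce the classical potential-function analysis of the subgradient method (the argument underlying the bound attributed to \cite{shor1985minimization}), tracking the squared Euclidean distance from the iterates to a fixed optimum. Fix any $\vec x^\ast \in X^\ast$; by the hypothesis on $\rho$ we have $\norm{\vec x_0 - \vec x^\ast}_2 \leq \rho$. Write the update as $\vec x_{t+1} = \vec x_t - \alpha \vec g_t$, where $\vec g_t \in \partial w(\vec x_t)$ is the subgradient used at step $t$, with $\norm{\vec g_t}_2 \leq \lambda$.

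First I would establish the per-step recursion. Expanding the square,
\[
\norm{\vec x_{t+1} - \vec x^\ast}_2^2 = \norm{\vec x_t - \vec x^\ast}_2^2 - 2\alpha\, \vec g_t^\transpose (\vec x_t - \vec x^\ast) + \alpha^2 \norm{\vec g_t}_2^2 \enspace.
\]
The defining property of a subgradient of the convex function $w$ gives $w(\vec x^\ast) \geq w(\vec x_t) + \vec g_t^\transpose(\vec x^\ast - \vec x_t)$, equivalently $\vec g_t^\transpose(\vec x_t - \vec x^\ast) \geq w(\vec x_t) - w^\ast$. Substituting this together with $\norm{\vec g_t}_2^2 \leq \lambda^2$ yields the key one-step inequality $\norm{\vec x_{t+1} - \vec x^\ast}_2^2 \leq \norm{\vec x_t - \vec x^\ast}_2^2 - 2\alpha(w(\vec x_t) - w^\ast) + \alpha^2\lambda^2$.

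Next I would telescope this recursion over $t = 0, 1, \dots, T-1$. The distance terms cancel, and after dropping the nonnegative terminal term $\norm{\vec x_T - \vec x^\ast}_2^2 \geq 0$ and using $\norm{\vec x_0 - \vec x^\ast}_2^2 \leq \rho^2$, I obtain $2\alpha \sum_{t=0}^{T-1}(w(\vec x_t) - w^\ast) \leq \rho^2 + T\alpha^2\lambda^2$. Since $\hat w$ is the best (smallest) objective value among the iterates, $\hat w \leq w(\vec x_t)$ for every $t$, so $T(\hat w - w^\ast) \leq \sum_{t=0}^{T-1}(w(\vec x_t) - w^\ast)$. Combining the two inequalities and dividing by $2\alpha T$ gives $\hat w - w^\ast \leq \frac{\rho^2 + \lambda^2\alpha^2 T}{2\alpha T}$, which is exactly the claim.

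The argument is essentially routine, so rather than a genuine obstacle the points requiring care are bookkeeping ones: applying the subgradient inequality in the correct direction (this theorem is stated in the minimization/descent convention, whereas \algoname{} performs supergradient ascent on a concave objective, so the signs must be reconciled when the result is invoked), selecting a single fixed $\vec x^\ast \in X^\ast$ so that the telescoping is taken against a constant reference point, and justifying the replacement of the running average by the best iterate $\hat w$ via the elementary fact that the minimum is at most each individual term.
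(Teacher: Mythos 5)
Your proof is correct: the one-step recursion, the application of the subgradient inequality, the telescoping, and the passage from the sum to the best iterate $\hat{w}$ are all valid, and the bookkeeping caveats you flag (fixed reference point $\vec x^\ast \in X^\ast$, sign conventions for descent vs.\ the ascent use in \algoname{}) are exactly the right ones. Note, however, that the paper itself gives no proof of this statement --- it is imported verbatim as a known result from Shor's monograph \cite{shor1985minimization} and used as a black box to prove \Cref{prop:subgradconv} --- so what you have done is supply the standard potential-function argument that the citation stands in for, which is precisely the classical derivation of this bound.
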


\propsubgradconv*

\begin{proof}

First, we show that $\rho \leq \sqrt{2K}$ (recall $\rho$ is defined as the upper bound $\rho \geq \max_{\vec x \in X^\ast} \norm{\vec x_0 - \vec x}_2$). Because the initial assignment $A^{(0)}$ and the optimal assignment $A^\ast$ both lie in $\tilde{\mathcal{A}}$, they both must satisfy the constraint that for all papers $p$, $\sum_{r \in R} A_{p,r} = k_p$. This implies that even if there is no overlap in the two assignments, they will differ on at most $2K$ entries. Taking the Euclidean norm over the difference of the two assignments in the worst case gives the bound on $\rho$.

Our function is concave, and we are maximizing over the input space, so we can apply \Cref{thm:subgradorig} to minimize its negative. From \Cref{thm:subgradorig} we know that after $T$ iterations of subgradient ascent with step size $\alpha$, the error $\varepsilon = \inf_{S \in \mathcal{S}}\USW(A^\ast, S) - \inf_{S \in \mathcal{S}}\USW(\hat{A}, S)$ satisfies
\[
\varepsilon \leq \frac{\rho^{2} + \lambda^{2} \alpha^{2} T}{2\alpha T} \enspace.
\]

The right-hand side is minimized at
$ \begin{aligned} \alpha = \frac{\rho}{\lambda\sqrt{T}} \end{aligned}$.
%
%https://www.wolframalpha.com/input?i=Minimize+%5B+%7B+%28R%5E2+%2B+G%5E2+*+a%5E2+*+k%29+%2F+%282+*+a+*+k%29%2C+a+%3E+0%7D+%2C+a%5D
%
If we substitute this value of $\alpha$, we get
\[
\varepsilon \leq \frac{\rho \lambda}{\sqrt{T}} \enspace.
\]
Solving for $T$ then yields
\[
T \geq \Big(\frac{\rho \lambda}{\varepsilon}\Big)^{2} \enspace.
\]
Since $\rho \geq \sqrt{2K}$, the larger number of iterations $T \geq 2K( \frac{\lambda}{\varepsilon})^2$ will suffice. Substituting this value of $T$ into the equation given for $\alpha$ above yields $\alpha = \frac{\varepsilon}{\lambda^2}$.
\end{proof}

We also prove the following corollary, which gives a more precise bound on the number of required iterations when $\cal S$ is a (truncated) ellipsoid.

\corellipsoidalconvergence*

\begin{proof}

Because the subgradient is always contained in $\mathcal{S}$, we can easily upper-bound the norm of it. We have for all $A \in \tilde{\cal A}$ and $S \in \cal S$ that
\begin{align*}
    \norm{\nabla_A \USW(A, S)}_2 \leq \max_{S \in \mathcal{S}} \mathsmaller{\frac{1}{n}} \norm{S}_2 \enspace.
\end{align*}

In turn, if the ellipsoid has center $\vec \mu_\cal S$ and radius $q$, then we can use the triangle inequality to show
$$\max_{S \in \mathcal{S}} \norm{S}_2 \leq \norm{\vec \mu_{\mathcal{S}}}_2 + q \enspace,$$
where $q$ can
%easily
be computed as the maximum eigenvalue of $\Sigma_\mathcal{S}$. Both $\norm{\vec \mu_{\mathcal{S}}}_2$ and $q$ are $O(\sqrt{nm})$, so we see that $\lambda$ is $O(\sqrt{\frac{m}{n}})$. Applying \Cref{prop:subgradconv} completes the proof.
\end{proof}

\thmgaps*
\begin{proof}
We begin with the maximin gap bound.
% \cyrus{Reference earlier proposition?} With probability at least $1-\delta$, there is some $S^\ast_\gamma \in \cal S$ such that $\norm{S^\ast_\gamma - S^\ast}_1 \leq \gamma$. $\USW(A^\ast, S^\ast_\gamma) \geq \USW(A^\ast, S^\ast) - \frac{\gamma}{n}$, so 
% $\USW(A^\ast, S^\ast) - \USW(A^\varepsilon, S^\ast) \leq \USW(A^\ast, S^\ast_\gamma) - \USW(A^\varepsilon, S^\ast) + \frac{\gamma}{n}$. Likewise, $\USW(A^\varepsilon, S^\ast) \geq \USW(A^\varepsilon, S^\ast_\gamma) - \frac{\gamma}{n}$, and hence $\USW(A^\ast, S^\ast_\gamma) - \USW(A^\varepsilon, S^\ast) + \frac{\gamma}{n} \leq \USW(A^\ast, S^\ast_\gamma) - \USW(A^\varepsilon, S^\ast_\gamma) + 2\frac{\gamma}{n}$. By definition, $\inf_{S \in \cal S}\USW(A^\varepsilon, S) \leq \USW(A^\varepsilon, S^\ast_\gamma)$. In addition, $\USW(A^\ast, S^\ast_\gamma) \leq \max_{A \in \cal A}\inf_{S \in \cal S} \USW(A, S) + \frac{L}{n}$. Thus, we get that $\USW(A^\ast, S^\ast_\gamma) - \USW(A^\varepsilon, S^\ast_\gamma) + 2\frac{\gamma}{n} \leq \max_{A \in \cal A}\inf_{S \in \cal S} \USW(A, S) - \inf_{S \in \cal S}\USW(A^\varepsilon, S) + \frac{2\gamma + L}{n}$. Since we assumed in the definition of $A^\varepsilon$ that $\max_{A \in \cal A}\inf_{S \in \cal S} \USW(A, S) - \inf_{S \in \cal S}\USW(A^\varepsilon, S) \leq \varepsilon$, we obtain the desired result.
%
With probability at least $1-\delta$, there is some $S^\ast_\gamma \in \cal S$ such that $\norm{S^\ast_\gamma - S^\ast}_1 \leq \gamma$, and thus $\USW(A^\ast, S^\ast) \leq \USW(A^\ast, S^\ast_\gamma) + \frac{\gamma}{n}$ (by \Cref{prop:sandwich}). Likewise, $\USW(A^\varepsilon, S^\ast_\gamma) \leq \USW(A^\varepsilon, S^\ast) + \frac{\gamma}{n}$. %These two facts together imply that 
Together, these imply
\begin{align*}
\USW(A^\ast, S^\ast) - \USW(A^\varepsilon, S^\ast) &\leq \USW(A^\ast, S^\ast_\gamma) - \USW(A^\varepsilon, S^\ast) + \mathsmaller{\frac{\gamma}{n}} \\
&\leq \USW(A^\ast, S^\ast_\gamma) - \USW(A^\varepsilon, S^\ast_\gamma) + 2\mathsmaller{\frac{\gamma}{n}} \enspace.
\end{align*}
Now, $\begin{aligned}\USW(A^\ast, S^\ast_\gamma) \leq \max_{A \in \cal A}\inf_{S \in \cal S} \USW(A, S) + \mathsmaller{\frac{L}{n}}\end{aligned}$, and by definition $\begin{aligned} \USW(A^\varepsilon, S^\ast_\gamma) \geq \inf_{S \in \cal S}\USW(A^\varepsilon, S) \end{aligned}$, thus %. Thus, we get that 
\begin{align*}
\USW(A^\ast, S^\ast_\gamma) - \USW(A^\varepsilon, S^\ast_\gamma) + 2\frac{\gamma}{n} \leq \max_{A \in \cal A}\inf_{S \in \cal S} \USW(A, S) - \inf_{S \in \cal S}\USW(A^\varepsilon, S) + \mathsmaller{\frac{2\gamma + L}{n}} \enspace.
\end{align*}
The definition of $A^\varepsilon$ implies $\begin{aligned}\max_{A \in \cal A}\inf_{S \in \cal S} \USW(A, S) - \inf_{S \in \cal S}\USW(A^\varepsilon, S) \leq \varepsilon\end{aligned}$, yielding the desired result.

To obtain the expected regret of \algoname{}, we first apply the two facts that $\Expect_{A'}[A'] = \tilde{A}^\varepsilon$ and $\USW$ is a linear objective function. We now must bound $\USW(A^\ast, S^\ast) - \USW(\tilde{A}^\varepsilon,S^\ast)$, which we can do using the same proof we used to obtain the maximin gap above (replacing $A^\varepsilon$ and $\cal A$ with $\tilde{A}^\varepsilon$ and $\tilde{\cal A}$).

Finally, the probabilistic worst-case regret bound follows from Markov's inequality. 
\end{proof}

\end{document}